\documentclass[11pt,a4paper]{article}
\usepackage{authblk,url}
\usepackage{xcolor}
\usepackage{algorithm}
\usepackage{algpseudocode}

\usepackage{silence}
\usepackage[T1]{fontenc}
\usepackage{lmodern}
\usepackage[utf8]{inputenc}

\usepackage[margin=2.4cm]{geometry}

\usepackage{amsmath,amssymb,amsthm,amsfonts,mathtools}
\usepackage{bm,braket,xspace}
\usepackage{booktabs,array,tabularx}
\usepackage{graphicx,xcolor,xurl,float}

\usepackage[sort&compress]{natbib}
\usepackage[colorlinks=true,linkcolor=blue!35!black,citecolor=blue!35!black,urlcolor=blue!35!black]{hyperref}
\usepackage[nopatch=footnote]{microtype}
\usepackage{setspace}
\allowdisplaybreaks[2]

\usepackage{quantikz}

\newtheorem{theorem}{Theorem}[section]
\newtheorem{proposition}[theorem]{Proposition}
\newtheorem{lemma}[theorem]{Lemma}
\newtheorem{corollary}[theorem]{Corollary}
\theoremstyle{definition}
\newtheorem{definition}[theorem]{Definition}
\newtheorem{example}[theorem]{Example}
\theoremstyle{remark}
\newtheorem{remark}[theorem]{Remark}

\newcommand{\C}{\mathbb C}

\newcommand{\R}{\mathbb R}
\newcommand{\F}{\mathbb F}
\newcommand{\Z}{\mathbb Z}
\newcommand{\cA}{\mathcal A}
\newcommand{\cB}{\mathcal B}

\newcommand{\cF}{\mathcal F}
\newcommand{\cH}{\mathcal H}

\newcommand{\cP}{\mathcal P}
\newcommand{\cQ}{\mathcal Q}
\newcommand{\cR}{\mathcal R}

\newcommand{\ba}{\mathbf a}
\newcommand{\bb}{\mathbf b}

\newcommand{\bv}{\mathbf v}
\newcommand{\bu}{\mathbf u}
\newcommand{\bx}{\mathbf x}
\newcommand{\by}{\mathbf y}
\newcommand{\bz}{\mathbf z}

\newcommand{\supp}{\operatorname{supp}}

\newcommand{\rw}{\operatorname{rw}}
\newcommand{\lrw}{\operatorname{lrw}}
\newcommand{\tw}{\operatorname{tw}}

\newcommand{\fhtw}{\operatorname{fhtw}}

\newcommand{\lline}{L}
\newcommand{\poly}{\operatorname{poly}}
\newcommand{\SOP}{\textnormal{\textsc{SOP}}\xspace}
\newcommand{\Leaves}{\operatorname{Leaves}}
\newcommand{\CZ}{\mathrm{CZ}}
\newcommand{\gX}{X}
\newcommand{\gY}{Y}
\newcommand{\gT}{T}
\newcommand{\gZ}{Z}
\newcommand{\gH}{H}
\newcommand{\gS}{S}
\newcommand{\Toffoli}{\mathrm{Toffoli}}

\newcommand{\CNOT}{\mathrm{CNOT}}
\newcommand{\InsideOut}{\textnormal{\textsc{InsideOut}}\xspace}

\newcommand{\BucketEval}{\textnormal{\textsc{BagEval}}\xspace}

\newcommand{\phases}[1]{[#1]}
\newcommand{\ostar}{\mathbin{\circledast}}
\makeatletter
\newcommand{\bigostar}{\mathop{\mathpalette\bigostar@\relax}\displaylimits}
\newcommand{\bigostar@}[2]{\vcenter{\hbox{%
  \sbox\z@{$\m@th#1\bigotimes$}%
  \resizebox*{!}{\dimexpr\ht\z@+\dp\z@\relax}{$\m@th#1\circledast$}%
}}}
\makeatother

\newcommand{\iv}[1]{\mathbf1\!\left[#1\right]}
\newcommand{\me}[3]{\langle #1\vert #2\vert #3\rangle}

\newcommand{\Hblock}[1]{\mathsf H_{#1}}
\newcommand{\Tblock}[1]{\mathsf T_{#1}}
\newcommand{\CZblock}[1]{\mathsf Z_{#1}}
\newcommand{\tlsw}{\mathsf{tlsw}}
\newcommand{\blockgroup}[1]{
  \gategroup[
    wires=4,
    steps=8,
    style={
      dashed,
      rounded corners,
      fill=blue!4,
      inner xsep=2pt
    },
    background,
    label style={
      label position=below,
      anchor=north,
      yshift=-0.15cm
    }
  ]{#1}
}

\title{Unifying and Extending Strong Simulation of Quantum Circuits}

\author[1]{Floris Geerts}

\author[2]{Rihan Hai}

\author[3]{Matthias Lanzinger}

\author[3]{Reinhard Pichler}

\author[3]{Emanuel Sallinger}

\author[3]{Daniel Unterberger}

\affil[1]{University of Antwerp, Belgium}
\affil[2]{Technical University Delft, The Netherlands}
\affil[3]{TU Wien, Austria}

\date{}

\begin{document}
\maketitle
\begingroup
\renewcommand{\thefootnote}{}
\footnotetext{\noindent Author contact addresses ---
Floris Geerts: \href{mailto:floris.geerts@uantwerp.be}{floris.geerts@uantwerp.be},
Rihan Hai: \href{mailto:r.hai@tudelft.nl}{R.Hai@tudelft.nl},
Matthias Lanzinger: \href{mailto:matthias.lanzinger@tuwien.ac.at}{matthias.lanzinger@tuwien.ac.at},
Reinhard Pichler: \href{mailto:reinhard.pichler@tuwien.ac.at}{reinhard.pichler@tuwien.ac.at},
Emanuel Sallinger: \href{mailto:emanuel.sallinger@tuwien.ac.at}{emanuel.sallinger@tuwien.ac.at},
Daniel Unterberger: \href{mailto:daniel.unterberger@tuwien.ac.at}{daniel.unterberger@tuwien.ac.at}.}
\endgroup

\begin{abstract}
We establish functional aggregate queries (FAQs) as a unifying language
for exact classical simulation of quantum circuits. A circuit defines
a sum--product query: factors encode gates, internal wire variables are
aggregated, and free boundary variables index transition amplitudes.
The central insight is that seemingly different sources of simulation
tractability can be exploited by the same \InsideOut{} evaluation
scheme. The query specifies what is computed; the evaluation plan,
semiring, and representation of intermediate factors determine how
efficiently it can be computed.

This perspective unifies structural and algebraic simulation guarantees.
With explicit factor representations, FAQ evaluation recovers the
treewidth bound for tensor-network contraction and yields finer,
sparsity-sensitive bounds based on fractional covers. Over a formal
phase semiring, compressed intermediate factors recover rank-width-based
simulation for circuits with compatible quadratic phase representations.
For Clifford circuits, affine--quadratic factors are closed under
multiplication and marginalization, and their representations remain
polynomial in size throughout evaluation. Consequently, the same
evaluation scheme computes exact computational-basis amplitudes in
polynomial time, without any bounded-width assumption.

Beyond recovering these guarantees, the FAQ framework yields a new
tractability criterion: \emph{tensor layout symmetry width}.
This parameter combines local cut-rank with separator symmetry through
exact tree-tensor representations. We give a constructive evaluation
bound and exhibit a circuit family with bounded tensor layout symmetry
width but unbounded phase-graph rank-width and circuit line-graph
treewidth. Together, these results establish representation-aware FAQ
evaluation as a common algorithmic foundation for classical simulation
and a systematic route to new tractable regimes.
\end{abstract}

\begingroup
\begin{spacing}{0.7}
\tableofcontents
\end{spacing}
\endgroup
\clearpage

\section{Introduction}
\label{sec:intro}

Classical quantum circuit simulation, that is, simulating quantum circuits through classical computation, remains an important problem in quantum computing. 
They are also used for calibrating, benchmarking, and validating quantum hardware \citep{villalonga2019flexible,%
ZulehnerWille2018,%
DBLP:conf/ijcar/MeiCBL24,%
DBLP:journals/pacmpl/AbdullaCHHLLT26,%
DBLP:journals/pacmpl/AbdullaCCHLLLT25,%
DBLP:journals/cacm/ChenCLLTY25}. 
Second, in the quest for showing ``quantum supremacy'' (i.e., identifying computational tasks
that a quantum computer can perform beyond the practical capabilities of classical computers), 
the efficient simulation of quantum circuits highlights the limits of quantum supremacy~\citep{Boixto2018,Arute2019,%
Singh2025,%
Shchesnovic2021,%
HarrowMontalee2017}. 
Improvements in classical simulation therefore continually sharpen, and sometimes substantially shift, the boundary at which such a quantum advantage can be claimed~\citep{PanChenZhang22}.

In this paper, we show that several major sources of tractability in classical quantum circuit simulation admit a common algorithmic explanation through functional aggregate queries (FAQs) \citep{AboKhamisNgoRudra2016}. Our central observation is that a quantum circuit can be represented as a sum--product query whose factors encode gates and whose aggregated variables encode intermediate wire states. Several seemingly different simulation techniques can then be understood as instances of the same evaluation paradigm, differing in the semiring and representation used for intermediate factors. Before developing this framework in detail, we first review several prominent sources of classical tractability, including stabilizer structure, bounded treewidth, rank-width, and compact tensor or decision-diagram representations. We then show how these viewpoints fit into the FAQ framework, recover and sharpen existing simulation bounds, and use the same perspective to identify new tractable regimes.

\smallskip
\noindent
\textbf{Overview of tractable regimes.}
Several important classes of quantum circuits admit efficient classical simulation.
The seminal work of \citet{Gottesman1998}, culminating in the Gottesman--Knill theorem, and its extension \citep{AaronsonGottesman2004}, are famous examples.
These methods track the stabilizer structure of Clifford circuits, employing tableau representations, and yield polynomial-time simulation.

For general quantum circuits, however, classical simulation is computationally intractable in the worst case; in particular, computing exact output probabilities is $\#P$-hard \citep{VandenNest10}. 
A popular approach of dealing with intractability comes from the field of parameterized algorithms and 
complexity~\citep{DBLP:books/sp/CyganFKLMPPS15}. 
Here, one aims to identify structural properties of the problem instances that determine the hardness of the problem. 
These structural properties are captured by a structural ``parameter''  (typically a numerical measure of the input structure), with the goal of establishing an upper bound on the complexity that may be exponential in the parameter but not in the size of the 
given problem instance. Such a bound immediately gives polynomial-time algorithms on every class of
instances for which 
the parameter is bounded by a constant. 

This perspective was brought to quantum circuit simulation in the pioneering work of~\citet{MarkovShi2008}. 
They showed that tensor-network contraction yields a simulation algorithm whose running time is 
 exponential in the {\em treewidth}\footnote{All notions used here will be formally defined in subsequent sections.} 
of an associated graph representation of the circuit, while remaining polynomial in the circuit size itself. 
Consequently, any class of quantum circuits whose associated graph representations have bounded treewidth admits polynomial-time classical simulation.

A different structural route emerged from the \emph{sum-of-powers} (\SOP) representation of circuit amplitudes.
\citet{WangEtAl2025} introduced \textsc{FeynmanDD},
which represents the Feynman path sum of a circuit by a classical decision diagram. Subsequent work by
\citet{Cheng+2025} showed that the size of the resulting
decision diagram can be bounded in terms of the \emph{linear
rank-width} of the corresponding interaction graph.
For the important special case of \emph{quadratic} \SOP
representations,  \citet{deColnet+2026} strengthened this dependence from linear rank-width to \emph{rank-width}.
The result applies, in particular, to circuits over gates
such as 
single-qubit rotations (e.g., $\gS$ and  $\gT$ gates) as well as Hadamard, $\CZ$, etc. These approaches are applicable to various choices of universal gate sets such as, for instance, $\{\gH, \gT, \CZ\}$.  Thus, as with the treewidth bound above, bounded rank-width yields polynomial-time classical simulation.

Beyond treewidth- and rank-width-based methods, classical simulation has been approached through a variety of representations. Matrix product states and tree tensor networks exploit low bond dimension~\citep{Vidal2003,shi2006classical,seitz2023simulating}; projected entangled-pair states extend tensor-network methods to more general geometries~\citep{guo2019general,lee2025scalable}; and decision diagrams exploit repeated structure in quantum states and circuit path sums~\citep{VinkhuijzenEtAl2023,WangEtAl2025}. Recent work has also related structural graph parameters to ZX-diagram contraction and stabilizer decompositions~\citep{KuyanovKissinger2026,CodsiLaakkonen2026}.

Despite their different formulations, these approaches address a common algorithmic question: \emph{which intermediate objects arising during simulation can be represented and manipulated compactly?}
We study this question
through the formalism of \emph{functional aggregate queries} (FAQs). A quantum circuit gives rise to a
sum-product query; an evaluation plan determines the intermediate factors that produces
messages, while the representation chosen for these factors determines the cost of evaluation. This perspective provides a common language for several existing simulation methods, recovers known structural bounds, and yields our
new tensor-layout  symmetry-width bound, including circuit families for which the competing treewidth and rank-width parameters are unbounded.

\smallskip
\noindent
\textbf{A unified approach}.
The functional aggregate query (FAQ) formalism \citep{AboKhamisNgoRudra2016} generalizes ``sum of product'' expressions through queries of the form: 
\begin{equation}
Q(X) := \bigoplus_{Y}\bigotimes_i F_i(X_i,Y_i), \label{eq-faq1}
\end{equation}
where $X$ and $Y$ are disjoint sets of free and aggregated variables, respectively, all ranging over finite domains. Here,
 $X_i \subseteq X$, $Y_i \subseteq Y$, $\bigcup_i  X =  X$, and
$\bigcup_i Y_i = Y$. 
The unifying power of FAQs comes from the fact that,when evaluated,  the sum $\bigoplus$ and product $\bigotimes$ are operations in 
an arbitrary commutative semiring. By considering different semirings,
different problems can be solved. FAQs have been used for bringing together results from a variety of areas 
such as databases, constraint satisfaction, graphical models, and matrix and tensor computation, see e.g., \citet{AboKhamisNgoRudra2016} and references therein.

We show that the behaviour of quantum circuits can be expressed directly as FAQs. In particular, the {\em factors} $F_i$ can be used to encode quantum gates and the variables used as 
arguments in a factor represent the input and output wires of that gate. The sequence of quantum gates in a circuit is 
captured by the product $\bigotimes$ of the factors corresponding to these gates.
Summation $\bigoplus$ allows us to aggregate away variables whose concrete value is not 
of interest. In particular, this applies to variables representing intermediate states of qubits 
that result from the 
application of one quantum gate and serve as input to the next gate. 

For a quantum circuit $C$ implementing a unitary matrix $U_C$, this construction yields an FAQ $Q_C$ whose free variables
correspond to the circuit's input and output boundaries.
Evaluated over the field $\C$ with the gate factors described
above, it satisfies 
\[
    Q_C(\ba,\bb)=\braket{\bb | U_C | \ba}
\]
for every computational-basis input $\ba$ and output $\bb$.
Computing this amplitude for specified boundary assignments
is a standard amplitude-based strong-simulation task. Most previous works concentrate on this simulation task.

The FAQ formulation is not restricted to computing individual
amplitudes. FAQs are very flexible and they allow us to model many different scenarios 
simply by considering certain variables of the FAQ 
as free and aggregating away all the others.
For instance, we can express the relationship between all possible initial states 
and all possible end states of a quantum circuit $C$ 
by 
considering the variables corresponding to initial and end states as ``free'' 
(i.e., $X$ in Equation~(\ref{eq-faq1})) and 
aggregating with $\bigoplus$ over all other variables. The support of the resulting FAQ gives all 
feasible combinations of initial and end states; 
the annotation of each tuple from the support by a value
from the semiring (in our case, by a complex number) gives the corresponding probability amplitude.

Key to the usefulness of FAQs is that an  algorithm called \InsideOut \citep{AboKhamisNgoRudra2016}
exists for evaluating them efficiently over any commutative semiring. The algorithm is a variation of traditional variable elimination algorithms using dynamic programming~\citep{Dechter1999}. 
Also \InsideOut works by variable elimination -- in this case, eliminating those variables of the FAQ
which are aggregated away by the summation. More precisely, when eliminating a particular variable $z$, 
the algorithm replaces the factors containing $z$ 
by a ``message'' $M$ which takes the product of these factors and then sums over the possible values of 
$z$ (when modelling qubits, the possible values of $z$ are 0 and 1).

The structure of an FAQ is nicely captured by a hypergraph, whose vertices correspond to the variables of the FAQ
and hyperedges correspond to the sets of variables jointly occurring as arguments of a factor. 
\cite{DBLP:journals/talg/GroheM14} introduced 
the {\em fractional hypertree width} ($\fhtw$) as an 
important parameter of hypergraphs, which is used to upper bound various tasks in query evaluation and constraint solving. 
Also the complexity of the \InsideOut algorithm is governed by the $\fhtw$ of the 
(hypergraph underlying an) FAQ.
More specifically, evaluating
an FAQ requires time exponential in the fractional hypertree width but not in the size of the FAQ. 

The pay-off of modelling quantum circuits by FAQs 
comes from the fact that three central connections can be made. In all three cases, 
we use the same FAQ to model the quantum circuit and we evaluate the FAQ with the 
\InsideOut algorithm. The difference comes from 
considering different semirings.

\begin{itemize}
    \item {\em Recovering the treewidth bound of} \cite{MarkovShi2008}. In fact, we get this result ``for free'' 
    by considering the FAQ-representation of quantum circuits over the complex numbers and observing 
    that the fractional hypertree-width of (the hypergraph underlying) an FAQ is 
    bounded from above by the treewidth of the circuit graph plus 1.
    And, actually, $\fhtw$ can be significantly smaller than $\tw$.
    
    \item {\em Recovering the rank-width bound akin to} \cite{WangEtAl2025,Cheng+2025,deColnet+2026}. 
    Towards this result, we introduce the {\em phase semiring} $K_r\coloneqq\mathbb R[\Z_r]$ for some 
    natural number $r$. We then exploit the flexibility of the FAQ framework and consider FAQs over $K_r$.
    This allows us to obtain the upper bound of     \cite{WangEtAl2025,Cheng+2025,deColnet+2026}
    in terms of the (linear) rank-width 
    via FAQ evaluation.
    As an additional benefit, the reconstruction of the rank-width bound is technically much simpler and more compact 
    than the original construction. 
    
\item {\em Recovering the exact-amplitude consequence of the Gottesman--Knill theorem}:
For Clifford circuits, tractability does not rely on a bounded structural width of the circuit. Instead, the intermediate messages produced by \InsideOut{} admit compact representations. More precisely, we represent the gate factors and all intermediate messages as \emph{affine--quadratic} (AQ) factors with values in the ring $\mathbb Z[i,2^{-1/2}]$. Crucially, this representation is preserved under the multiplication and aggregation operations performed by \InsideOut{}, and remains polynomial in size throughout the evaluation. Consequently, \InsideOut{} evaluates the FAQ of a Clifford circuit in polynomial time, thereby recovering the exact-amplitude consequence of the Gottesman--Knill theorem within the FAQ framework.
\end{itemize}

\noindent
That is, many existing important results on strong simulation in quantum circuits can be seen as instances of the same algorithm simply evaluated over different semirings encoding the gate behaviour. 
By unifying and conceptually simplifying these results in this form, we are also able to extend on them significantly. As already mentioned, in the setting of \cite{MarkovShi2008}, we obtain stronger bounds on the exponents of simulation algorithms. However, we also obtain an entirely new structural condition for tractable simulation, roughly speaking, by combining circuit parts with low rank-width, with a global measure that controls their interaction. 
\begin{itemize}
\item
We introduce a new parameter we call
{\em tensor layout symmetry width} ($\tlsw$) and show that it identifies tractable fragments for strong simulation and is strictly more general than both the treewidth approach of \cite{MarkovShi2008}, and the later rank-width methods of  \cite{WangEtAl2025,Cheng+2025,deColnet+2026}. 
\end{itemize}
Again the algorithm is simply \InsideOut, but over a more complex semiring on what we call \emph{phase coefficient tenors}.  Moreover, we exhibit a class of quantum circuits, such that the treewidth of the 
graph representation
as well as the rank-width
are unbounded but the $\tlsw$ is at most 2. This family shares the alternating interaction-and-mixing structure of QAOA and collective-spin simulation circuits. It thus also serves to illustrate  how $\tlsw$ structurally captures efficient strong simulation based on intricate interplay of structure and symmetry, beyond the guarantees of treewidth and rank width.

\smallskip
\noindent
\textbf{Structure of the paper}. In Section~\ref{sec:circuit-faq} we recall basic notions of quantum circuits and 
we formally introduce FAQs to model quantum circuits.
In Section~\ref{sec:FAQs}, we introduce the \InsideOut algorithm in technical detail. 
In Sections~\ref{sec:explicit-tables} and \ref{sec:compressed-boundaries},
we first recover the treewidth bound of \cite{MarkovShi2008} and then 
the rank-width bound of  \cite{WangEtAl2025,Cheng+2025,deColnet+2026}. 
Our new parameter tlsw and an enhanced quantum circuit simulation algorithm 
based on this parameter are presented in Section~\ref{sec:sopw}.
The  Gottesman-Knill theorem is recovered in 
Section~\ref{sec:gk-faq}. 
We provide an overview of related work in Section~\ref{sec:related-work}
and we conclude with Section \ref{sec:conclusion}.

\section{Quantum circuits as functional aggregate queries}
\label{sec:circuit-faq}
Classical simulation of quantum circuits can be formulated as the evaluation
of \emph{functional aggregate queries} (FAQs)
\citep{Dudek+2020,Geerts+2025}.  The FAQ formalism \citep{AboKhamisNgoRudra2016} represents sum--product
expressions over a commutative semiring, which captures the common structure of
factor graphs and tensor networks \citep{Kschischang+2001,Robeva+2019}.
The semiring specifies how factor values are combined, while the factor
scopes specify which variables interact.  This separates the structure of
the query from the algebra of its values and the representation of its
factors.  We first introduce FAQs, encode quantum circuits, and show how
support constraints can simplify the encoding. 

\subsection{Functional aggregate queries}
\label{subsec:faq}

A \emph{commutative semiring} is a tuple
$\mathbb K=(K,\oplus,\otimes,0,1)$ in which $(K,\oplus,0)$ and
$(K,\otimes,1)$ are commutative monoids, multiplication distributes over
addition, and $0\otimes a=0$ for every $a\in K$.  Commutative rings and fields,
including $\C$, are special cases.

Let $X$ be a finite set of variables, each variable $x\in X$ equiped with a finite nonempty domain
$D_x$.  For $S\subseteq X$, we write $D_S\coloneqq\prod_{x\in S}D_x$.
An element $\mathbf a\in D_S$ can also be interpreted as an \emph{assignment}: it associates a value
$a_x\in D_x$ with each $x\in S$.  For $T\subseteq S$, we write
$\mathbf a_T$ for its restriction to $T$.  By convention, $D_\emptyset$
contains the unique empty assignment.

A \emph{factor} with \emph{scope} $S\subseteq X$ is a function
$F:D_S\to K$, and equivalently a $K$-valued tensor whose indices are the
variables in $S$.  For a factor $F$, we denote its scope by $S_F$ and its \emph{support} by $\supp(F)\coloneqq\{\mathbf a\in D_{S_F} \mid F(\mathbf a)\neq0\}$.
Given a finite indexed collection $\cF$ of factors, allowing repetitions,
and a set $Y\subseteq X$ of \emph{free variables}, the associated
\emph{functional aggregate query} is the factor with scope $Y$ defined by
\begin{equation}
\label{eq:faq}
    Q(\mathbf x_Y)
    \coloneqq
    \bigoplus_{\mathbf x_{X\setminus Y}\in D_{X\setminus Y}}
        \bigotimes_{F\in\cF}F(\mathbf x_{S_F}).
\end{equation}
For each fixed assignment $\mathbf x_Y$, the expression ranges over all assignments $\mathbf x_{X\setminus Y}$ to $X\setminus Y$.  Together, these assignments determine
$\mathbf x\in D_X$, which is restricted to each factor's scope when that
factor is evaluated.  The variables in $X\setminus Y$ are called
\emph{bound variables}: they are aggregated out using $\oplus$.  The free
variables index the resulting factor.  When $Y=\emptyset$, the answer is a single element of $K$.

The \emph{structural hypergraph} of the FAQ is
$\cH_Q\coloneqq(X,\{S_F\mid F\in\cF\})$.  It records which variables occur together in
factors, independently of their values and representations.  We distinguish
its \emph{structural size},
$|Q|\coloneqq|X|+|\cF|+\sum_{F\in\cF}|S_F|$, from the total input encoding length 
$\|Q\|_{\mathrm{in}}$.  The latter includes the variable domains, scopes,
and factor representations.  It therefore depends on whether factors are
stored as dense tables, sparse listings, or compact symbolic objects,
whereas $|Q|$ and $\cH_Q$ do not.  The \emph{primal graph} of $\cH_Q$ has
vertex set $X$, with two distinct variables adjacent precisely when they
occur together in some factor scope.

\subsection{Quantum circuits and simulation}
\label{subsec:circuit-encoding}

We use Dirac notation: a vector $v\in\C^d$ is written as a \emph{ket}
$\ket v$, and its conjugate transpose $v^\dagger$ as the corresponding
\emph{bra} $\bra v$.  Their pairing
$\braket{w|v}=w^\dagger v=\sum_i\overline{w_i}v_i$ is the standard
Hermitian inner product on $\C^d$.
Let $\cQ$ be a finite set of \emph{qubits}.  For each $q\in \cQ$, let
$\mathcal H_q\cong\C^2$ be the Hilbert space of qubit $q$, with
orthonormal computational basis $\{\ket{0}_q,\ket{1}_q\}$.  For
$S\subseteq \cQ$, define
$\mathcal H_S\coloneqq\bigotimes_{q\in S}\mathcal H_q$, with
$\mathcal H_\emptyset\coloneqq\C$.  Its computational basis is indexed by
assignments $\bb\in \{0,1\}^S$, with
$\ket{\bb_S}\coloneqq\bigotimes_{q\in S}\ket{b_q}$.
The tensor factors are labelled by their qubits and are understood up to
canonical reordering.  If $S=\{q_1<\cdots<q_k\}$ is ordered, we may display
$\bb$ as the bit string $b_{q_1}\cdots b_{q_k}$.  For example, $\ket{0001}$ on
$S=\{1,2,3,4\}$ denotes the basis vector indexed by the assignment $b$ with
$b_1=b_2=b_3=0$ and $b_4=1$.  When $S=\cQ$, we usually omit the subscript
from $\ket{\bb  }_S$.  At times, we take $\cQ=[n]\coloneqq\{1,2,\ldots,n\}$. In this case, we use the notations $\{0,1\}^{[n]}$ and $\{0,1\}^n$ interchangeably.

A \emph{quantum gate} $G$ supported on $S\subseteq \cQ$ is specified by a
unitary operator $U_G:\mathcal H_S\to\mathcal H_S$.  Its extension to the
full state space is $\widehat U_G\coloneqq U_G\otimes I_{\cQ\setminus S}$, under the
canonical identification
$\mathcal H_{\cQ}\cong\mathcal H_S\otimes\mathcal H_{\cQ\setminus S}$.
Here, $I_{\cQ\setminus S}$ is the identity on
$\mathcal H_{\cQ\setminus S}$.  Thus, $\widehat U_G$ acts as $U_G$ on the
qubits in $S$ and as the identity on all other qubits.  We say that $G$
\emph{touches} the qubits in $S$.

The example gates used below are specified by their action on computational-basis
states; linearity determines their action on every state.  In these
formulas, we write the gate symbol for its local unitary.  The variables
$u,v,w$ are bits, and $\oplus$ denotes addition modulo two. 

\begin{center}
\small
\begin{tabular}{@{}ll@{}}
\toprule
Gate & Action on basis states\\
\midrule
Pauli-$X$ $\gX$
    & $\gX\ket u=\ket{u\oplus 1}$\\
Pauli-$Y$ $\gY$
    & $\gY\ket u=i(-1)^u\ket{u\oplus1}$\\ 
Pauli-$Z$ $\gZ$
    & $\gZ\ket u=(-1)^u\ket u$\\
Hadamard $\gH$
    & $\gH\ket u=2^{-1/2}(\ket0+(-1)^u\ket1)$\\
Phase $\gS$
    & $\gS\ket u=i^u\ket u$\\
Phase $\gT$
    & $\gT\ket u=e^{i\pi u/4}\ket u$\\
Controlled-$Z$ $\CZ$
    & $\CZ\ket{u,v}=(-1)^{uv}\ket{u,v}$\\
Controlled-NOT $\CNOT$
    & $\CNOT\ket{u,v}=\ket{u,u\oplus v}$\\
Toffoli
    & $\Toffoli\ket{u,v,w}=\ket{u,v,w\oplus uv}$\\
\bottomrule
\end{tabular}
\end{center}

The first input of $\CNOT$ is the control and the second is the target;
Toffoli has two controls followed by its target.  Subscripts specify the
qubits in these roles: for example, $\gH_q$ acts on $q$, and
$\CNOT_{p,q}$ has control $p$ and target $q$.
The gate $\CZ_{p,q}$ acts on qubits $p$ and $q$ and multiplies the basis
state by $-1$ exactly when both qubits are in state $\ket1$; since
$\CZ$ is symmetric in its two inputs, neither qubit has a distinguished
control or target role.  The one-qubit identity gate
$\mathsf{I}$ satisfies $\mathsf{I}\ket u=\ket u$.

A \emph{quantum circuit} on $\cQ$ is a sequence $C=(G_1,\ldots,G_T)$, where
$G_t$ is supported on $S_t\subseteq \cQ$.  The gates are applied in the listed
order, starting with $G_1$, so the circuit implements
\[
    U_C=\widehat U_{G_T}\circ\cdots\circ\widehat U_{G_1}
    \colon\mathcal H_{\cQ}\to\mathcal H_{\cQ}.
\]
We measure circuit size by
$|C|\coloneqq|\cQ|+\sum_{t=1}^T|S_t|$, so the qubits and
all gate incidences are included.
For $q\in \cQ$, let $t_{q,1}<\cdots<t_{q,T_q}$ be the indices of the gates
that touch $q$.  The \emph{wire} of $q$ is the ordered chain from its input
boundary, through these gate occurrences, to its output boundary.  The
connections between consecutive elements are its $T_q+1$ \emph{wire
segments}.  An untouched qubit has a single segment connecting its input
and output boundaries.

\paragraph{Circuit blocks.}
rather than as a single gate.  We allow blocks within circuit expressions,
with the convention that they are expanded in place, from left to right.
Thus, for circuits $C$ and $D$ on the same qubit set, $(C,D)$ denotes
the concatenation of their gate sequences, executing $C$ first and $D$
second, so that $U_{(C,D)}=U_DU_C$.
Write $C^{[m]}$ for $m$ consecutive copies of $C$, with $C^{[0]}$
the empty sequence, which implements the identity.
Each use of a block introduces fresh occurrences of its constituent
gates.  Circuit size, wire segments, and the associated tensor networks
and hypergraphs are always defined using the expanded gate sequence. On the qubit set $\cQ=[n]$, we use the blocks
\begin{equation}
\begin{aligned}
    \Hblock{n}&:=(H_1,\ldots,H_n),&
    \Tblock{n}&:=(T_1,\ldots,T_n),\\
    \CZblock{n}&:=(\CZ_{i,j})_{1\leq i<j\leq n},
\end{aligned}
\label{eq:standard-circuit-blocks}
\end{equation}
where the pairs $(i,j)$ in $\CZblock{n}$ are listed in lexicographic
order.  For example,
$C=(\Hblock{n},\Tblock{n},\CZblock{n})$
denotes the circuit consisting of the $n$ Hadamard gates, followed by
the $n$ $T$ gates, followed by the $\binom{n}{2}$ controlled-$Z$ gates
in the specified order.  This order is retained when defining wire
segments and the raw tensor network, even when gates commute.

\paragraph{Strong-simulation.}
We take the basic \emph{strong-simulation} task to be the exact computation
of a \emph{transition amplitude} $\me{\by}{U_C}{\bx}\in\C$, where
$\bx,\by\in\{0,1\}^{\cQ}$ are the input and output \emph{boundary assignments}.
This amplitude is the coefficient of $\ket \by$ in $U_C\ket \bx$, and
$\lvert\me{\by}{U_C}{\bx}\rvert^2$ is the probability of observing $\by$ when that
state is measured in the computational basis.  A simple special case fixes
the input to the all-zero assignment $\mathbf{0}_{\cQ}$ and the output to a one-hot assignment $\mathbf{e}^{q_0}$, where $q_0\in \cQ$,
$e_{q_0}^{q_0}\coloneqq 1$, and $e_q^{q_0}\coloneqq 0$ for $q\neq q_0$.
We allow arbitrary input and output assignments.

\subsection{Quantum circuits as FAQs}
\label{subsec:circuits-as-faqs}

Let $C=(G_1,\ldots,G_T)$ be a circuit on $\cQ$, with gate $G_t$ supported on
$S_t\subseteq\cQ$.  For the encoding as FAQs, we may add an identity gate to each otherwise
untouched qubit.  This leaves $U_C$ unchanged and ensures that the input
and output boundaries correspond to distinct segment variables.

Using the wire notation above, associate the Boolean variables
$z^q_0,\ldots,z^q_{T_q}$ with the successive segments of qubit $q$.
The variable $z^q_0$ is the input index, $z^q_{T_q}$ is the output index,
and $z^q_{j-1},z^q_j$ are the indices immediately before and after
$G_{t_{q,j}}$.  These variables index computational-basis states in the
expansion; they do not assert that a qubit has a definite classical value
along its wire.  Every segment variable has domain $\{0,1\}$.
Let $X$ be the set of all segment variables, and let
$B_{\mathrm{in}}\coloneqq\{z^q_0 \mid q\in \cQ\}$ and
$B_{\mathrm{out}}\coloneqq\{z^q_{T_q}\mid q\in \cQ\}$ denote the input and output boundary variables.

Each gate occurrence contributes one factor.  For each $q\in S_t$, let
$j_q(t)$ be the unique index satisfying $t=t_{q,j_q(t)}$, and set
\[
X_t
\coloneqq
\bigl\{
    z^q_{j_q(t)-1},z^q_{j_q(t)}
    \bigm|
    q\in S_t
\bigr\}.
\]
Thus, $X_t$ consists of the input and output segment variables of the qubits
on which $G_t$ acts, and $|X_t|=2|S_t|$.  Every assignment
$\pmb{\alpha}\in \{0,1\}^{X_t}$ induces input and output assignments
$\pmb\alpha_t^{\mathrm{in}}$ and $\pmb\alpha_t^{\mathrm{out}}$ in $\{0,1\}^{S_t}$ given by
\[
\alpha_t^{\mathrm{in}}(q)
    \coloneqq \alpha\bigl(z^q_{j_q(t)-1}\bigr),
\qquad
\alpha_t^{\mathrm{out}}(q)
    \coloneqq \alpha\bigl(z^q_{j_q(t)}\bigr).
\]
The gate factor $F_t:\{0,1\}^{X_t}\to\C$ is then
\[
F_t(\pmb\alpha)
\coloneqq
\me{\pmb\alpha_t^{\mathrm{out}}}{U_{G_t}}
   {\pmb\alpha_t^{\mathrm{in}}}.
\]
Figure~\ref{fig:gate-factors} gives the resulting factors for several
standard gates. We thus obtain factors with values in $\C$ and hence $\oplus$ and $\otimes$ represent the usual sum and product of the complex numbers.

\begin{figure}[t]
\centering
\small
\setlength{\arraycolsep}{3pt}
\renewcommand{\arraystretch}{1}

\[
\begin{array}{cc|c}
  \multicolumn{3}{c}{H}\\
  i_1&o_1&F\\ \hline
  0&0&2^{-1/2}\\
  0&1&2^{-1/2}\\
  1&0&2^{-1/2}\\
  1&1&-2^{-1/2}
\end{array}
\qquad
\begin{array}{cc|c}
  \multicolumn{3}{c}{T}\\
  i_1&o_1&F\\ \hline
  0&0&1\\
  1&1&e^{i\pi/4}
\end{array}
\qquad
\begin{array}{cc|cc|c}
  \multicolumn{5}{c}{\CZ}\\
  i_1&i_2&o_1&o_2&F\\ \hline
  0&0&0&0&1\\
  0&1&0&1&1\\
  1&0&1&0&1\\
  1&1&1&1&-1
\end{array}
\]

\vspace{-0.6em}

\[
\begin{array}{cc|cc|c}
  \multicolumn{5}{c}{\CNOT}\\
  i_1&i_2&o_1&o_2&F\\ \hline
  0&0&0&0&1\\
  0&1&0&1&1\\
  1&0&1&1&1\\
  1&1&1&0&1
\end{array}
\qquad\qquad
\begin{array}{ccc|ccc|c}
 \multicolumn{7}{c}{\text{Toffoli}}\\
  i_1&i_2&i_3&o_1&o_2&o_3&F\\ \hline
  0&0&0&0&0&0&1\\
  0&0&1&0&0&1&1\\
  0&1&0&0&1&0&1\\
  0&1&1&0&1&1&1\\
  1&0&0&1&0&0&1\\
  1&0&1&1&0&1&1\\
  1&1&0&1&1&1&1\\
  1&1&1&1&1&0&1
\end{array}
\]

\vspace{-0.8em}
\caption{Gate encodings as factors over $\C$. Here $i_k$ and $o_k$
denote the input and output variables of the $k$th qubit, respectively.
Only assignments with nonzero factor value are shown.}
\label{fig:gate-factors}
\end{figure}

Fixed boundary assignments are imposed by unary indicator factors
$\iv{z^q_0=x(q)}$ and $\iv{z^q_{T_q}=y(q)}$, one for each boundary
variable.  Here, $\iv{P}$ equals $1$ if the condition $P$ holds and $0$
otherwise.  Let $\cF_C$ consist of the gate factors and these
boundary factors.  We write $Q_C$ for the scalar circuit FAQ, with the
chosen computational-basis input and output assignments understood.
With all segment variables bound, it is
\begin{equation}
\label{eq:amp-faq}
    Q_C
    \coloneqq
    \bigoplus_{\pmb\alpha\in\{0,1\}^{X}}
        \bigotimes_{F\in\cF_C}F(\pmb\alpha_{S_F})
    =
    \me{\by}{U_C}{\bx}.
\end{equation}
The equality follows by repeated tensor multiplication: summing over the
internal indices contracts adjacent gate factors, while the boundary
factors select $\bx$ and $\by$.  This is an FAQ over
$(\C,+,\times,0,1)$ with no free variables.  Its structural hypergraph
contains the gate scopes $X_t$ and the singleton scopes of the boundary
factors.  Such encodings also underlie factor-graph and graphical-model
approaches to tensor contraction
\citep{Kschischang+2001,Robeva+2019,BoixoEtAl2017}.

A key benefit of the FAQ viewpoint is that the same gate factors support
several simulation tasks, as shown in the table below:
\begin{center}
\small
\begin{tabular}{@{}lll@{}}
\toprule
Boundary factors fix & Free boundary variables & Query value\\
\midrule
Input $\bx$ and output $\by$
    & $\emptyset$ & $Q_C=\me{\by}{U_C}{\bx}$\\
Input $x$
    & $B_{\mathrm{out}}$ & $Q(\by)=\me{\by}{U_C}{\bx}$\\
Output $y$
    & $B_{\mathrm{in}}$ & $Q(\bx)=\me{\by}{U_C}{\bx}$\\
Neither
    & $B_{\mathrm{in}}\cup B_{\mathrm{out}}$
    & $Q(\bx,\by)=\me{\by}{U_C}{\bx}$\\
\bottomrule
\end{tabular}
\end{center}
In the last three rows, $Q$ denotes the query with the indicated free
variables.  These queries give, respectively, the coefficients of
$U_C\ket \bx$, the matrix row $\bra \by U_C$, and the full matrix of
$U_C$ in the computational basis.

\begin{example}
\label{ex:qctofaq}
Consider the circuit on $\cQ=\{1,2,3\}$ with gate sequence
$C=(H_1,H_2,H_3,\allowbreak \CNOT_{1,2},T_2,\allowbreak\CZ_{2,3},H_2)$.
The initial Hadamard gates act on disjoint qubits and may be applied in
parallel.  We fix the input $\bx=(x_1,x_2,x_3)$ and leave the outputs
$y_1,y_2,y_3$ free.  The diagram is read from left to right, and its
segment labels are the variables used below.
\begin{center}
\begin{tikzpicture}[
    x=1.05cm,
    y=0.75cm,
    baseline=(current bounding box.center),
    gate/.style={draw,minimum width=0.46cm,minimum height=0.38cm,inner sep=1pt},
    dot/.style={circle,fill,inner sep=1.5pt},
    every node/.style={font=\small}
]
    \draw (0,0) -- (6.8,0);
    \draw (0,-1) -- (6.8,-1);
    \draw (0,-2) -- (6.8,-2);

    \node[left] at (0,0) {$x_1$};
    \node[left] at (0,-1) {$x_2$};
    \node[left] at (0,-2) {$x_3$};

    \node[gate,fill=white] at (0.8,0) {$H$};
    \node[gate,fill=white] at (0.8,-1) {$H$};
    \node[gate,fill=white] at (0.8,-2) {$H$};

    \node[above] at (1.45,0) {$a$};
    \node[above] at (1.45,-1) {$b$};
    \node[above] at (1.45,-2) {$c$};

    \node[dot] at (2.1,0) {};
    \draw (2.1,0) -- (2.1,-1);
    \draw (2.1,-1) circle (0.13);
    \draw (1.97,-1) -- (2.23,-1);
    \draw (2.1,-1.13) -- (2.1,-0.87);

    \node[above] at (2.65,0) {$y_1$};
    \node[above] at (2.65,-1) {$d$};

    \node[gate,fill=white] at (3.2,-1) {$T$};
    \node[above] at (3.65,-1) {$e$};

    \node[dot] at (4.15,-1) {};
    \node[dot] at (4.15,-2) {};
    \draw (4.15,-1) -- (4.15,-2);

    \node[above] at (4.65,-1) {$f$};
    \node[above] at (4.65,-2) {$y_3$};

    \node[gate,fill=white] at (5.25,-1) {$H$};

    \node[right] at (6.8,0) {$y_1$};
    \node[right] at (6.8,-1) {$y_2$};
    \node[right] at (6.8,-2) {$y_3$};
\end{tikzpicture}
\end{center}
The variables $a,b,c$ occur after the initial Hadamard layer.  The
$\CNOT$ has inputs $(a,b)$ and outputs $(y_1,d)$, with $y_1$ already
the final output of qubit~$1$.  The $T$ gate has input $d$ and output $e$.
The $\CZ$ has inputs $(e,c)$ and outputs $(f,y_3)$, with $y_3$ already
the final output of qubit~$3$.  The final Hadamard has input $f$ and
output $y_2$.

For compactness, we write $G(\bv;\bu)\coloneqq\me{\bv}{U_G}{\bu}$ for a gate factor,
listing output bits before the semicolon and input bits after it.
Substituting the fixed input values into the initial Hadamard factors gives
\[
\begin{aligned}
    Q(y_1,y_2,y_3)
    =
    \sum_{a,b,c,d,e,f\in\{0,1\}}
      &H(a;x_1)H(b;x_2)H(c;x_3)\\
      &\cdot\CNOT(y_1,d;a,b)\,
       T(e;d)\,\CZ(f,y_3;e,c)\,H(y_2;f).
\end{aligned}
\]
Each gate contributes one factor, the internal variables are summed out,
and the output variables remain free.  The resulting coefficient is
$Q(y_1,y_2,y_3)=\me{y_1y_2y_3}{U_C}{x_1x_2x_3}$.
\end{example}

Partial boundary choices work in the same way.  Fix the input $\bx$ and an
output assignment $\ba\in\{0,1\}^S$ for some $S\subseteq \cQ$, leaving the
outputs on $\cQ\setminus S$ free.  The result represents the unnormalized
postselected state on $\cQ\setminus S$: its coefficient at
$\bb:\cQ\setminus S\to\{0,1\}$ is $\me{\ba,\bb}{U_C}{\bx}$, where $(\ba,\bb)$ denotes
the assignment on $\cQ$ extending both $\ba$ and $\bb$.
Its squared norm is the probability of observing $\ba$ on $S$.

Probability marginals require products of amplitudes with their complex
conjugates.  On input $\bx$, the probability of observing $\ba$ on $S$ is
\[
    p_S(\ba)
    =
    \sum_{\bb\in\{0,1\}^{\cQ\setminus S}}
        \me{\ba,\bb}{U_C}{\bx}\,
        \overline{\me{\ba,\bb}{U_C}{\bx}}.
\]
The above is an FAQ over $\C$ formed from two copies of the gate factors,
one conjugated.  Both inputs are fixed to $x$, and corresponding output
variables are identified.  The output variables on $S$ remain free, those
on $\cQ\setminus S$ are bound and summed out, and the internal variables of
the two copies remain distinct and bound.

\subsection{Rewriting FAQs with functional dependencies}
\label{subsec:circuit-fds}

The raw quantum-circuit encoding uses distinct basis-index variables
immediately before and after every gate.  For many gates, the gate factor is nonzero
only when some variables are determined by others.  If a determined
variable is bound, then, for each assignment to the other variables, at
most one of its values contributes to the sum.  We can eliminate its
summation by substituting that value throughout the FAQ.  As we will see,
reducing the number of variables can lower the evaluation cost, making
this a useful preprocessing step.

\begin{definition}[Functional dependency]
\label{def:fd}
A factor $F:D_{S_F}\to K$ satisfies the \emph{functional dependency} (FD)
$A\to b$, where $b\in S_F$ and $A\subseteq S_F\setminus\{b\}$, if there
exists a function $g:D_A\to D_b$ such that $z_b=g(\bz_A)$ for every
$\bz\in\supp(F)$.  Equivalently, any two assignments in $\supp(F)$ that
agree on $A$ also agree on $b$. 
\end{definition}

A functional dependency constrains the support of a factor.
Suppose that a factor $F$ satisfies $A\to b$,
witnessed by $g$, and define its residual factor by
\[
    F_{\setminus b}(\bz_{S_F\setminus\{b\}})
    \coloneqq
    \left.F(\bz_{S_F})\right|_{z_b=g(\bz_A)}.
\]
Then $F(\bz_{S_F})=\iv{z_b=g(\bz_A)}
F_{\setminus b}(\bz_{S_F\setminus\{b\}})$.
If $b$ is bound in an FAQ with factor set $\cF$ containing
a chosen occurrence of $F$,
the following holds for every fixed assignment to the other variables:
\begin{equation}
\label{eq:fd-substitution}
\begin{aligned}
    &\bigoplus_{\bz_b\in D_b}
        F(\bz_{S_F})
        \bigotimes_{F'\in\cF\setminus\{F\}}F'(\bz_{S_{F'}})\\
    &\quad=
    \bigoplus_{\bz_b\in D_b}
        \iv{z_b=g(\bz_A)}\,
        F_{\setminus b}(\bz_{S_F\setminus\{b\}})
        \bigotimes_{F'\in\cF\setminus\{F\}}F'(\bz_{S_{F'}})\\
    &\quad=
    F_{\setminus b}(\bz_{S_F\setminus\{b\}})
    \left.
        \bigotimes_{F'\in\cF\setminus\{F\}}F'(\bz_{S_{F'}})
    \right|_{z_b=g(\bz_A)}.
\end{aligned}
\end{equation}
Here, $\cF\setminus\{F\}$ removes the chosen occurrence of $F$.
The indicator selects the only value of $z_b$ that can contribute.
Eliminating $b$ therefore requires substituting $g(z_A)$ into every factor
containing $b$, while $F_{\setminus b}$ retains the weight contributed by
$F$.

The dependency is witnessed by one factor, but the substitution affects
every factor sharing the determined variable.  A factor $F'$ containing
$b$ acquires the declared scope $(S_{F'}\setminus\{b\})\cup A$.
The determining factor itself loses only $b$, since $A$ was already
contained in its scope.  Other factors may grow: $F'(b,c)$ becomes
$F'(g(a_1,a_2),c)$, with scope $\{a_1,a_2,c\}$.
Identity substitutions replace one variable by another and do not
increase arity.

In the quantum setting, for diagonal gates such as $T$ and $\CZ$, the support constraints equate
corresponding input and output variables. 
Classical reversible gates such as $\CNOT$ and Toffoli
determine their target outputs from their inputs and also permit inverse
substitutions.  Table~\ref{tab:gate-fd} lists representative dependencies
and the nonzero values that must be retained. 

\begin{table}[t]
\centering
\small
\setlength{\tabcolsep}{4pt}
\begin{tabular}{@{}llll@{}}
\toprule
Gate & Support relation & FDs & Nonzero value\\
\midrule
$X$
    & $y=x\oplus 1$
    & $x\to y,\quad y\to x$
    & $1$\\[2pt]
$Y$
    & $y=x\oplus 1$
    & $x\to y,\quad y\to x$
    & $i(-1)^x$\\[2pt]
$Z$
    & $y=x$
    & $x\to y,\quad y\to x$
    & $(-1)^x$\\[2pt]
$H$
    & full support
    & none
    & $2^{-1/2}(-1)^{xy}$\\[2pt]
$S$
    & $y=x$
    & $x\to y,\quad y\to x$
    & $i^x$\\[2pt]
$T$
    & $y=x$
    & $x\to y,\quad y\to x$
    & $e^{i\pi x/4}$\\[2pt]
$\CZ$
    & $\mathbf y=\mathbf x$
    & $\begin{gathered}[t]
         x_1\to y_1,\quad y_1\to x_1\\
         x_2\to y_2,\quad y_2\to x_2
       \end{gathered}$
    & $(-1)^{x_1x_2}$\\[2pt]
$\CNOT$
    & $\mathbf y=(x_1,x_1\oplus x_2)$
    & $\begin{gathered}[t]
         x_1\to y_1,\quad y_1\to x_1\\
         \{x_1,x_2\}\to y_2\\
         \{y_1,y_2\}\to x_2
       \end{gathered}$
    & $1$\\[2pt]
Toffoli
    & $\begin{gathered}[t]
         y_i=x_i\ (i=1,2)\\
         y_3=x_3\oplus x_1x_2
       \end{gathered}$
    & $\begin{gathered}[t]
         x_1\to y_1,\quad y_1\to x_1\\
         x_2\to y_2,\quad y_2\to x_2\\
         \{x_1,x_2,x_3\}\to y_3\\
         \{y_1,y_2,y_3\}\to x_3
       \end{gathered}$
    & $1$\\
\bottomrule
\end{tabular}
\caption{Support relations, representative functional dependencies, and
nonzero gate-factor values.  Here, $\mathbf x$ and $\mathbf y$ are the full
input and output tuples.}
\label{tab:gate-fd}
\end{table}

The substitution rule eliminates bound variables.  If an identity
dependency relates a bound variable to a free one, we retain the free
variable and eliminate the bound one.  If both are free, we retain an
explicit equality constraint so that both output coordinates remain
present.  More generally, determined free coordinates are retained unless
an output reconstruction map is explicitly part of the representation.
Boundary factors are substituted like all other factors.

We write $\widehat Q$ for the result of a specified sequence of these
substitutions; $\widehat Q$ and $Q$
are pointwise equal on the same free-variable domain. A substitution may
nevertheless enlarge a factor scope, so the structural bounds below refer
to the particular rewritten FAQ $\widehat Q$ under consideration.  The
cost of constructing its chosen factor representations is included in
preprocessing.  By contrast, changing only the representation of a fixed
factor leaves both its function and its declared scope unchanged.

\begin{example}[Rewriting the FAQ of Example~\ref{ex:qctofaq}]
\label{ex:qctofaq-fd}
For the raw output-state FAQ of Example~\ref{ex:qctofaq}, the $\CNOT$
factor imposes $y_1=a$ and $d=a\oplus b$.  The $T$ factor imposes $e=d$
and contributes the phase $t(d)\coloneqq e^{i\pi d/4}$.  The $\CZ$ factor imposes
$f=e$ and $y_3=c$ and contributes the phase
$\zeta(e,c)\coloneqq(-1)^{ec}$.

First use the identity dependencies to substitute
$a\leftarrow y_1$, $c\leftarrow y_3$, $f\leftarrow e$, and then
$e\leftarrow d$.  The free output variables are retained, and the query
becomes
\[
\begin{aligned}
    Q(y_1,y_2,y_3)
    =
    \sum_{b,d\in\{0,1\}}
      &H(y_1;x_1)H(b;x_2)H(y_3;x_3)\\
      &{}\cdot\iv{d=y_1\oplus b}\,
       t(d)\,\zeta(d,y_3)\,H(y_2;d).
\end{aligned}
\]
The remaining indicator determines $d$ from $y_1,b$.  Substituting
$d\leftarrow y_1\oplus b$ into all factors containing $d$ eliminates its
sum and gives
\[
\begin{aligned}
    \widehat Q(y_1,y_2,y_3)
    =
    \sum_{b\in\{0,1\}}
      &H(y_1;x_1)H(b;x_2)H(y_3;x_3)\\
      &{}\cdot t(y_1\oplus b)\,
       \zeta(y_1\oplus b,y_3)\,
       H(y_2;y_1\oplus b).
\end{aligned}
\]
Equation~\eqref{eq:fd-substitution} shows that the two FAQs agree for every
output assignment.  Of the six original bound variables, only $b$ remains.
The final substitution also illustrates why a factor
scope may grow: $H(y_2;d)$ has scope $\{y_2,d\}$, whereas
$H(y_2;y_1\oplus b)$ has scope $\{y_1,y_2,b\}$.

\end{example}

\noindent
{\bf Note}. The factors in an FAQ corresponding to the various gates in a quantum circuit 
are evaluated over specific relations. For instance, in Figure~\ref{fig:gate-factors}, 
the relations representing possible combinations of input and output values are shown for 
factors corresponding to $H$, $T$, $CZ$, $\CNOT$, and Toffoli gates. The functional dependencies 
mentioned in this section are properties of these relations. That is why we are allowed to 
use these FDs if this helps to simplify the FAQ. But we may also decide not to use a particular
FD. In the sequel, we will only use FDs that allow us to 
replace a variable by another variable. In Example~\ref{ex:qctofaq-fd} above, we would thus 
carry out the substitutions $a\leftarrow y_1$, $c\leftarrow y_3$, $f\leftarrow e$, and 
$e\leftarrow d$, but not the substitution $d\leftarrow y_1\oplus b$, which 
would replace a variable by a functional term.
Moreover, we will never explicitly introduce indicator expressions.
This convention is motivated by the evaluation algorithm for FAQs to be presented in the next section. 
This algorithm assumes factors to only have constants and variables as arguments and no indicator factors.
In Example~\ref{ex:qctofaq-fd}, we would thus transform the original FAQ from 
Example~\ref{ex:qctofaq} to 
   $$Q(y_1,y_2,y_3)
    =
    \sum_{b,d\in\{0,1\}}
      H(y_1;x_1)H(b;x_2)H(y_3;x_3) t(d)\,\zeta(d,y_3)\,H(y_2;d)$$
and not any further.

\section{FAQ evaluation and InsideOut}
\label{sec:FAQs}
We next describe \InsideOut, the FAQ evaluation method of~\citet{AboKhamisNgoRudra2016}.  At a high level, \InsideOut is a semiring-valued form of classical variable elimination, or bucket elimination~\citep{Dechter1999}: one chooses a variable, collects the factors that depend on it, multiplies them, aggregates out the variable, and replaces the bucket by the resulting factor.  The semiring laws, and in particular distributivity, justify pushing the aggregation through the product in this way.  The distinctive feature of the FAQ framework is that the resulting bucket need not be evaluated by naively materializing its full table; its computation can exploit the structure and representation of the participating factors. 
We therefore expose \BucketEval as the operation whose implementation varies between representations, while \InsideOut specifies the surrounding variable-elimination scheme.  

\subsection{The \InsideOut algorithm}
\label{subsec:insideout}

Let $(K,\oplus,\otimes,\mathbf0,\mathbf1)$ be a commutative semiring,
let $X$ be a finite set of variables with finite domains $D_x$, and let
$Y\subseteq X$ be the free variables.  Recall that $D_S\coloneqq\prod_{x\in S}D_x$ for $S\subseteq X$. Given a finite indexed collection $\cF$ of factors
$F:D_{S_F}\to K$ with scopes $S_F\subseteq X$, consider the FAQ
\[
    Q(\mathbf x_Y)
    \coloneqq
    \bigoplus_{\mathbf x_{X\setminus Y}\in D_{X\setminus Y}}
       \bigotimes_{F\in\cF}F(\mathbf x_{S_F}).
\]

\noindent
Algorithm \InsideOut evaluates this query by eliminating the bound variables
one at a time.  Choose any order $z_1,\ldots,z_m$ of $X\setminus Y$.
The free variables are never eliminated and remain arguments of the result.
The order affects the evaluation cost, but not the query value. We summarize \InsideOut in Algorithm~\ref{alg:insideout}.

\paragraph{Bags and messages}
Let $\cA_1\coloneqq\cF$ be the initial indexed collection of active factors.
At step $i$, the \emph{bag} of $z_i$ contains all active factors whose
scopes include $z_i$.  Define
\begin{equation}
 \cB_i\coloneqq\{F\in\cA_i\mid z_i\in S_F\},\qquad
 U_i\coloneqq\{z_i\}\cup\bigcup_{F\in\cB_i}S_F,\qquad
 R_i\coloneqq U_i\setminus\{z_i\}.
 \label{eq:bucket-bag-boundary}
\end{equation}
The set $U_i$ is the \emph{elimination bag}, and
$R_i$ contains the variables that remain after aggregating $z_i$
from the bucket.  The resulting \emph{message} is
\begin{equation}
 M_i'(\mathbf x_{R_i})
 \coloneqq
 \bigoplus_{x_{z_i}\in D_{z_i}}
    \bigotimes_{F\in\cB_i}F(\mathbf x_{S_F}).
 \label{eq:unfiltered-message}
\end{equation}
Thus a message summarizes the consumed factors as a function of their
remaining variables.  
\paragraph{Indicator projections.}
Sparse evaluators may also use the supports of factors outside the bag 
to restrict the assignments considered.  For a factor $G$ and
$T\subseteq X$, its \emph{indicator projection} onto $T$ is the factor on
$S_G\cap T$ defined by
\begin{equation}
 G_{/T}(\mathbf a)\coloneqq
 \mathbf1\!\left[
   \exists\mathbf b\in D_{S_G\setminus T}\mid 
   G(\mathbf a,\mathbf b)\ne\mathbf0
 \right],
 \qquad \mathbf a\in D_{S_G\cap T}.
 \label{eq:indicator-projection}
\end{equation}
Here the indicator takes the value $\mathbf1$ when the condition holds and
$\mathbf0$ otherwise.  It records whether
$\mathbf a$ extends to a supported tuple of $G$.

Let $\cP_i$ be any selected set of projections $G_{/U_i}$ with
$G\in\cA_i\setminus\cB_i$.  Since these outside factors do not contain
$z_i$, every projection has scope contained in $R_i$.  When such projections are used, the basic message in~\eqref{eq:unfiltered-message} is replaced by its filtered form below;
equivalently, $\BucketEval$ should be understood as depending also on the
selected set $\cP_i$.
The bag evaluator
returns
\begin{equation}
 M_i(\mathbf x_{R_i})
 \coloneqq
 M_i'(\mathbf x_{R_i})\otimes
    \bigotimes_{P\in\cP_i}P(\mathbf x_{S_P}),
 \qquad
 \cA_{i+1}\coloneqq
    (\cA_i\setminus\cB_i)\cup\{M_i\}.
 \label{eq:insideout-message}
\end{equation}
The evaluator may apply the projections during computation, without first
materializing $M_i'$.  Choosing $\cP_i=\varnothing$ gives ordinary variable
elimination. A projection may change the message on assignments that cannot contribute
to the remaining query.  

\paragraph{Factorized representation.} \InsideOut computes a  \emph{factorized representation} of an FAQ answer, that is,
$Q:D_Y\to K$ is a collection of factors $\mathcal A$,
with $S_F\subseteq Y$ for every $F\in\mathcal A$, such that
\[
    Q(\mathbf x_Y)
    =
    \bigotimes_{F\in\mathcal A}
    F(\mathbf x_{S_F})
\]
for every $\mathbf x_Y\in D_Y$.

\begin{algorithm}
\caption{\InsideOut}
\label{alg:insideout}
\begin{algorithmic}[1]
\Require Factors $\mathcal F$, free variables $Y$, and an order
         $z_1,\ldots,z_m$ of the bound variables
\Ensure FAQ $Q$ in factorized representation.
\State $\mathcal A \gets \mathcal F$
\For{$i=1,\ldots,m$}
    \State $\mathcal B_i \gets
        \{F\in\mathcal A \mid z_i\in S_F\}$
    \State $U_i \gets
        \{z_i\}\cup\bigcup_{F\in\mathcal B_i} S_F$
    \State $R_i \gets U_i\setminus\{z_i\}$
    \State Choose
        $\mathcal P_i \subseteq
        \{F_{/U_i} \mid F\in\mathcal A\setminus\mathcal B_i,\,
        S_F\cap U_i\neq\emptyset\}$
        \Comment{set $\mathcal P_i=\emptyset$ for non-sparse evaluation}
    \State Define the message FAQ
        $\displaystyle
        M_i(\mathbf x_{R_i})
        :=
        \bigoplus_{x_{z_i}\in D_{z_i}}
        \left(
          \bigotimes_{F\in\mathcal B_i}
          F(\mathbf x_{S_F})
        \right)
        \otimes
        \left(
          \bigotimes_{P\in\mathcal P_i}
          P(\mathbf x_{S_P})
        \right)$
    \State $G_i \gets \Call{BagEval}{\mathcal M_i}$
    \State $\mathcal A \gets
        (\mathcal A\setminus\mathcal B_i)\uplus\{G_i\}$
\EndFor
\State \Return
    $\displaystyle
    Q(\mathbf x_Y)
    :=
    \bigotimes_{F\in\mathcal A}
    F(\mathbf x_{S_F})$ in factorized representation.
\end{algorithmic}
\end{algorithm}
    
\paragraph{Batch elimination.}  Let $\cA$ be the current indexed collection of active factors and let $Z$ be a
nonempty set of remaining bound variables. We can evaluate FAQs also by eliminating variables in batch.  Define
$\cB_Z\coloneqq
       \{F\in\cA\mid S_F\cap Z\ne\emptyset\}$,
    $U_Z\coloneqq Z\cup\bigcup_{F\in\cB_Z}S_F$,
$R_Z\coloneqq U_Z\setminus Z$. 
The block message is then 
$M_Z'(\mathbf x_{R_Z})
    \coloneqq
    \bigoplus_{\mathbf z\in D_Z}
       \bigotimes_{F\in\cB_Z}
          F\bigl((\mathbf z,\mathbf x_{R_Z})|_{S_F}\bigr)$.
Including $Z$ in $U_Z$ ensures that variables occurring in no active factor
are still aggregated over their domains.
As in~\eqref{eq:insideout-message}, the evaluator may multiply this message
by selected projections $G_{/U_Z}$ from factors
$G\in\cA\setminus\cB_Z$.  It then replaces $\cB_Z$ by the resulting message.
Every active factor containing an eliminated variable must participate.
All outside factors are independent of $Y$, so the same distributivity
and indicator-projection argument proves correctness.

\section{Explicit evaluation and fractional hypertree width}
\label{sec:explicit-tables}
We first consider FAQs whose factors are represented by indexed listings of
their nonzero entries. Dense tables can also be used, with the cost of
scanning their entries charged separately. Bounds count arithmetic on
explicit semiring values at unit cost and exclude the cost of finding a
decomposition.

As before, let $Q(\mathbf x_Y)
    =
    \bigoplus_{\mathbf x_{X\setminus Y}}
       \bigotimes_{F\in\cF}F(\mathbf x_{S_F})$,
and let $\cH_Q=(X,\{S_F:F\in\cF\})$ be its hypergraph. Its
\emph{primal graph} has vertex set $X$, with two variables adjacent
when they occur in a common factor scope.

A \emph{tree decomposition} of $\cH_Q$ is a tree $T$ with a bag
$B_t\subseteq X$ at each node $t$, such that every factor scope is
contained in some bag and, for every variable $x$, the bags containing
$x$ form a connected subtree. The latter is the
\emph{running-intersection property}. The width is
$\max_t |B_t|-1$, and minimizing this quantity gives the
\emph{treewidth} $\tw(\cH_Q)$, equivalently the treewidth of
$\operatorname{primal}(\cH_Q)$~\citep{DBLP:journals/jal/RobertsonS86}.

For free variables $Y$, a decomposition is 
\emph{$Y$-connex}~\citep{Bagan+2007}
if
there is a connected subtree $T_Y$ whose bags have union exactly
$Y$. We call $T_Y$ the \emph{core}. For a scalar query
($Y=\emptyset$), every decomposition is $Y$-connex and no core is
needed.

Put
\[
N=\max\bigl(\{1\}\cup\{|\supp(F)|:F\in\cF\}\bigr).
\]
For a bag $B$, its \emph{fractional cover number}
$\rho^*_{\cH_Q}(B)$ is the minimum of
$\sum_{F\in\cF}\lambda_F$ over $\lambda_F\geq0$ such that
\[
\sum_{F:x\in S_F}\lambda_F\geq1
\qquad\text{for every }x\in B.
\]
The maximum of these numbers over all bags is the decomposition's
\emph{fractional width}~\citep{DBLP:journals/talg/GroheM14}. 
Minimizing over decompositions gives the
\emph{fractional hypertree width} $\fhtw(\cH_Q)$, or
$\fhtw_Y(\cH_Q)$ when the decomposition is required to be
$Y$-connex.
Since every variable is contained in some factor scope,
\[
    \rho^*_{\cH_Q}(B)\leq |B|,
    \qquad\text{and hence}\qquad
    \fhtw(\cH_Q)\leq\tw(\cH_Q)+1.
\]
Note that without loss of generality, we can assume that on a hypergraph with $n$ vertices, any fractional hypertree decomposition has at most $n$ nodes.
\begin{proposition}[FAQ width bound \citep{AboKhamisNgoRudra2016}]
\label{prop:sparse-decomposition}
Let $Q$ be an FAQ with support-listed input factors of size at most
$N$. Given a $Y$-connex decomposition with 
fractional width $w$, it is possible to compute a factorized representation of $Q$
in
\[
    \widetilde O\!\left(N^w\right)\poly(|Q|)
\]
time. In particular, when $Y=\emptyset$, this computes the scalar value of $Q$. 
\end{proposition}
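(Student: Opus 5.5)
We follow \InsideOut{} along the given decomposition. Root the $Y$-connex decomposition $T$ at a node of the core $T_Y$ (any node if $Y=\emptyset$). Assign each factor $F\in\cF$ to one bag $B_t$ with $S_F\subseteq B_t$. The elimination order processes the bound variables leaves-first. A variable $x\notin Y$ is eliminated at the topmost node $t_x$ whose bag contains it. By the running-intersection property, the nodes containing $x$ form a subtree, and $t_x$ is its root. Since the core has union exactly $Y$ and is connected and contains the root, no bound variable lies in a core bag. Hence every $t_x$ lies strictly outside $T_Y$, and every message produced at a non-core node has scope contained in the separator with its parent. We use batch elimination: at node $t$, we eliminate all $Z_t\coloneqq\{x\mid t_x=t\}$ simultaneously. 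By induction on the height, the active factors involving $Z_t$ are the input factors assigned to $t$ together with the messages from the children of $t$. All of their scopes lie in $B_t$, so $U_{Z_t}\subseteq B_t$. Correctness of each step is the distributivity argument already given for \InsideOut{} and batch elimination.

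The central step is bounding the work at a node $t$. For this we compute the message as a filtered join. Multiply by the indicator projections $G_{/B_t}$ of every input factor $G$ whose scope meets $B_t$. These projections are allowed in the \BucketEval{} step; we use that they change the message only on assignments that cannot extend to a supported global tuple. Let $\lambda$ be an optimal fractional edge cover of $B_t$, of weight at most $w$. The set of assignments to $B_t$ that are consistent with all these projections is contained in a join whose size is at most $\prod_F|\supp(F)|^{\lambda_F}\le N^{w}$, by the AGM bound. A worst-case optimal join algorithm (Generic Join or Leapfrog Triejoin) enumerates this join in $\widetilde O(N^w)\poly(|Q|)$ time. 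For each enumerated tuple we look up the child messages and the assigned factors in indexed listings. We then multiply and aggregate over $Z_t$, at unit-cost semiring arithmetic. Each child message is itself supported within a projection of such a join, so it has at most $N^w$ entries and is stored as a sorted listing. With at most $|X|$ nodes and $\poly(|Q|)$ factors per node, the total cost is $\widetilde O(N^w)\poly(|Q|)$.

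The main obstacle is the justification for filtering the message by outside projections. We must check that this restricts the message only on assignments that contribute zero to the final answer. We also must check that the resulting messages are bounded by $N^w$, so that they do not grow beyond the input factors. We would argue both points together. A tuple of $B_t$ that extends to a nonzero term of the full product must project into every supp$(G)$, and therefore lies in the join. Conversely, every stored message entry lies in that join, so the AGM bound applies to it directly. After processing all non-core nodes, the remaining active factors (the input factors assigned to core nodes and the messages entering the core) all have scopes contained in $Y$. Their product is the required factorized representation. When $Y=\emptyset$, the final messages are constants, and multiplying them gives the scalar value of $Q$.
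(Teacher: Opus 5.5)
Your proposal is correct and is essentially the argument behind the \InsideOut{} guarantee of \citet{AboKhamisNgoRudra2016}, which the paper cites without reproducing: you eliminate along the $Y$-connex decomposition, compute each bag as a worst-case optimal join of indicator projections of the input factors, and bound that join by the AGM bound $N^{\rho^*}\le N^w$. One claim needs correcting. Projections of input factors already consumed at or below $t$ are not among the projections the paper's \BucketEval{} permits, since it allows only projections of currently active outside factors. Your pointwise argument, that multiplying the full product by any $G_{/B_t}$ changes no term, does justify them, and they are exactly what lets the AGM bound use the input sizes $N$ rather than the message sizes.
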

Here, and elsewhere, $\widetilde O(\cdot)$ suppresses factors
polylogarithmic in $N$.
 If the semiring has no zero divisors, this factorized representation can
 be expanded into a support listing of $Q$ in additional
 $\widetilde O\!\left(|\supp(Q)|\right)\poly(|Q|)$
 time. 
Consequently, given an optimal decomposition, scalar evaluation
takes $
\widetilde O\!\left(N^{\fhtw(\cH_Q)}\right)\poly(|Q|)$,
while for free variables, over a semiring without zero divisors, the
corresponding output-sensitive bound is
\begin{equation}
 \widetilde O\!\left(
 N^{\fhtw_Y(\cH_Q)}+|\supp(Q)|
 \right)\poly(|Q|).
 \label{eq:insideout-fhtwF-bound}
\end{equation}

\paragraph{Recovering the Markov--Shi bound.}
Recall the \emph{circuit graph} $G_C$: its vertices are the gates of $C$
and one boundary node for each input and output, and its edges are the
wire segments. We retain parallel edges and, as in
Section~\ref{subsec:circuits-as-faqs}, pad untouched wires with identity
gates. This is the underlying multigraph of the circuit tensor
network, also denoted by $N_C$ below.

A landmark result of~\citet{MarkovShi2008} states that circuits of bounded
gate arity can be simulated in time
$2^{O(\tw(G_C))}\poly(|C|)$. We recover this bound directly from
Proposition~\ref{prop:sparse-decomposition} by comparing fractional
hypertree width with the treewidth of $G_C$. We first consider amplitudes,
then extend the encoding to measurement probabilities below.

Let $\cH$ be the structural hypergraph of the scalar FAQ
$Q_C$: its vertices are the variables and its hyperedges are
the factor scopes. In graph theoretic terminology, the circuit graph $G_C$, with parallel edges identified, is the \emph{line graph} of $\cH$. The following kind of comparison is well-known and easy to verify (see e.g.,~\citet[Lemma~4.6]{DBLP:conf/pods/Lanzinger22}). Note that this holds generally for hypergraphs and their line graphs and is not specific to our application to quantum circuits.

\begin{proposition}
\label{lem:raw-faq-width-comparison}
Let $C$ be a quantum circuit with circuit graph $G_C$, and let
$\bx,\by$ be computational-basis input and output strings. Let $\cH$
be the structural hypergraph of the scalar FAQ $Q_C$. Then
\begin{equation}
 \fhtw(\cH)\leq\tw(G_C)+1.
 \label{eq:raw-faq-circuit-width}
\end{equation}
\end{proposition}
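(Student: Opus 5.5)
We convert an optimal tree decomposition of $G_C$ into a tree decomposition of $\cH$ whose bags each have a fractional cover number of at most $\tw(G_C)+1$. The key observation is the duality between $\cH$ and $G_C$. The vertices of $\cH$ are the segment variables, which are exactly the edges of $G_C$. The hyperedges of $\cH$ are the gate scopes $X_t$ together with the singleton boundary scopes, which correspond exactly to the vertices of $G_C$: the gate nodes and the boundary nodes. A variable $z$ lies in a hyperedge $S_F$ precisely when the edge $z$ of $G_C$ is incident to the node $F$. Each segment variable is an edge with exactly two endpoints, so it lies in exactly two hyperedges. Parallel edges become distinct variables lying in the same two hyperedges; this causes no difficulty.

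Let $(T,(B_t)_t)$ be a tree decomposition of $G_C$ of width $k=\tw(G_C)$. For each node $t$, define the bag
\[
 B'_t\coloneqq\{z\in X \mid \text{both endpoints of the edge } z \text{ lie in } B_t\}.
\]
First we check that this is a tree decomposition of $\cH$.

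\emph{Running intersection.} The nodes $t$ with $z\in B'_t$ are the intersection of the two subtrees belonging to the endpoints $u,v$ of $z$. Since $uv$ is an edge of $G_C$, these subtrees meet, and the intersection of two intersecting subtrees of a tree is again a subtree.

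\emph{Covering of hyperedges.} This is the main obstacle. The hyperedge of a gate $g$ consists of all edges incident to $g$, but $B_t\ni g$ need not contain all neighbours of $g$. To fix this we enlarge the bags: set
\[
 B''_t\coloneqq\{z \mid \text{some endpoint of } z \text{ lies in } B_t\}.
\]
Now any bag containing $g$ contains all of $X_g$, and singleton boundary hyperedges are covered as well. Running intersection still holds. The nodes with $z\in B''_t$ form the union of the subtrees for $u$ and $v$, and this union is connected because the two subtrees meet, $uv$ being an edge.

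\emph{Width.} Fix a node $t$. Give weight $\lambda_F=1$ to the hyperedge of every vertex $F\in B_t$ and weight $0$ to all other hyperedges. Every $z\in B''_t$ has an endpoint $F\in B_t$, so $z\in S_F$ and $z$ is covered. Hence $\rho^*_{\cH}(B''_t)\le|B_t|\le k+1$, and therefore $\fhtw(\cH)\le\tw(G_C)+1$.

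The remaining checks are routine. Every variable lies in some bag because every edge has an endpoint in some bag. With parallel edges identified, $G_C$ is the line graph of $\cH$, which is exactly the general statement that $\fhtw$ of a hypergraph is bounded by the treewidth of its dual graph plus one, as in \citet[Lemma~4.6]{DBLP:conf/pods/Lanzinger22}.
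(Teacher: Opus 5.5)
Your proof is correct and is the standard argument the paper invokes for this proposition (citing Lanzinger's Lemma~4.6). You replace each bag of a width-$k$ tree decomposition of $G_C$ by all wire segments incident to its nodes, which is the same bag construction the paper spells out in the proof of Proposition~\ref{prop:tlsw-comparisons}, and then cover each new bag integrally by the at most $k+1$ scopes of the nodes in the old bag. The first attempt with $B'_t$ is an unnecessary detour, but the final construction $B''_t$ and its verification are complete.
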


The standard argument constructively turns a width-$k$ tree decomposition of $G_C$ into a tree decomposition of $\cH$ and
fractional width at most $k+1$. Applying
Proposition~\ref{prop:sparse-decomposition} to the latter thus computes
$\me{\by}{U_C}{\bx}$ in
\[
 \widetilde O\!\left(N^{k+1}\right)\poly(|C|)
 =2^{O(k)}\poly(|C|).
\]
Taking $k=O(\tw(G_C)+1)$ clearly recovers the
Markov--Shi bound immediately. More interestingly, fractional hypertree width can be strictly lower even in simple circuits (see example below). Thus by simply viewing the problem as a CQ we already obtain polynomial speedup over Markov--Shi.

The same argument computes measurement probabilities, including
marginals. Pair the amplitude and conjugate-amplitude indices on each
wire into a single variable $(z,z')\in\{0,1\}^2$, and replace each gate
factor by $G[\mathbf c;\mathbf b]\,
\overline{G[\mathbf c';\mathbf b']}$. An input bit $x$ contributes
$\iv{z=z'=x}$; an output fixed to $y$ contributes $\iv{z=z'=y}$,
whereas an unobserved output contributes $\iv{z=z'}$, tracing out that
qubit. Summing all paired variables gives the probability of the
specified outcomes. More generally, a single-qubit measurement effect
$E$ contributes $\langle z'|E|z\rangle$, and a quantum channel
$\mathcal E$ contributes the gate factor
$\langle\mathbf c|\mathcal E(|\mathbf b\rangle\langle\mathbf b'|)
|\mathbf c'\rangle$. This is in fact the density-matrix encoding used
by~\citet{MarkovShi2008} and it works effectively unchanged in our setting. The factor scopes, and hence $\cH$ and $G_C$,
are unchanged, only the wire domain grows from two to four values so the respective widths are also unchanged.

We see that in the simplest setting, where gate behaviour is represented explicitly and directly, viewing circuits as FAQs provides an immediate theoretical benefit. Existing bounds turn out to be special cases of more general theorems, and we can obtain real improvements in runtime by considering functional dependencies and fractional edge covers. 
The following example illustrates these graphs and hypergraphs and the corresponding width measures.

\begin{example}\label{ex:tw-vs-fractional}
Take three data qubits and ancillas $a_{12},a_{13},a_{23}$, initially
$\ket0$. Apply a Hadamard to each data qubit, then a Toffoli with data
controls $i,j$ and target $a_{ij}$ for each pair $(i,j)$. Postselect
all ancillas to $\ket0$, leaving the data outputs free:
\begin{figure}[H]
\centering
\begin{minipage}[c]{0.31\linewidth}
\centering
\resizebox{\linewidth}{!}{%
\begin{quantikz}[column sep=0.35cm, row sep=0.25cm]
\lstick{$\ket 0$}         & \gate{H} & \ctrl{3} & \ctrl{4} & \qw      & \rstick{$\bra{y_1}$} \\
\lstick{$\ket 0$}         & \gate{H} & \ctrl{2} & \qw      & \ctrl{4} & \rstick{$\bra{y_2}$} \\
\lstick{$\ket 0$}         & \gate{H} & \qw      & \ctrl{2} & \ctrl{3} & \rstick{$\bra{y_3}$} \\
\lstick{$\ket 0_{a_{12}}$} & \qw      & \targ{}  & \qw      & \qw      & \rstick{$\bra 0$} \\
\lstick{$\ket 0_{a_{13}}$} & \qw      & \qw      & \targ{}  & \qw      & \rstick{$\bra 0$} \\
\lstick{$\ket 0_{a_{23}}$} & \qw      & \qw      & \qw      & \targ{}  & \rstick{$\bra 0$}
\end{quantikz}}
\end{minipage}\hfill
\begin{minipage}[c]{0.29\linewidth}
\centering
{\small Markov--Shi graph $G_C$\par}
\medskip
\begin{tikzpicture}[x=0.85cm,y=0.85cm,
  gate/.style={draw,fill=white,inner sep=2pt,font=\scriptsize},
  boundary/.style={circle,fill,inner sep=1.2pt}]
\node[gate] (t12) at (0,0) {$T_{12}$};
\node[gate] (t13) at (2,0) {$T_{13}$};
\node[gate] (t23) at (1,-1.6) {$T_{23}$};
\node[gate] (h1) at (0,0.75) {$H_1$};
\node[gate] (h2) at (-0.8,0) {$H_2$};
\node[gate] (h3) at (2,0.75) {$H_3$};
\draw (t12)--(t13)--(t23)--(t12)
      (h1)--(t12) (h2)--(t12) (h3)--(t13);
\foreach \name/\x/\y in {h1/0/1.2,h2/-1.3/0,h3/2/1.2,
  t12/-0.65/-0.6,t12/-0.3/-1.0,
  t13/2.8/0,t13/2.65/-0.6,t13/2.3/-1.0,
  t23/0.2/-1.85,t23/0.65/-2.25,t23/1.35/-2.25,t23/1.8/-1.85}
  \draw (\name)--(\x,\y) node[boundary] {};
\end{tikzpicture}
\end{minipage}\hfill
\begin{minipage}[c]{0.28\linewidth}
\centering
{\small Reduced FAQ hypergraph $\cH_{\widehat Q}$\par}
\begin{tikzpicture}[x=1cm,y=1cm,
  faqvertex/.style={inner sep=2pt,font=\small}]
\node[faqvertex] (y1) at (0,0) {$y_1$};
\node[faqvertex] (y2) at (2.2,0) {$y_2$};
\node[faqvertex] (y3) at (1.1,-1.35) {$y_3$};
\draw (y1)--node[above,font=\scriptsize] {$T_{12}$} (y2)
      (y1)--node[below left,font=\scriptsize] {$T_{13}$} (y3)
      (y2)--node[below right,font=\scriptsize] {$T_{23}$} (y3);
\draw (y1) ellipse [x radius=0.25cm,y radius=0.26cm];
\draw (y2) ellipse [x radius=0.25cm,y radius=0.26cm];
\draw (y3) ellipse [x radius=0.25cm,y radius=0.26cm];
\node[font=\scriptsize,above] at (0,0.26) {$H_1$};
\node[font=\scriptsize,above] at (2.2,0.26) {$H_2$};
\node[font=\scriptsize,below] at (1.1,-1.61) {$H_3$};
\end{tikzpicture}
\end{minipage}
\caption{In $G_C$, $T_{ij}$ denotes the Toffoli with control bits $i,j$. Black leaves denote input/output boundary nodes.
In $\cH_{\widehat Q}$, vertices are the free variables; circles are unary Hadamard scopes,
and lines are binary scopes of the $T_{ij}$ factors; recall that constants are irrelevant.}
\label{fig:toffoli-faq-graphs}
\end{figure}

We first inspect the circuit graph $G_C$: since the graph contains a clique (i.e., a triangle) of 
the vertices $y_1,y_2,y_3$, any tree decomposition must have a bag that contains 
all three vertices of the clique. Hence, the treewidth is $2$.

Now consider the FAQ representation of this circuit. 
Let the variables of the intermediate states at the first qubit 
be $a_1,a_2$, at the second qubit $b_1,b_2$, and at the third qubit $c_1,c_2$.
Then the raw FAQ has the form 
\begin{eqnarray*}
 Q(y_1,y_2,y_3)  &=& 
 H(0,a_1) H(0,b_1) H(0,c_1)\\
 && T(a_1,b_1,0,a_2,b_2,0)
 T(a_2,c_1,0,y_1,c_2,0) T(b_2,c_2,0,y_2,y_3,0)
\end{eqnarray*}

\noindent
We observe that the Toffoli factors satisfy the following FDs: 
$a_1 = a_2 = y_1$, $b_1 = b_2 = y_2$, and $c_1 = c_2 = y_3$.
We can thus simplify the FAQ as follows:

\begin{eqnarray*}
 \widehat Q(y_1,y_2,y_3) &=& 
 H(0,y_1) H(0,y_2) H(0,y_3)\\
 && T(y_1,y_2,0,y_1,y_2,0)
 T(y_1,y_3,0,y_1,y_3,0) T(y_2,y_3,0,y_2,y_3,0)
\end{eqnarray*}

\noindent
A fractional hypertree decomposition of the simplified FAQ has a single bag and 
the fractional cover number of the variables $y_1,y_2,y_3$ is $\frac{3}{2}$, i.e., 
we may put weight $\tfrac12$ on each of the 3 $T_{ij}$ hyperedges.
Hence $\fhtw_Y(\cH_{\widehat Q})=\frac{3}{2}$.

The zeros at the last input and last output position act as selections, 
which reduce the support of each $T$-factor to 3 tuples. 
Hence, the support of $\widehat Q(y_1,y_2,y_3)$ has at most $3^{3/2}$ tuples and it
can also be computed in this order of magnitude by the OutsideIn algorithm 
of \cite{AboKhamisNgoRudra2016}

This extends to $n\geq3$ data qubits: prepare $\ket0^{\otimes n}$,
use a fresh zero-initialized ancilla and a Toffoli for each pair $(i,j)$,
and postselect all ancillas to $\ket0$. The reduced FAQ is
\[
 \widehat Q_n(\mathbf y)
 =\prod_{i=1}^n H(0,y_i) \prod_{1\leq i<j\leq n}T(y_i,y_j,0,y_i,y_j,0).
\]
The circuit graph is the clique $K_n$. Again, 
every tree decomposition must have a bag containing all vertices of the clique.
Hence, $\tw = n-1$.
On the other hand, every vertex in the hypergraph of the simplified FAQ 
is contained in  $n-1$ $T$-hyperedges. Hence, by putting weight $\frac{1}{n-1}$ on every $T$-edge
we get a fractional edge cover of weight $\frac{n}{2}$, which is optimal in this case. 
Hence $\fhtw_Y(\cH_{\widehat Q_n})=\frac{n}{2}$. \end{example}

\section{Phase representations and rank-width}
\label{sec:compressed-boundaries}
We now return to the circuit FAQ $Q_C$.  For a circuit $C$ over a fixed
quadratically $r$-compatible gate set, we evaluate its FD-reduced form
$\widehat Q_C$ over a semiring that retains phases formally.
The atoms have values in $K_r$.  We eliminate branches of a
rank-decomposition by batch \InsideOut{}, maintaining small
$K_r$-expressions for the resulting factors.
We show that, for fixed computational-basis boundaries and given a
rank-decomposition of the resulting phase graph of width $k$, this
yields an \InsideOut{} evaluation of $Q_C$ in time
$O(4^k\poly(|C|))$.

\subsection{Quantum circuits over compatible gate sets}

\paragraph{The phase semiring.}
For an integer $r\geq2$, write $\omega_r\coloneqq e^{2\pi i/r}$ and
consider the \emph{phase semiring}
$K_r\coloneqq\mathbb R[\Z_r]$, whose elements are formal sums
$\sum_{a\in\Z_r}n_a\phases{a}$ with $n_a\in\mathbb R$.
Equality is coefficientwise, so an element can be represented by its
coefficient vector $(n_0,\ldots,n_{r-1})$.
Addition is componentwise:
\[
    \Bigl(\sum_a n_a\phases{a}\Bigr)
    \oplus
    \Bigl(\sum_a m_a\phases{a}\Bigr)
    \coloneqq
    \sum_a(n_a+m_a)\phases{a},
\]
and multiplication is convolution modulo $r$:
\[
    \Bigl(\sum_a n_a\phases{a}\Bigr)
    \otimes
    \Bigl(\sum_b m_b\phases{b}\Bigr)
    \coloneqq
    \sum_{c\in\Z_r}
       \Bigl(
          \sum_{\substack{a,b\in\Z_r\\a+b=c}}n_am_b
       \Bigr)\phases{c}.
\]
In particular,
$\phases{a}\otimes\phases{b}=\phases{a+b}$, with exponents added
in $\Z_r$.
The additive identity $\mathbf0$ is the zero coefficient vector,
whereas the multiplicative identity is
$\mathbf1\coloneqq\phases{0}$, whose coefficient vector is
$(1,0,\ldots,0)$.  Thus $\phases{0}$ is not zero.
These operations make $K_r$ a commutative semiring, and indeed a ring.
We identify a real scalar $n$ with $n\phases{0}$.

The \emph{evaluation map}
\begin{equation}
    \nu_r:K_r\longrightarrow\C,
    \qquad
    \nu_r\!\left(\sum_a n_a\phases{a}\right)
    \coloneqq
    \sum_a n_a\omega_r^a
    \label{eq:phase-evaluation}
\end{equation}
is a semiring homomorphism: it preserves addition, multiplication,
and their identities.
Formal equality is stronger than equality after evaluation.
For example, $\tfrac12(\phases{0}\oplus\phases{4})\in K_8$ has the
nonzero coefficient vector
$(\tfrac12,0,0,0,\tfrac12,0,0,0)$, yet its image under $\nu_8$ is zero.

\paragraph{Circuits over $r$-compatible gates.}
Let $X$ be a finite set of Boolean variables.
For $S\subseteq X$, write
$x_S\coloneqq\prod_{x\in S}x$, with $x_\varnothing=1$.
A \emph{$K_r$-expression} over $X$ is built from real constants and
phase terms $\phases{a x_S}$ using $\oplus$ and $\otimes$.
Its \emph{size} $\lVert E\rVert$ is the size of its expression tree,
counting variable occurrences within phase terms.  More precisely,
\[
\begin{aligned}
    \lVert c\rVert&=1 &&(c\in\mathbb R),\\
    \lVert\phases{a x_S}\rVert&=1+|S|,\\
    \lVert E_1\circ E_2\rVert
       &=1+\lVert E_1\rVert+\lVert E_2\rVert
       &&(\circ\in\{\oplus,\otimes\}).
\end{aligned}
\]
Repeated subexpressions are counted at each occurrence; no expansion
by distributivity is implicit in this definition.
A constant in $K_r$ has an expression of size $O(r)$ using its
$r$ real coefficients.  Substituting Boolean constants does not
increase expression size, and removing constant identities can only
decrease it.
For a phase polynomial $p=\sum_j a_jx_{S_j}$ over $\Z_r$, we use
$\phases{p}$ as shorthand for $\bigotimes_j\phases{a_jx_{S_j}}$;
its size includes this monomial encoding.  Thus
$\phases{p}\otimes\phases{q}=\phases{p+q}$, with coefficients
collected modulo $r$, and equal phase terms simplify by
$\alpha\phases{p}\oplus\beta\phases{p}=(\alpha+\beta)\phases{p}$.

For gate representatives, we use multiplicative expressions of the form
\[
    E(X)=\lambda\bigotimes_{j=1}^m\phases{a_jx_{S_j}},
    \qquad
    \lambda>0,\quad a_j\in\Z_r,\quad S_j\subseteq X.
\]
For an expression of this form, each assignment $\bx\in\{0,1\}^X$
gives a positive multiple of a single formal basis element in $K_r$.
For even $r$, we call $E$ \emph{quadratic} if $|S_j|\leq2$ for every
$j$ and $a_j=r/2$ whenever $|S_j|=2$.
Thus constant and linear terms may have arbitrary coefficients in
$\Z_r$, whereas every quadratic term has coefficient $r/2$.

For a gate $G$, let
\[
    \widehat F_G:\{0,1\}^{S_G}\longrightarrow\C
\]
be its reduced gate factor, obtained by the FD-based variable
identifications of Section~\ref{sec:circuit-faq}.
Here $S_G$ is the \emph{reduced scope}: input and output legs known
to carry the same bit are represented by a single variable.
These identifications are propagated throughout the circuit FAQ.

We call $G$ \emph{$r$-compatible} if there exists a single
multiplicative $K_r$-expression $E_G$ of the above form over $S_G$ such that
\[
    \widehat F_G(\bx)=\nu_r(E_G(\bx))
    \qquad
    \text{for every }\bx\in\{0,1\}^{S_G}.
\]
For even $r$, we call $G$ \emph{quadratically $r$-compatible} if
$E_G$ can be chosen quadratic.
Representability is therefore required on the entire reduced Boolean
domain, rather than on the original input--output factor.
In particular, the reduced factor has full support, even when the
original gate factor does not.

A finite gate set $\mathcal G$ is \emph{$r$-compatible}, respectively
\emph{quadratically $r$-compatible}, if every gate in $\mathcal G$
has the corresponding property.
The same terminology applies to a circuit
$C=(G_1,\ldots,G_T)$ on $\cQ$.
We fix the reduced gate factors and their witnessing expressions
$E_G$ with the gate set.

A chosen expression $E_G$ determines the lift
\[
    \widetilde F_G:\{0,1\}^{S_G}\longrightarrow K_r,
    \qquad
    \widetilde F_G(\bx)\coloneqq E_G(\bx).
\]
By construction,
\[
    \nu_r\circ\widetilde F_G=\widehat F_G.
\]
The lift changes only the annotations of the reduced gate relation.
Its scope and full support remain unchanged.

\begin{example}
The gate set $\{H,T,\CZ\}$ is quadratically $8$-compatible.
For $T$, the FD reduction identifies its input and output variables.
For $\CZ$, it identifies each output variable with the corresponding
input variable.  No identification is made for $H$.
The reduced gate factors are therefore
\[
\begin{aligned}
    \widehat F_H(b,c)
        &=2^{-1/2}\omega_8^{4bc},\\
    \widehat F_T(b)
        &=\omega_8^b,\\
    \widehat F_{\CZ}(b_1,b_2)
        &=\omega_8^{4b_1b_2}.
\end{aligned}
\]
They are witnessed by the quadratic $K_8$ expressions
\[
\begin{aligned}
    E_H(b,c)
        &=2^{-1/2}\phases{4bc},\\
    E_T(b)
        &=\phases{b},\\
    E_{\CZ}(b_1,b_2)
        &=\phases{4b_1b_2}.
\end{aligned}
\]
Thus $H$ and $\CZ$ have binary reduced factors, while $T$ has a unary
reduced factor.  Each has full support on its reduced scope.
Figure~\ref{fig:phase-gate-lifts} lists their values and lifts. We refer to \citet{WangEtAl2025,Cheng+2025} for more examples. 
\end{example}

\begin{figure}[t]
\centering
\small
\setlength{\arraycolsep}{3pt}
\renewcommand{\arraystretch}{1}
\[
\begin{array}[t]{c|c|c|c}
  \multicolumn{4}{c}{H}\\
  b & c & \widehat F_H & \widetilde F_H\\
  \hline
  0 & 0 & 2^{-1/2}  & 2^{-1/2}\phases{0}\\
  0 & 1 & 2^{-1/2}  & 2^{-1/2}\phases{0}\\
  1 & 0 & 2^{-1/2}  & 2^{-1/2}\phases{0}\\
  1 & 1 & -2^{-1/2} & 2^{-1/2}\phases{4}
\end{array}
\qquad
\begin{array}[t]{c|c|c}
  \multicolumn{3}{c}{T}\\
  b & \widehat F_T & \widetilde F_T\\
  \hline
  0 & 1        & \phases{0}\\
  1 & \omega_8 & \phases{1}
\end{array}
\]
\[
\begin{array}{c|c|c|c}
  \multicolumn{4}{c}{\CZ}\\
  b_1 & b_2 & \widehat F_{\CZ} & \widetilde F_{\CZ}\\
  \hline
  0 & 0 & 1  & \phases{0}\\
  0 & 1 & 1  & \phases{0}\\
  1 & 0 & 1  & \phases{0}\\
  1 & 1 & -1 & \phases{4}
\end{array}
\]
\caption{FD-reduced gate factors over $\C$ and their lifts to $K_8$.
All Boolean assignments to the displayed reduced scopes are supported.}
\label{fig:phase-gate-lifts}
\end{figure}

\paragraph{Evaluation over the phase semiring.}
Let $C$ be a circuit over an $r$-compatible gate set, and let
$\widehat Q_C$ be its FD-reduced scalar query.  To evaluate this query
over $K_r$, we replace each reduced gate factor by its chosen phase
representation and interpret aggregation and multiplication as the
operations of $K_r$.  Any remaining $0$--$1$ boundary relations are
represented by replacing $0$ and $1$ with $\mathbf0$ and $\mathbf1$.
The query expression, factor scopes, input supports, and bound variables
remain unchanged.  Only the factor annotations and their semiring
interpretation change.

Since $\nu_r$ commutes with finite sums and products, if this evaluation
returns $a\in K_r$, then
\begin{equation}
    \nu_r(a)=Q_C=\me{\by}{U_C}{\bx}.
    \label{eq:circuit-faq-phase-readout}
\end{equation}
Thus we apply $\nu_r$ only after aggregation.

For fixed computational-basis boundaries, the FD reduction
propagates the prescribed bits through the identified variables.
Inconsistent prescriptions give zero immediately.  Otherwise,
$\widehat Q_C$ has the same scalar value as $Q_C$ over $\C$.
For a query $Q$ with free boundary variables, we retain these variables
and any required constraints in its reduced form $\widehat Q$.
Evaluating $\widehat Q$ over $K_r$ and applying $\nu_r$ to each resulting
value recovers $Q$ pointwise on the original free-variable domain.

Recall that \InsideOut{} is independent of the factor representation:
it multiplies factors and aggregates variables in the chosen semiring.
Here the lifted gate factors are stored as $K_r$-expressions, and
each \InsideOut{} update constructs an expression for the $K_r$ values
of the new atom.
Although the input supports are preserved, intermediate supports
may differ from those obtained over $\C$, because a nonzero formal
value can evaluate to zero.

\subsection{The phase graph and rank-width}
\paragraph{The phase graph.}
Let $C$ be a quadratically $r$-compatible circuit and let $\eta=r/2$.
Let $\mathcal F$ be the indexed collection of lifted gate atoms after
FD reduction and substitution of consistent fixed computational-basis
boundaries, with variable set $X$.
Write the chosen representative of each atom $F(\bx_{S_F})$ as
\begin{equation}
    E_F(\bx_{S_F})
       =\lambda_F\phases{c_F}
          \otimes\bigotimes_{v\in S_F}\phases{b_{F,v}x_v}
          \otimes\bigotimes_{\{u,v\}\subseteq S_F}
                       \phases{\eta a_{F,uv}x_ux_v},
    \label{eq:phase-atom-representation}
\end{equation}
where $\lambda_F>0$, $c_F,b_{F,v}\in\Z_r$, and
$a_{F,uv}\in\{0,1\}$.  Extend $b_{F,v}$ and $a_{F,uv}$ by zero
outside the indicated scope.
The use of phase expressions over Boolean path variables follows the
sum-over-paths approach~\citep{DawsonEtAl2005,Amy2019}.
Define a symmetric matrix $A$ with zero diagonal by
\begin{equation}
    A[u,v]\coloneqq\sum_{F\in\mathcal F}a_{F,uv}\quad\text{in }\F_2
       \quad(u\ne v).
    \label{eq:phase-coefficient-data}
\end{equation}
The \emph{phase graph} is $P_C=(X,E)$ with adjacency matrix $A$.
Repeated quadratic terms cancel because $2\eta=0$ in $\Z_r$.
Isolated variables remain vertices.  The matrix $A$ records the parity
of the quadratic interactions.  The atoms retain their original scopes
and representative expressions.

\paragraph{Rank-width.}
\label{sec:cut-rank}
Let $G=(V,E)$ be a simple undirected graph and write $A_G$ for its
adjacency matrix over $\F_2$.
For $X\subseteq V$, let $\overline X\coloneqq V\setminus X$ and
define the \emph{cut-rank}
\[
    \gamma_G(X)\coloneqq
       \operatorname{rank}_{\F_2}A_G[X,\overline X].
\]
Then $\gamma_G(X)=\gamma_G(\overline X)$.

For $|V|\leq1$, we set $\rw(G)=\lrw(G)=0$.
Otherwise, a \emph{rank-decomposition} of $G$ is a pair $(T,L)$,
where $T$ is a subcubic tree, meaning that every vertex has degree
at most three, and $L:V\to\Leaves(T)$ is a bijection.
We require the normalization that $T$ has no degree-two vertices.
Removing an edge $e\in E(T)$ partitions the leaves and, through $L$,
induces a bipartition $(X_e,V\setminus X_e)$ of $V$.
The \emph{width} of $(T,L)$ is $\max_{e\in E(T)}\gamma_G(X_e)$,
and the \emph{rank-width} $\rw(G)$ is the minimum of this quantity
over all rank-decompositions of $G$~\citep{OumSeymour2006}.
Suppressing degree-two vertices preserves all induced cuts and leaves $O(|V|)$ nodes.
The \emph{linear rank-width} $\lrw(G)$ is the minimum, over vertex
orderings, of the maximum cut-rank of a prefix of the ordering.
Every such ordering gives a rank-decomposition of at most the same
width, so $\rw(G)\leq\lrw(G)$.

\subsection{Evaluation and rank-width bound}
The rank-decomposition specifies the branches to eliminate.  We show
that each branch produces a factor with a small $K_r$-expression.

\begin{theorem}
\label{thm:rw-sop-evaluation}
Fix an even integer $r\geq2$ and a finite quadratically
$r$-compatible gate set $\mathcal G$, together with its chosen
witnessing expressions.
Let $C$ be a circuit over $\mathcal G$, and let $(T,L)$ be a
rank-decomposition of its phase graph $P_C$ of width $k$.
Then $\widehat Q_C$ can be evaluated over $K_r$ using \InsideOut{}
in $O\bigl(4^k\poly(|C|)\bigr)$ time.
\end{theorem}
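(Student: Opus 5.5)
Root the rank-decomposition $(T,L)$ at an arbitrary internal node and process it bottom-up with batch \InsideOut{}. For each tree node $t$, let $V_t\subseteq X$ be the variables at the leaves below $t$, and let $\overline{V_t}=X\setminus V_t$. When $t$ is reached, eliminate the block $Z=V_t$ minus the variables already eliminated at the children. Each variable is therefore eliminated exactly once, at its leaf or when the branches are merged. The aim is the invariant that, after processing $t$, the total message $M_t$ summarizing all consumed atoms has a $K_r$-expression of size $O(2^{k}\poly(|C|))$. Moreover, $M_t$ depends on $\bx_{\overline{V_t}}$ only through the linear forms $\ell_1,\ldots,\ell_{k_t}$ over $\F_2$ given by a row basis of $A[V_t,\overline{V_t}]$, where $k_t=\gamma_{P_C}(V_t)\leq k$.

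The key algebraic step is a normal form for the branch message. Every atom consumed inside the branch has the form \eqref{eq:phase-atom-representation}. Some of these atoms also touch outside variables. Their quadratic terms crossing the cut combine into $\phases{\eta\sum_{u\in V_t}x_u\,(A[u,\overline{V_t}]\bx_{\overline{V_t}})}$, using $2\eta=0$ to pass from integer coefficients to the $\F_2$ matrix $A$. Their linear and purely outside terms factor out of the branch sum. I will handle this by assigning each atom, and its outside-only terms, to the block containing the first of its variables to be eliminated, and by using that batch elimination requires all atoms touching $Z$ to participate. The cross term depends on $\bx_{\overline{V_t}}$ only via $\bm s=A[V_t,\overline{V_t}]\bx_{\overline{V_t}}\in\F_2^{V_t}$, and $\bm s$ lies in a column space of dimension $k_t$. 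Writing $\bm s=B\bm\ell(\bx_{\overline{V_t}})$ with $\bm\ell\in\F_2^{k_t}$, the message becomes
\[
M_t(\bx_{\overline{V_t}})=\bigoplus_{\bm\beta\in\{0,1\}^{k_t}}\Bigl(\prod_{j}\iv{\ell_j(\bx)=\beta_j}\Bigr)\, c_{t,\bm\beta},
\]
where $c_{t,\bm\beta}\in K_r$ is a constant. To stay within $K_r$-expressions, I will write the indicator of a parity as $\tfrac12(\phases{0}\oplus\phases{\eta\,\ell_j\oplus\beta_j})$ in product form. This is a legal expression but formal, since $\phases{\eta\,\ell}$ for a sum $\ell$ splits into a product of $\phases{\eta x_v}$, which evaluates correctly. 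Alternatively, I will keep the $2^{k_t}$ table of constants $c_{t,\bm\beta}$ together with the matrix $B$ as the representation. Either way, the size is $O(2^{k}\poly(|C|))$.

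For the merge at node $t$ with children $t_1,t_2$, the constants are
\[
c_{t,\bm\beta}=\bigoplus_{\bm\beta_1,\bm\beta_2}\bigl[\text{compatibility}\bigr]\,c_{t_1,\bm\beta_1}\otimes c_{t_2,\bm\beta_2}\otimes(\text{internal }V_{t_1}\!\times V_{t_2}\text{ phase}).
\]
The crossing quadratic form between the two children is itself determined by $(\bm\beta_1,\bm\beta_2)$ once we restrict to the cuts. Here I rely on the standard rank-width identity: the forms $\ell^{(1)}$ restricted to $V_{t_2}$, together with the outside parts, span in total a space of dimension bounded by the three cut-ranks. A pair $(\bm\beta_1,\bm\beta_2)$ determines $\bm\beta$ linearly, so the merge enumerates at most $2^{k}\cdot 2^{k}=4^k$ pairs, each with $O(r)$ work in $K_r$. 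I expect this linear-algebra bookkeeping to be the main obstacle: showing that the cross-child interaction term is a function only of the child parity labels, and that the parent labels are an $\F_2$-linear image of them. I would prove it by choosing bases compatible with the three cuts at $t$ (Gaussian elimination, $\poly(|C|)$ time). Leaves are trivial, with $k_t\leq1$.

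At the root, $\overline{V_t}=\emptyset$, so the single constant is the value of $\widehat Q_C$ in $K_r$, and \eqref{eq:circuit-faq-phase-readout} gives the amplitude. There are $O(|X|)=O(|C|)$ tree nodes, each costing $O(4^k\poly(|C|))$. FD reduction and the computation of $A$ are polynomial preprocessing, which gives the bound.
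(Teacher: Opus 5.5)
There is a genuine gap in the merge, and the merge is exactly the step you leave open. You represent $M_t$ by its \emph{values} $c_{t,\bm\beta}$, indexed by $\bm\beta=\bm\ell(\bx_{\overline{V_t}})$, that is, after $V_t$ has already been summed out. At a node with children $t_1,t_2$, write $\ba_i$ for assignments to $V_{t_i}$ and $\by$ for $\bx_{\overline{V_t}}$. Then $\bm\beta_1$ is the value of linear forms in $(\ba_2,\by)$, and $\bm\beta_2$ is the value of forms in $(\ba_1,\by)$. But $c_{t_1,\bm\beta_1}\otimes c_{t_2,\bm\beta_2}$ has summed over $\ba_1$ and $\ba_2$ independently, including assignments inconsistent with the other child's label. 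The cross phase $\eta\,\ba_1^{\mathsf T}A[V_{t_1},V_{t_2}]\ba_2$ depends on exactly these summed-out assignments. So your displayed merge does not compute $M_t$ from value tables.

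A concrete case shows the problem. Take $V_{t_1}=\{u\}$ and $V_{t_2}=\{v\}$ at the root, with phase $\phases{\mu u+\kappa v+\eta uv}$, and assign the crossing atom to $t_1$. The answer is $c_{t_1,0}\oplus c_{t_1,1}\otimes\phases{\kappa}$. This pairs the $t_1$-value at label $v$ with the $t_2$-summand \emph{at the same} $v$, before $v$ is summed. No product of the form $c_{t_1,\beta_1}\otimes c_{t_2}$ equals this in $K_r$. In general you need, for the other child, partial sums grouped by the signature $\ba_2^{\mathsf T}A[V_{t_2},\overline{V_{t_2}}]$ of the eliminated assignment, and a value table no longer contains that information.

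Your indicator encoding has a related defect. In $K_r$ we have $\tfrac12(\phases{0}\oplus\phases{\eta})\neq\mathbf 0$, so the encoding preserves only the $\nu_r$-image and would not evaluate $\widehat Q_C$ over $K_r$. An explicit table with genuine $\mathbf 0/\mathbf 1$ indicators avoids this.

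Your merge formula becomes correct once the labels are placed on the eliminated side. Store
$M_t=\phases{\text{outside-only terms}}\otimes\bigoplus_{\bm\sigma}c_{t,\bm\sigma}\otimes\phases{\eta\,\bm\sigma\cdot\bx_{\overline{V_t}}}$,
where $c_{t,\bm\sigma}$ sums the inside phases over all $\ba$ with $\ba^{\mathsf T}A[V_t,\overline{V_t}]=\bm\sigma$; there are at most $2^k$ such classes. Two facts then make the merge work. First, the cross phase is constant on pairs of classes. Moving $\ba_2$ within its class changes it by a vector in the left kernel of $A[V_{t_2},\overline{V_{t_2}}]$, whose columns include $V_{t_1}$, and symmetrically for $\ba_1$. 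Second, the parent label is $\bm\sigma_1|_{\overline{V_t}}+\bm\sigma_2|_{\overline{V_t}}$. No span identity over three cut-ranks is needed; the factor $4^k$ is just the number of class pairs.

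This is essentially the paper's proof. For each of the at most $r2^k$ distinct bracketed terms $\phases{f_t(\ba,R_t)}$, it keeps one representative assignment and its multiplicity; each term is determined by the residue $f_t(\ba,\mathbf 0)$ and the signature. At a parent it substitutes each representative pair into the product $E_t$ of the atoms of $\mathcal B_t=\mathcal B_u\cup\mathcal B_v$, with each atom counted once. This also settles your double-counting bookkeeping for atoms touching both children. Representative independence holds because two child assignments with the same bracketed term give the same substitution into the child product. The remaining atoms of $\mathcal B_t$ do not involve that child's variables.
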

\begin{proof}
Let $X=V(P_C)$ and $A=A_{P_C}$.
Root the supplied rank-decomposition by subdividing an edge, and let
$X_t$ be the variables in the branch below $t$.
We eliminate these branches recursively by batch \InsideOut{}.
For each branch $t$, let $\mathcal B_t$ be the collection of original
atom occurrences touching $X_t$, let $E_t$ be the product of their
original $K_r$-expressions, and let
$R_t=(\bigcup_{F\in\mathcal B_t}S_F)\setminus X_t$.
Write $M_t(R_t)$ for the factor obtained by aggregating $E_t$ over
$X_t$.  In each product, every original atom occurrence is included
once.  Let $\lambda_t=\prod_{F\in\mathcal B_t}\lambda_F$.

We prove by induction that each branch update produces a factor
$M_t(R_t)$ represented by a $K_r$-expression of size
$2^k\poly(|C|)$ with at most $r2^k$ distinct bracketed
terms.  We simplify by multiplying phase terms and
combining equal phase terms using $K_r$ arithmetic.  The resulting
$K_r$-expression consists of terms $\alpha\phases{p}$ with distinct $p$.

For each distinct bracketed term we also retain one generating
assignment to $X_t$ and its multiplicity, the number of assignments
producing that term.  A term with multiplicity $m$ has coefficient
$\lambda_t m$ in $M_t$.  These data accompany the $K_r$-expression
when the message is passed to its parent.

At a leaf $t$ with variable $x$, the update multiplies the
$K_r$-expressions of the atoms touching $x$, substitutes $x=0$ and
$x=1$, and adds:
\[
    M_t(R_t)=E_t(0,R_t)\oplus E_t(1,R_t).
\]
Since the gate expressions have fixed size, multiplying the atoms
touching $x$ gives an expression $E_t$ of size $O(|C|)$.
Substituting a constant and simplifying cannot increase this size.
The two substitutions followed by addition therefore give an
expression of size at most $2\lVert E_t\rVert+1=O(|C|)$.  This proves the leaf case for every $k\geq0$.

At the leaf, each Boolean assignment initially has multiplicity one;
if the resulting bracketed terms agree, we add their multiplicities
and retain either assignment.  There are at most $2\leq r2^k$ entries.

Now let $t$ have children $u,v$, with messages $M_u(R_u)$ and
$M_v(R_v)$ represented by $K_r$-expressions
and the retained assignments and multiplicities above.
Since $X_t=X_u\mathbin{\dot\cup}X_v$ and
$\mathcal B_t=\mathcal B_u\cup\mathcal B_v$, the parent update is
\begin{equation}
    M_t(R_t)
       =\bigoplus_{\ba\in\{0,1\}^{X_u}}
          \bigoplus_{\bb\in\{0,1\}^{X_v}}
             E_t(\ba,\bb,R_t).
    \label{eq:rw-parent-update}
\end{equation}
We compute this update from the terms of $M_u$ and $M_v$, together
with their retained assignments and multiplicities.  For each pair
of terms, one from each child, take their representatives $\ba,\bb$ and multiplicities $m_u,m_v$,
substitute $(\ba,\bb)$ into $E_t$, and add $m_um_v$ to the
multiplicity of the resulting bracketed term.  Retain $(\ba,\bb)$
as its representative if the term is new.  Multiplying the final
multiplicities by $\lambda_t$ gives the coefficients of $M_t$.
Thus the child messages determine the representative pairs and their
weights; substitution into $E_t$ determines the parent term for each
pair.  The resulting $M_t$, with its retained assignments and
multiplicities, is then available for the next parent update.

To justify this computation, suppose two assignments to $X_u$
produce the same bracketed term in the child expression.  Their
substitutions into $E_u$ therefore agree.  Every atom in
$\mathcal B_t\setminus\mathcal B_u$ is independent of $X_u$, so
their substitutions into $E_t$ also agree, for every assignment to
$X_v\cup R_t$.  The same argument applies to $v$.  Consequently,
each pair of child entries represents $m_um_v$ assignments producing
the same parent term.  This proves that the computation evaluates
\eqref{eq:rw-parent-update}, including when an original atom touches
both children.

For the size bound, let $f_t$ be the sum modulo $r$ of the arguments
of all bracketed phase terms in the original multiplicative input
$K_r$-expressions for the atoms in $\mathcal B_t$, before aggregation.
For every assignment $\ba$ to $X_t$, quadraticity gives
\[
\begin{aligned}
    f_t(\ba,\by)
       ={}&f_t(\mathbf0,\by)
           +f_t(\ba,\mathbf0)-f_t(\mathbf0,\mathbf0)\\
          &+\frac r2\sum_{v\in R_t}y_v
             \left(\sum_{u\in X_t}A[u,v]a_u\bmod2\right)
             \pmod r.
\end{aligned}
\]
Indeed, the first line accounts for all monomials contained in
$R_t$ or $X_t$, counting the constant monomial once.  The second
line accounts for every monomial crossing from $X_t$ to $R_t$,
whose coefficient is $r/2$ times the corresponding entry of $A$.

The expression $f_t(\mathbf0,R_t)$ is fixed for this branch.
The residue $f_t(\ba,\mathbf0)-f_t(\mathbf0,\mathbf0)$ has at most
$r$ values.  The vector of parities in the second line is
$\ba^{\mathsf T}A[X_t,R_t]$ over $\F_2$ and has at most $2^k$
values, since $A[X_t,R_t]$ is a submatrix of a cut matrix of rank
at most $k$.
Consequently, after collecting coefficients modulo $r$, at most
$r2^k$ different expressions $\phases{f_t(\ba,R_t)}$ occur.
Simplifying equal phase terms using $K_r$ arithmetic leaves at most
this many terms $\alpha\phases{f_t(\ba,R_t)}$.
Substitution and collection of phase terms do not increase the
size of $E_t$, so each such term has size at most
$\lVert E_t\rVert+O(1)$.  Including the addition nodes, the entire
$K_r$-expression for $M_t(R_t)$ has size
\[
    O\!\left(r2^k(\lVert E_t\rVert+1)\right)
       =2^k\poly(|C|).
\]
This proves the claimed bound on the size after simplification.

Each child has at most $r2^k$ entries by the same bound (and the
leaf bound above), so a parent update examines at most $r^2 4^k$
pairs.  Substituting a representative assignment into $E_t$,
collecting coefficients modulo $r$, and looking up the resulting
bracketed term in a dictionary take $\poly(|C|)$ time per pair.
The stored assignments have length at most $|X|$, and the
multiplicities require only the arithmetic operations described
above.  Thus each branch update takes $O(4^k\poly(|C|))$ time.

The $O(|C|)$ branches therefore give
the claimed total running time.  At the root, multiplying the
resulting scalar by the original empty-scope atoms gives
$\widehat Q_C$.  For $|X|\leq1$, direct aggregation suffices.
\end{proof}

\begin{corollary}
\label{cor:rw-amplitude}
Under the hypotheses of Theorem~\ref{thm:rw-sop-evaluation},
the amplitude $\me{y}{U_C}{x}$ can be computed in
$O\bigl(4^k\poly(|C|)\bigr)$ time, given the
rank-decomposition $(T,L)$.
\end{corollary}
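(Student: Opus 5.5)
The corollary should follow from Theorem~\ref{thm:rw-sop-evaluation} combined with the readout identity~\eqref{eq:circuit-faq-phase-readout}. I would carry it out in three steps.

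\emph{Step 1: build the reduced query.} Starting from $C$ and the fixed boundaries $\bx,\by$, construct the FD-reduced scalar query $\widehat Q_C$. This means performing the identity substitutions of Section~\ref{subsec:circuit-fds}, propagating the prescribed boundary bits through the identified variables, and replacing each reduced gate factor by its fixed witnessing expression $E_G$. Because the gate set $\mathcal G$ is finite and fixed, each lift has constant size. The identifications can be computed with a union--find pass over the wire segments. Hence this preprocessing takes $\poly(|C|)$ time and produces $O(|C|)$ atoms.

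\emph{Step 2: handle inconsistent boundaries.} If two identified variables receive conflicting prescribed bits, the amplitude is $0$. This is detected during the same pass, and the algorithm outputs $0$.

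\emph{Step 3: evaluate and read out.} Otherwise, apply Theorem~\ref{thm:rw-sop-evaluation} with the supplied rank-decomposition $(T,L)$ of $P_C$ of width $k$. It returns $a\in K_r$ in $O(4^k\poly(|C|))$ time. By~\eqref{eq:circuit-faq-phase-readout}, $\nu_r(a)=\me{\by}{U_C}{\bx}$. The value $a$ is given by at most $r$ real coefficients, or by a short expression that simplifies to these coefficients. Computing $\nu_r(a)=\sum_b n_b\omega_r^b$ therefore takes $O(r)$ arithmetic operations, which is constant for fixed $r$. Summing the three stages gives the claimed bound.

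The one point needing care is the arithmetic model. The coefficient of $a$ has the form $\lambda_t$ times an integer multiplicity, and the multiplicity may be as large as $2^{|X|}$. This is acceptable in two ways. Under the paper's unit-cost convention for semiring arithmetic it is free. Under bit complexity, such integers still have $O(|X|)$ bits, so the cost stays polynomial. I would also note explicitly that the phase graph referred to in the statement is the one built after boundary substitution. This matches the hypothesis of Theorem~\ref{thm:rw-sop-evaluation}, so no width is lost.
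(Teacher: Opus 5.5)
Your proposal is correct and follows the same route as the paper. The paper gives no separate proof: the corollary is read off directly from Theorem~\ref{thm:rw-sop-evaluation}, followed by applying $\nu_r$ to the resulting $K_r$-value via~\eqref{eq:circuit-faq-phase-readout}, with inconsistent boundary prescriptions giving zero immediately. Your additional points (polynomial-time FD preprocessing, using the phase graph after boundary substitution, and the remarks on the arithmetic model) are consistent with the paper's conventions.
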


\begin{remark}
Corollary~\ref{cor:rw-amplitude} concerns a single amplitude with
fixed computational-basis boundaries.
The circuit FAQ also allows some boundary variables to remain free.
For example, for a four-qubit circuit $C$, fixing the output to
$0000$ and leaving the first and last input bits free gives the
amplitude function
\[
    A_C(x,y)\coloneqq\braket{0000|U_C|x00y},
    \qquad x,y\in\{0,1\}.
\]
Eliminating only the bound variables from the reduced query $\widehat Q$ over $K_r$, while retaining any equality condition on $x,y$,
yields a representation of a factor
$M_C:\{0,1\}^2\to K_r$ satisfying
\[
    A_C(x,y)=\nu_r\bigl(M_C(x,y)\bigr).
\]
The four amplitudes can therefore be recovered by substituting the
four assignments to $x,y$ and applying $\nu_r$.
The scalar theorem does not, by itself, bound the size of this
symbolic output representation.
Materializing a full output tensor on $f$ free Boolean variables
requires writing $2^f$ entries.
\end{remark}

\begin{example}
\label{ex:rw-four-clique}
Consider the four-qubit circuit $C$ consisting of a layer of
Hadamard gates, a $T$ gate on each qubit, all six $\CZ$ gates,
and a final layer of Hadamard gates, with input and output fixed
at $0000$.
After reduction, let $v_1,v_2,v_3,v_4$ be the remaining variables.
The input atoms have representatives
\[
\begin{aligned}
    \widetilde H_i^{-}(v_i)
       &=2^{-1/2}\phases{0},
    &\widetilde H_i^{+}(v_i)
       &=2^{-1/2}\phases{0},\\
    \widetilde T_i(v_i)
       &=\phases{v_i},
    &\widetilde{\CZ}_{ij}(v_i,v_j)
       &=\phases{4v_iv_j},
\end{aligned}
\]
for $1\leq i\leq4$ and $1\leq i<j\leq4$, respectively.
The two Hadamard atoms retain their declared scope $\{v_i\}$ even
though their representatives are constant after boundary substitution.
The lifted query is
\[
    \widehat Q_C()
       =\bigoplus_{v_1,v_2,v_3,v_4\in\{0,1\}}
          \left(
             \bigotimes_{i=1}^4
                \bigl(\widetilde H_i^{-}(v_i)
                   \otimes\widetilde H_i^{+}(v_i)
                   \otimes\widetilde T_i(v_i)\bigr)
             \otimes
             \bigotimes_{1\leq i<j\leq4}
                \widetilde{\CZ}_{ij}(v_i,v_j)
          \right).
\]
The phase graph is $P_C=K_4$ and has rank-width one.
Apply \InsideOut{} in the order $v_1,v_2,v_3,v_4$.
The first update consumes the two Hadamard atoms and the $T$ atom
on $v_1$, together with the three incident $\CZ$ atoms, and produces
\[
\begin{aligned}
    M_1(v_2,v_3,v_4)
       &\coloneqq\bigoplus_{v_1\in\{0,1\}}
          \widetilde H_1^{-}(v_1)\otimes\widetilde H_1^{+}(v_1)
          \otimes\widetilde T_1(v_1)
          \otimes\bigotimes_{j=2}^4
                    \widetilde{\CZ}_{1j}(v_1,v_j)\\
       &=\tfrac12\left(
             \phases{0}\oplus\phases{1}
                \otimes\bigotimes_{j=2}^4\phases{4v_j}
          \right).
\end{aligned}
\]
Its representative includes the normalization from the two consumed
Hadamard atoms.  The other input atoms remain active.
The next update is
\[
\begin{aligned}
    M_2(v_3,v_4)
       &\coloneqq\bigoplus_{v_2\in\{0,1\}}
          M_1(v_2,v_3,v_4)
          \otimes\widetilde H_2^{-}(v_2)\otimes\widetilde H_2^{+}(v_2)
          \otimes\widetilde T_2(v_2)
          \otimes\bigotimes_{j=3}^4
                    \widetilde{\CZ}_{2j}(v_2,v_j)\\
       &=\tfrac14\left(
             \mu_0\oplus\mu_1\otimes\phases{4v_3}\otimes\phases{4v_4}
          \right),
\end{aligned}
\]
where $\mu_0=\phases{0}\oplus\phases{6}$ and $\mu_1=2\phases{1}$.
Continuing with the atoms containing $v_3$ gives
\[
\begin{aligned}
    M_3(v_4)
       &\coloneqq\bigoplus_{v_3\in\{0,1\}}
          M_2(v_3,v_4)
          \otimes\widetilde H_3^{-}(v_3)\otimes\widetilde H_3^{+}(v_3)
          \otimes\widetilde T_3(v_3)
          \otimes\widetilde{\CZ}_{34}(v_3,v_4)\\
       &=\tfrac18\left(\theta_0\oplus\theta_1\otimes\phases{4v_4}\right),
\end{aligned}
\]
with
\[
    \theta_0=\mu_0\oplus\phases{5}\otimes\mu_1
              =\phases{0}\oplus3\phases{6},
    \qquad
    \theta_1=\mu_1\oplus\phases{1}\otimes\mu_0
              =3\phases{1}\oplus\phases{7}.
\]
Finally, eliminating $v_4$ consumes the remaining atoms and yields
\[
\begin{aligned}
    a=M_4()
       &\coloneqq\bigoplus_{v_4\in\{0,1\}}
          M_3(v_4)\otimes\widetilde H_4^{-}(v_4)
          \otimes\widetilde H_4^{+}(v_4)\otimes\widetilde T_4(v_4)\\
       &=\tfrac1{16}\bigl(
             \phases{0}\oplus4\phases{1}\oplus\phases{4}
             \oplus6\phases{6}\oplus4\phases{7}
          \bigr).
\end{aligned}
\]
Applying $\nu_8$ gives the amplitude
\[
    \braket{0000|U_C|0000}
       =\frac{1+4\omega_8+\omega_8^4+6\omega_8^6+4\omega_8^7}{16}.
\]
In this example the representatives simplify to two terms
at every nonfinal step.  The same updates can be batched: eliminating
$\{v_1,v_2\}$ produces $M_2$ directly, and then eliminating
$\{v_3,v_4\}$ produces $M_4$.
\end{example}

A rank-width dynamic program was also obtained
by~\citet{deColnet+2026} for \emph{quadratic sum-of-powers} (SOP)
representations.
Such a representation expresses a circuit amplitude, up to a
scalar factor, as a sum of terms $\omega_r^{p(\bx)}$ over Boolean
assignments $\bx$, where $p$ is a quadratic polynomial of the phase form considered here.
In our setting, this form arises simply by collecting the phases
in the product of the reduced gate factors of $\widehat Q_C$:
multiplication of their phase values becomes addition in the
exponent, and the graph recording the quadratic interactions
of $p$ is precisely $P_C$.

Our contribution is to obtain the rank-width bound within the
existing FAQ evaluation framework.
\InsideOut{} evaluates the original lifted atoms of $\widehat Q_C$
over $K_r$.  The rank-decomposition guides the branch batches, whose factor
expressions remain small, and $\nu_r$ recovers
the complex amplitude after the scalar evaluation.
Thus the gate-factor representation and the sum-of-powers
representation are two views of the same query.
The choice of semiring and message representation exposes the phase
structure within the sum-product evaluation scheme.
Earlier, \citet{Cheng+2025} established a linear-rank-width bound
for \textsc{FeynmanDD}, a quantum-circuit simulator based on
decision diagrams.

\section{Tensor compression and separator symmetries}
\label{sec:sopw}
We continue with quadratically $r$-compatible circuits and their lifted
reduced queries over $K_r$.  In addition to considering simply a circuit where the whole interaction graph has low rank-width, we now consider queries where we have many parts that individually have low rank-width, but the whole circuit does not. Instead we show how to combine compact evaluation of these local parts with structural observations on how to combine the individual representations into a compact global representation. Intuitively, we consider a tree decomposition where each bag has low rank-width (in the phase graph of the respective induced subcircuit for the bag). The complexity of classical simulation then depends on both the phase interactions within bags and the symmetries of the messages passed between them.

\begin{example}
\label{ex:compr-two-bags}
We will use the following circuit as a running example in this section. For $k\geq2$, use the circuit blocks of the form~\eqref{eq:standard-circuit-blocks} to
define the circuit on $[k]$
\begin{equation}
    C_k:=\Hblock{k};\Tblock{k};\CZblock{k};
         \Hblock{k};\CZblock{k};\Hblock{k}.
    \label{eq:compr-example-circuit}
\end{equation}
shown below for $k=3$:
\begin{center}
\begin{quantikz}[row sep=0.45cm, column sep=0.35cm]
\lstick{$1$}
    & \gate{H}
    & \gate{T}
    & \ctrl{1}
    & \ctrl{2}
    & \qw
    & \gate{H}
    & \ctrl{1}
    & \ctrl{2}
    & \qw
    & \gate{H}
    & \qw
\\
\lstick{$2$}
    & \gate{H}
    & \gate{T}
    & \control{}
    & \qw
    & \ctrl{1}
    & \gate{H}
    & \control{}
    & \qw
    & \ctrl{1}
    & \gate{H}
    & \qw
\\
\lstick{$3$}
    & \gate{H}
    & \gate{T}
    & \qw
    & \control{}
    & \control{}
    & \gate{H}
    & \qw
    & \control{}
    & \control{}
    & \gate{H}
    & \qw
\end{quantikz}
\end{center}
We compute $\me{0^k}{U_{C_k}}{0^k}$.  On qubit $i$, the identity
dependencies of the diagonal gates identify all wire-segment variables
through $\Tblock{k};\CZblock{k}$ into one variable $a_i$, and all
wire-segment variables through the second $\CZblock{k}$ into one
variable $b_i$.  Thus the three Hadamard occurrences on that qubit have
input--output arguments $(0,a_i)$, $(a_i,b_i)$, and $(b_i,0)$,
respectively.  Set $A=\{a_i:i\in[k]\}$ and $B=\{b_i:i\in[k]\}$.

Consider a tree decomposition of the resulting phase graph with root
bag $B$ and child bag $A\cup B$.  The child bag subcircuit contains the
first two $H$ layers, the $T$ layer, and the first $\CZ$ layer; its
separator is $B$.  With all bag variables free, its $K_8$-valued factor is
\[
    W(\mathbf a,\mathbf b)
       =2^{-k}\phases{
          \sum_i a_i+4\sum_i a_ib_i+4\sum_{i<j}a_ia_j}.
\]
The scalar $2^{-k}$ accounts for its $2k$ Hadamard occurrences.
Summing the $A$ variables gives the outgoing message.
For $S\subseteq B$, put
$\chi_S(\mathbf b):=\bigotimes_{b_i\in S}\phases{4b_i}$.
Expanding the message into phase monomials gives
\[
    M(\mathbf b)
       =\bigoplus_{\mathbf a\in\{0,1\}^k}
          \mu_{\mathbf a}\otimes\chi_{S_{\mathbf a}}(\mathbf b),
    \qquad S_{\mathbf a}:=\{b_i:a_i=1\},\qquad
    \mu_{\mathbf a}
      =2^{-k}\phases{|\mathbf a|+4\binom{|\mathbf a|}{2}}.
\]
Here $|\mathbf a|:=\sum_i a_i$.  The $2^k$ indices
$S_{\mathbf a}$ are distinct, and every $\mu_{\mathbf a}$ is nonzero.
Thus this phase-monomial representation has $2^k$ terms.
\end{example}

The message in the example has $2^k$ terms. We will introduce a tensor representation below and show that the corresponding tensor has an exact tree tensor network with $O(k)$ arrays and bounded
bond dimension.

\subsection{Phase coefficient tensors}
\label{sec:tree-tensor-compression}

We introduce our new notion of a phase coefficient tensor, which stores the coordinates of an atom's $K_r$-values. Multiplication of atoms becomes convolution in the residue coordinate, while aggregation sums over Boolean coordinates.  These are the two operations required by \InsideOut{}, which will again form the core of our algorithm.

Fix an even integer $r\geq2$.  The \emph{phase coefficient tensor} of a factor
$F:\{0,1\}^{J}\to K_r$ is the real array
$A_F:\{0,1\}^{J}\times\Z_r\to\R$ determined by
\begin{equation}
    F(\bx)=\sum_{a\in\Z_r}A_F(\bx,a)\phases a.
    \label{eq:coefficient-tensor-interpretation}
\end{equation}
The coefficients are unique because equality in $K_r$ is coefficientwise.
The residue coordinate records the formal phase basis element.

Addition of factors on a fixed scope adds their coefficient tensors
componentwise.  For coefficient tensors $A,B$ of factors on scopes $J,L$,
the product factor has coefficient tensor $A\ostar B$ on $J\cup L$, where
\begin{equation}
    (A\ostar B)(\bx,a)
    :=\sum_{\substack{b,c\in\Z_r\\b+c=a}}
         A(\bx|_J,b)B(\bx|_L,c),
    \label{eq:coefficient-tensor-product}
\end{equation}
where residue addition is modulo $r$.  Summing a Boolean variable $v\in J$
gives the coefficient tensor $\Sigma_v A$ on $J\setminus\{v\}$, where
\begin{equation}
    (\Sigma_v A)(\bx,a)
    :=\sum_{b\in\{0,1\}}A(\bx\cup\{v\mapsto b\},a).
    \label{eq:coefficient-tensor-marginal}
\end{equation}
For $U\subseteq J$, write $\Sigma_U A$ for successive summation over the
variables in $U$; these sums commute.  Both formulas follow directly from
the operations of $K_r$.

\subsection{Exact TTN representations}

We represent coefficient tensors by exact tree tensor networks.
A coefficient layout specifies how their variable coordinates are
arranged, and the bond dimensions determine the storage and update
costs.  We first establish exact compression and the operations needed
by \InsideOut{}.

We follow the leaf-indexed dimension-tree TTN format of
\citet[Section~2.2]{CerutiKressnerSulz25}.  They use arrays over $\C$
with leaf indices in sets $[n_i]$; below we write the same contraction
over a field $\mathbb F$ with leaf indices in finite sets $D_i$.

\begin{definition}[Tree tensor network]
\label{def:tree-tensor-network}
A \emph{tree tensor network} (TTN) over a field $\mathbb F$ consists of a
rooted tree $\mathcal T$ whose internal vertices have at least two children
and whose leaves are labelled bijectively by a finite nonempty set $I$.
For each $i\in I$, let $D_i$ be a finite nonempty set.  For each
edge $e$ of $\mathcal T$, choose a finite nonempty bond-index set $H_e$.
Identify each leaf with its label in $I$.  At a leaf $i$ with incident edge $e_i$,
choose an array $U_i:D_i\times H_{e_i}\to\mathbb F$.  At each internal
vertex $u$, including the root, choose an array
$C_u:\prod_{e\ni u}H_e\to\mathbb F$.  If the tree has just one
vertex $i$, choose instead an array $U_i:D_i\to\mathbb F$.
These data represent $\Theta:\prod_{i\in I}D_i\to\mathbb F$ by
\begin{equation}
    \Theta(\mathbf x)
    :=\sum_{\mathbf h\in\prod_{e\in E(\mathcal T)}H_e}
       \left(\prod_{i\in I}U_i(x_i,h_{e_i})\right)
       \left(\prod_{u\in V(\mathcal T)\setminus I}
          C_u\bigl((h_e)_{e\ni u}\bigr)\right),
    \label{eq:ttn-contraction}
\end{equation}
For a one-vertex tree, the formula means $\Theta(x_i)=U_i(x_i)$.
Each $h_e$ occurs at the two endpoints of $e$ and is summed over
$H_e$; $d_e:=|H_e|$ is the \emph{bond dimension} of $e$.  The representation is
\emph{exact} for an array $A$ if $\Theta(\mathbf x)=A(\mathbf x)$ for
every $\mathbf x$.
\end{definition}

For a TTN $\mathcal N$ with underlying tree $\mathcal T$ and edge
dimensions $(d_e)_{e\in E(\mathcal T)}$, define its \emph{maximum bond
dimension} by
\[
    \mathsf{bd}(\mathcal N)
    :=\max\bigl(\{1\}\cup\{d_e:e\in E(\mathcal T)\}\bigr).
\]
Thus $\mathsf{bd}(\mathcal N)=1$ when $\mathcal T$ has one vertex.

Let $V$ be the full variable set.  A \emph{coefficient layout} $\mathcal R$
is a rooted tree in which every vertex has at most two children and whose
leaves are labelled bijectively by $V$.
For $J\subseteq V$, let $\mathcal R[J]$ be the minimal subtree containing
the root and the leaves in $J$;
for $J=\emptyset$ it consists only of the root.

For each scope $J\subseteq V$, the layout determines a tree $\mathcal T_J$
for representing coefficient tensors on $J$.  Suppress the unlabelled vertices of
$\mathcal R[J]$ with one child.  If $J\ne\emptyset$, add a new root
whose children are the remaining tree and a new leaf labelled $*$.
If $J=\emptyset$, use the one-vertex tree labelled $*$.  Denote this
rooted tree by $\mathcal T_J$ and observe that its leaves are
$J\mathbin{\dot\cup}\{*\}$.  A TTN $\mathcal N$ on this tree has
scalar field $\R$, leaf domains
$D_v=\{0,1\}$ for $v\in J$ and $D_*=\Z_r$.
The leaf $*$ exposes the residue coordinate.  For a coefficient tensor
$A$ on scope $J$, an exact representation on $\mathcal T_J$ means choosing
bond-index sets and local arrays as in
Definition~\ref{def:tree-tensor-network} whose contraction equals
$A(\bx,a)$.  The tree and leaf domains depend only on the scope and layout;
the bond dimensions and arrays depend on $A$.

Let $A$ be a coefficient tensor on scope $J$.
For each $e\in E(\mathcal R[J])$, let $J_e\subseteq J$ be the labels of
the leaves in the component of $\mathcal R[J]-e$ that does not contain
the root.  Define $\operatorname{Mat}_e(A)$ to be the matrix with rows
indexed by $\bx\in\{0,1\}^{J_e}$, columns indexed by
$(\by,a)\in\{0,1\}^{J\setminus J_e}\times\Z_r$, and entry
$A(\bx\cup\by,a)$ in row $\bx$ and column $(\by,a)$.  Set
\begin{equation}
    D_{\mathcal R}(A)
    :=\max\bigl(\{1\}\cup
       \{\operatorname{rank}_{\R}\operatorname{Mat}_e(A) \mid
                                      e\in E(\mathcal R[J])\}\bigr).
    \label{eq:separator-tree-rank}
\end{equation}
That is, $D_{\mathcal R}(A)$ is the largest matrix rank across a variable edge of the coefficient layout. The residue index is always on the root side of the cut.

We show an analogue of the edge-cut rank characterization for tree tensor
networks of \citet[Theorem~8.8]{YeLim18} in the present leaf-indexed setting,
including exact reduction from a supplied real TTN.

\begin{lemma}
\label{lem:exact-tree-reduction}
Given an exact TTN $\mathcal N$ for $A$ on $\mathcal T_J$, one can
compute either that $A=0$ or an exact TTN $\mathcal N'$ for $A$ on
$\mathcal T_J$ with
$\mathsf{bd}(\mathcal N')\leq\max\{r,D_{\mathcal R}(A)\}$ in
$\poly(|\mathcal R|,r,\mathsf{bd}(\mathcal N))$ time.
\end{lemma}
\begin{proof}
We use two sweeps.  The upward sweep removes redundant coordinates in
each subtree; the downward sweep removes those invisible to the rest of
the tree.  Together they reduce every bond to the rank of its flattening.
Write $B=\mathsf{bd}(\mathcal N)$.  If $J=\emptyset$, the TTN is the
one-vertex array $A(a)$; testing its $r$ entries proves the claim.  Otherwise,
for an edge $e$ directed from a vertex $u$ to its parent, let $L_e$ be the
matrix whose rows are assignments to the leaves below $e$, whose columns are
bond indices at $e$, and whose entries contract the part of the TTN below
$e$.  Let $R_e$ have rows indexed by assignments to the remaining leaves,
columns indexed by the same bond indices, and entries obtained by
contracting the part above $e$ while leaving its bond index open.  The full
flattening across $e$ is $L_eR_e^{\mathsf T}$.  These matrices describe
the invariant; we never form them explicitly.  All computations use the
local arrays.  Each update below replaces the two arrays incident to an
edge while preserving their contraction.  Thus both component maps at
every other edge remain unchanged.

Process the vertices other than the root from the leaves upward.  At $u$,
flatten its local array into a matrix $M_u$ with the parent bond as columns
and its other indices as rows (the Boolean or residue index if $u$ is a leaf,
and the child bonds otherwise).  The previously processed child maps are
injective, so $L_e=(\bigotimes_f L_f)M_u$, with $f$ ranging over the child
edges of $u$, has rank
$s=\operatorname{rank}_{\R}M_u$; at a leaf the tensor product is the
identity on its physical index.  If $s=0$, then $A=0$.  Otherwise compute a
rank factorization $M_u=P_uQ_u$ with $P_u$ having $s$ independent columns.
Use $P_u$ as the new array at $u$, give $e$ bond dimension $s$, and multiply
$Q_u$ into the parent array along its old $e$ index.  This preserves the
contraction and makes the new $L_e$ injective.  After the sweep, the tensor
product of the child maps at the root is injective; hence $A=0$ exactly when
the root array is zero, which we test directly.

Suppose $A\ne0$.  Process the edges from the root downward.  At a parent
vertex $u$ and one of its child edges $e$, flatten the current array at $u$
into a matrix $M_e$ with the $e$ index as columns and all other indices as
rows.  The map from these other indices to the leaves outside $e$ is
injective: it is the tensor product of the already processed complementary
map at the parent edge (or the scalar identity at the root) and the injective
subtree maps at the other child edges.  Thus $R_e$ has the same rank $s$ as
$M_e$.  Factor $M_e=P_eQ_e$, where $P_e$ has $s$ independent columns and
$Q_e$ has $s$ independent rows.  Replace the array at $u$ by $P_e$, give
$e$ dimension $s$, and multiply $Q_e$ into the array at the child of $e$.
The new complementary map is injective, and the new subtree map
$L_eQ_e^{\mathsf T}$ is injective because both factors are.  Since maps
at other edges are unchanged, this maintains all subtree maps injective
and makes the complementary map injective at each processed edge.
At the end both maps at every edge are injective, so its bond dimension is
$\operatorname{rank}_{\R}(L_eR_e^{\mathsf T})$, the rank of the
unchanged full flattening of $A$ across $e$.

Each edge of $\mathcal T_J$ within the variable subtree separates a set
$J_e$ from the remaining variables and the residue index.  Its flattening
has rank at most $D_{\mathcal R}(A)$: the corresponding cut is an edge of
$\mathcal R[J]$ before unary vertices are suppressed.  The two edges at the
new root separate the residue index from all Boolean indices and have rank
at most $r$.  Every local matrix used above has at most $\max\{r,B^2\}$
rows and at most $B$ columns; there are $O(|\mathcal R|)$ such matrices.
Exact rank factorizations and the array updates therefore take
$\poly(|\mathcal R|,r,B)$ time.
\end{proof}

\begin{lemma}
\label{lem:coefficient-tree-operations}
Fix a coefficient layout $\mathcal R$.  Let $A,B$ be coefficient tensors
of factors on scopes $J,L$, and let $v\in J$.  Given exact TTNs
$\mathcal N_A$ for $A$ on $\mathcal T_J$ and
$\mathcal N_B$ for $B$ on $\mathcal T_L$, one can
construct a TTN $\mathcal N_{\ostar}$ for $A\ostar B$ on
$\mathcal T_{J\cup L}$ and a TTN $\mathcal N_\Sigma$ for $\Sigma_vA$
on $\mathcal T_{J\setminus\{v\}}$
such that
\[
    \mathsf{bd}(\mathcal N_{\ostar})
       \leq\mathsf{bd}(\mathcal N_A)\mathsf{bd}(\mathcal N_B),
    \qquad
    \mathsf{bd}(\mathcal N_\Sigma)\leq\mathsf{bd}(\mathcal N_A),
\]
and the construction takes
$\poly(|\mathcal R|,r,
   \mathsf{bd}(\mathcal N_A)\mathsf{bd}(\mathcal N_B))$
time.
\end{lemma}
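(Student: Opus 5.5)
The plan is to handle both operations locally on the fixed layout $\mathcal R$. The key fact is that $\mathcal T_{J\cup L}$, $\mathcal T_J$ and $\mathcal T_L$ are all obtained from the single tree $\mathcal R$ by restriction and suppression of unary vertices. Every edge of $\mathcal T_J$ (respectively $\mathcal T_L$) therefore corresponds to a path in $\mathcal R[J]$ (respectively $\mathcal R[L]$). This path is contained in an edge, or a chain of edges, of $\mathcal T_{J\cup L}$. Before combining anything, I will first \emph{embed} $\mathcal N_A$ into the tree $\mathcal T_{J\cup L}$.

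For the embedding, every vertex of $\mathcal T_{J\cup L}$ that is not present in $\mathcal T_J$ receives an identity array. Concretely, there are three cases.
\begin{itemize}
\item A vertex lying on a suppressed path passes its bond through unchanged.
\item A subtree of $\mathcal T_{J\cup L}$ containing only leaves of $L\setminus J$ gets bond dimension one, with all-ones arrays at its leaves. Since $A$ does not depend on those variables, the constant factor $1$ is correct.
\item The residue leaf $*$ and the new root are reused as they are.
\end{itemize}
This produces a TTN $\widetilde{\mathcal N}_A$ on $\mathcal T_{J\cup L}$ representing $(\bx,a)\mapsto A(\bx|_J,a)$, with every bond dimension at most $\mathsf{bd}(\mathcal N_A)$. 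Do the same for $B$. The case $J=\emptyset$ is handled by treating the one-vertex array $A(a)$ as sitting at $*$ with all Boolean parts constant.

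For the product, I take the two embedded networks on the same tree and form the edgewise tensor product. Bond sets become $H_e^A\times H_e^B$, and each internal array is the Kronecker product $C^A_u\otimes C^B_u$. Each Boolean leaf array becomes the pointwise product $U^A_v(x_v,h)U^B_v(x_v,h')$, since the shared variable index must be identified. Only the residue leaf needs care, because it must implement the convolution in \eqref{eq:coefficient-tensor-product}. There I set
\[U_*((b\text{-side},c\text{-side}),a)\]
equal to the sum of $U^A_*(b,h)\,U^B_*(c,h')$ over all $b,c\in\Z_r$ with $b+c=a$. This is a valid leaf array with physical index $a$ and bond index $(h,h')$. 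Contracting the network then yields the double sum over bonds factorized into the product of the two contractions, followed by the convolution over residues, which is exactly $A\ostar B$. Bond dimensions multiply, and each new array is computable in time polynomial in $r$ and the product bond dimension, over $O(|\mathcal R|)$ vertices.

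For the marginal $\Sigma_vA$, I replace the leaf array $U_v(x_v,h)$ by the vector $u(h)=\sum_{b}U_v(b,h)$ and absorb it into the parent array by contracting along that bond. This leaves $v$'s parent with one fewer child. If the parent becomes unary, I suppress it by merging it with its neighbour (contracting an array with one child into its parent). If instead $J\setminus\{v\}=\emptyset$, I contract everything down to the one-vertex array at $*$. Contraction never increases the remaining bond dimensions, and the resulting tree is exactly $\mathcal T_{J\setminus\{v\}}$, because suppression in $\mathcal R$ commutes with deleting a leaf.

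The main obstacle I expect is the bookkeeping needed to show that the embedded and contracted trees are literally $\mathcal T_{J\cup L}$ and $\mathcal T_{J\setminus\{v\}}$ as defined. This includes the root-with-$*$ convention and the degenerate empty-scope cases. Doing this carefully is what makes the product a pure vertexwise Kronecker construction rather than requiring any re-routing of bonds.
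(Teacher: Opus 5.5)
Your proposal is correct and follows essentially the same route as the paper. You embed both TTNs into $\mathcal T_{J\cup L}$, using identity arrays on the restored path vertices, dimension-one bonds with constant-one arrays on the deleted branches, and placing the empty-scope array at $*$. You then form the paired-index product, with pointwise products at the Boolean leaves and a convolution $U_*(a,(h,h'))=\sum_{b+c=a}U^A_*(b,h)\,U^B_*(c,h')$ at the residue leaf, and you marginalize by summing the leaf into its parent and suppressing the resulting one-child vertex.
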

\begin{proof}
Put $H=J\cup L$.  For either scope $S\subseteq H$, the tree $\mathcal T_S$
is obtained from $\mathcal T_H$ by retaining the leaves $S\cup\{*\}$,
deleting the other branches, and suppressing vertices with one child.
Consequently a TTN on $\mathcal T_S$ extends to $\mathcal T_H$ without
increasing a bond dimension.  Each edge $e$ of $\mathcal T_S$ corresponds
to a path in $\mathcal T_H$: give every edge of that path dimension $d_e$
and put the array $\mathbf 1[p=q]$ on its two path indices at each
restored vertex.  Give all edges in deleted branches dimension one and put
the constant-one array at their vertices.  At vertices already present in
$\mathcal T_S$, retain the original array and ignore any added
dimension-one indices.  If
$S=\emptyset$, put its one-vertex array at $*$, assign dimension one to
every edge, and use all-one arrays elsewhere.  Denote the resulting TTNs on
$\mathcal T_H$ by $\widetilde{\mathcal N}_A$ and
$\widetilde{\mathcal N}_B$; their contractions are respectively
$A(\bx|_J,a)$ and $B(\bx|_L,a)$.

For each edge of $\mathcal T_H$, use the pair $(h_e^A,h_e^B)$ as its new
bond index.  At an internal vertex, multiply the two entries with these
paired indices.  At a Boolean leaf $w$, set
\[
    U_w^{\ostar}(x_w,(h^A,h^B))
      =U_w^A(x_w,h^A)U_w^B(x_w,h^B),
\]
and at the residue leaf set
\[
    U_*^{\ostar}(a,(h^A,h^B))
      =\sum_{\substack{b,c\in\Z_r\\b+c=a}}
          U_*^A(b,h^A)U_*^B(c,h^B).
\]
Expanding the contraction of these arrays and distributing the finite sums
over all paired bond indices gives exactly
$\sum_{b+c=a}A(\bx|_J,b)B(\bx|_L,c)$, which is
$(A\ostar B)(\bx,a)$ by~\eqref{eq:coefficient-tensor-product}.  Each new
bond dimension is the product of the two old dimensions.

For marginalization, replace the array at leaf $v$ of $\mathcal N_A$ by
$s(h)=\sum_{x_v\in\{0,1\}}U_v(x_v,h)$ and contract $s$ into the array at
its parent.  Delete $v$ and its incident edge.  If an unlabelled vertex now
has one child, contract its two-index array into the adjacent array and
suppress that vertex; repeat until the tree is $\mathcal T_{J\setminus\{v\}}$.
When $J=\{v\}$, this leaves the one-vertex array at $*$.  The contraction
of the resulting TTN is
$\sum_{x_v\in\{0,1\}}A(\bx\cup\{v\mapsto x_v\},a)$,
and no surviving bond dimension increases.  All arrays have at most three
bond indices, each of dimension at most
$\mathsf{bd}(\mathcal N_A)\mathsf{bd}(\mathcal N_B)$, and at most one
physical index of size $r$.  There are $O(|\mathcal R|)$ arrays, so both
constructions have the claimed running time.
\end{proof}

\subsection{Local factors on a tree decomposition}

Let $C$ be a quadratically $r$-compatible circuit with consistent fixed
computational-basis boundaries.  Write $G=P_C=(V,E)$ for its phase graph
and define the collected unary coefficients by
\begin{equation}
    b_v\coloneqq\sum_{F\in\mathcal F}b_{F,v}\quad\text{in }\Z_r,
    \qquad v\in V,
    \label{eq:phase-unary-coefficients}
\end{equation}
using the atom coefficients of~\eqref{eq:phase-atom-representation}.
We evaluate the reduced scalar circuit query $\widehat Q_C$ over $K_r$.
In this section we construct local bag factors by assigning the terms
of the input representatives to bags; their product agrees pointwise
with the product of the input atoms.

Fix a rooted tree decomposition of $P_C$ with tree $T$, bags $B_t$, and at
most two children per node.
Also fix a coefficient layout $\mathcal R$ on $V$.
The tree $T$ determines the query updates.  The layout $\mathcal R$
determines how their factors and messages are represented.
For a non-root node $t$ with parent $p(t)$, set
$S_t=B_t\cap B_{p(t)}$ and $U_t=B_t\setminus S_t$.  At the root, set
$S_t=\emptyset$ and $U_t=B_t$.
The \emph{bag subcircuit} $C_t$ is the local query induced by $B_t$,
with separator-only terms assigned above $t$.  It has variables $B_t$, unary coefficients
$b(v)$ for $v\in U_t$, and quadratic interactions of coefficient $r/2$
\[
    E_t:=E(P_C[B_t])\setminus E(P_C[S_t]).
\]
Thus $E_t$ contains the edges within $B_t$ that have at least one
endpoint in $U_t$.  Edges wholly inside the separator are assigned
above $t$.
Assign each gate occurrence's normalization and constant phase to one
bag, counting each contribution exactly once.
Write $\lambda_t\phases{c_t}$ for their product within bag $t$.
Each input normalization and constant phase thus contributes to exactly
one local factor.
The phase graph of $C_t$ is
$P_{C_t}=(B_t,E_t)$.  Let $W_t:\{0,1\}^{B_t}\to K_r$ be the value of
$C_t$ with all bag variables free, and write $\mathsf W_t$ for its
coefficient tensor.
Running intersection assigns each vertex to exactly one $U_t$ and each
edge of $G$ to exactly one $E_t$.

Take $J=B_t$ in the coefficient-layout notation above, so $J_e$ is
defined for every $e\in E(\mathcal R[J])$.  Set
\begin{equation}
    \rho_t(\mathcal R)
    :=\max_{e\in E(\mathcal R[J])}
      \gamma_{P_{C_t}}(J_e),
    \label{eq:local-coefficient-tree-width}
\end{equation}
with maximum zero when the restricted tree has no edges.

We will then use the following bound to establish the initial ``small'' representations on which the algorithm then operates.

\begin{lemma}[Bag tensor bound]
\label{lem:bag-tensor-bound}
The tensor $\mathsf W_t$ has an exact TTN on
$\mathcal T_{B_t}$ with maximum bond dimension at most
$r2^{\rho_t(\mathcal R)}$.  Such a TTN can be obtained in
\[
    r^{O(1)}2^{O(\rho_t(\mathcal R))}
       \poly(|\mathcal R|+|V|+|E|+|C|)
\]
time.
\end{lemma}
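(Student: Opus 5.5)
The plan is to prove the bound in two parts: first show the flattening ranks are small, then build a TTN whose bonds realize those ranks. We write $W_t(\bx)=\lambda_t\phases{c_t+f(\bx)}$, where
\[
f(\bx)=\sum_{v\in U_t}b_vx_v+\tfrac r2\sum_{\{u,v\}\in E_t}x_ux_v \pmod r
\]
is a quadratic phase polynomial on $B_t$. Consequently $\mathsf W_t(\bx,a)=\lambda_t\,\iv{a=c_t+f(\bx)}$: for each $\bx$, exactly one residue carries the coefficient $\lambda_t$.

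For the rank bound, fix an edge $e$ of $\mathcal R[B_t]$ and split $\bx=(\bu,\by)$ with $\bu$ on $J_e$ and $\by$ on $B_t\setminus J_e$. Using the decomposition from the proof of Theorem~\ref{thm:rw-sop-evaluation}, we have
\[
f(\bu,\by)=f(\bu,\mathbf0)+f(\mathbf0,\by)-f(\mathbf0,\mathbf0)+\tfrac r2\,\by\cdot(A_t[J_e,\overline{J_e}]^{\mathsf T}\bu) \pmod r,
\]
where $A_t$ is the adjacency matrix of $P_{C_t}$. Hence the row of $\operatorname{Mat}_e(\mathsf W_t)$ indexed by $\bu$ depends only on two data: the residue $f(\bu,\mathbf0)-f(\mathbf0,\mathbf0)\in\Z_r$ and the parity vector $\bu^{\mathsf T}A_t[J_e,\overline{J_e}]$. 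The parity vector takes at most $2^{\gamma_{P_{C_t}}(J_e)}\leq2^{\rho_t(\mathcal R)}$ values. So the flattening has at most $r2^{\rho_t}$ distinct rows, and its rank is at most that. The two edges at the added root $*$ have rank at most $r$. This bounds $D_{\mathcal R}(\mathsf W_t)$ and all bonds of $\mathcal T_{B_t}$.

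To realize these bonds, we build the TTN as the product of the small atom TTNs using Lemma~\ref{lem:coefficient-tree-operations}, compressing with Lemma~\ref{lem:exact-tree-reduction} after each step. We write $W_t$ as a product of elementary factors:
\begin{itemize}
\item the constant $\lambda_t\phases{c_t}$, a TTN on $\mathcal T_\emptyset$;
\item a unary factor $\phases{b_vx_v}$ for each $v\in U_t$;
\item a binary factor $\phases{\tfrac r2 x_ux_v}$ for each edge in $E_t$.
\end{itemize}
Each elementary factor has an exact TTN of bond dimension at most $r$. For a binary factor, the bond on the path joining $u$, $v$ and $*$ carries the bit $x_u$, and the residue leaf records the phase. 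We multiply these factors in a fixed order, for instance vertex by vertex along an ordering of $B_t$. After each product we apply Lemma~\ref{lem:exact-tree-reduction}.

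The difficulty is that Lemma~\ref{lem:exact-tree-reduction} bounds bonds by $D_{\mathcal R}$ of the current partial product, not of $\mathsf W_t$, so the plan requires that intermediate products never have large ranks. This is resolved because every partial product is again of the form $\lambda\phases{c+g}$, with $g$ quadratic on $B_t$. Its cross terms are the interactions of a subgraph $E'\subseteq E_t$. Since subgraphs can raise cut-rank, we choose the order so that each partial product's interaction graph has cut-ranks bounded by those of $P_{C_t}$. The cleanest way is to add all edges incident to one vertex $w$ at once, as the single factor $\phases{\tfrac r2 x_w\sum_{u\in N(w)}x_u}$. Each partial product's interaction matrix is then $A_t$ restricted to the rows and columns of processed vertices, completed symmetrically. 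If bounding this fails, the fallback is to construct the TTN directly: each edge $e$ of $\mathcal R[B_t]$ gets the bond set $\Z_r\times\operatorname{rowspace}$, indexed by the pair (residue, parity vector) computed from the bits below $e$. Vertex arrays then combine the children's pairs via the same additive decomposition, using rank factorizations of the cut matrices, so every bond has dimension at most $r2^{\rho_t}$. In either case every operation touches arrays of size $r^{O(1)}2^{O(\rho_t)}$, and the polynomial number of steps gives the stated time bound.

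The main obstacle I expect is the local consistency of the direct construction. Expressing a parent edge's parity vector as a linear function of its children's parity vectors requires the row spaces across nested cuts to be compatible, via projection maps of the respective cut matrices. This is the step I would check first.
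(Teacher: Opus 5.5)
Your proof is correct in substance, but your primary route differs from the paper's, and one formula needs fixing. The per-vertex factor must use only already-processed neighbours: $\phases{\tfrac r2 x_w\sum_{u\in N(w)\cap P}x_u}$, with $P$ the set of vertices processed before $w$. With the full neighbourhood, every edge of $E_t$ is multiplied in twice, and since $2\cdot\tfrac r2\equiv0$ in $\Z_r$ the whole quadratic part cancels.

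Also include a trivial factor $\phases{0\cdot x_v}$ for every $v\in B_t$. Then after processing $P$ the scope is exactly $P$, and at the end it is all of $B_t$, even for separator vertices isolated in $P_{C_t}$. With these changes the partial product is the phase function of the induced subgraph $P_{C_t}[P]$, and your ``if bounding this fails'' branch never arises. For each edge $e$ of $\mathcal R[P]$, the matrix $A_t[J_e\cap P,P\setminus J_e]$ is a submatrix of $A_t[J_e,B_t\setminus J_e]$. So your own row-counting argument bounds every intermediate flattening rank by $r2^{\rho_t(\mathcal R)}$. Lemma~\ref{lem:exact-tree-reduction} then keeps all bonds within that bound after every round. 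Each of the $O(|B_t|)$ rounds costs $\poly(|\mathcal R|,r,2^{\rho_t(\mathcal R)})$, which gives the stated time. (A per-vertex factor can need bond dimension $3>r$ when $r=2$; this is harmless.)

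The paper's proof is essentially your fallback, carried out directly with no compression step. On the edge above $u$ it uses $H_u=\Z_r\times\operatorname{im}f_{X_u}$, where $f_X(\bx)=\bx^{\mathsf T}A_H[X,J\setminus X]$ is kept as a full vector in $\F_2^{J\setminus X}$ rather than in compressed row-space coordinates. This removes the obstacle you anticipated: the parent signature is just $s|_{J\setminus X_u}+s'|_{J\setminus X_u}$, a coordinate restriction. Your sketch does not address the cross phase $\tfrac r2\bx_v^{\mathsf T}A_H[X_v,X_w]\bx_w$ between the two children. This term must be shown to depend only on their bond states. The paper computes it as $\tfrac r2 z_v(s)^{\mathsf T}A_H[X_v,X_w]z_w(s')$ from arbitrary representatives of the signatures. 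This is well defined because changing a representative adds an element of $\ker f_{X_v}$, which has zero interaction with $X_w\subseteq J\setminus X_v$ (and symmetrically for $w$).

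Comparing the two: your route is modular, needing only the flattening-rank bound plus the generic multiplication and compression lemmas. The cost is $O(|B_t|)$ rounds of exact real rank factorizations, with no explicit description of the bonds. The paper's construction is explicit and combinatorial: it uses $0/1$ indicator arrays and only $\F_2$ Gaussian elimination, gives bond dimension exactly $r2^{\gamma_{P_{C_t}}(X_u)}$, and never calls Lemma~\ref{lem:exact-tree-reduction}. It also exposes the residue-plus-signature semantics reused in Example~\ref{ex:compr-two-bag-ttn}.
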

\begin{proof}
Each bond in the variable subtree will record two pieces of information
about the variables below it: their internal phase residue and their interaction signature
with the remaining variables.  These suffice to combine two child
subtrees without retaining their Boolean assignments.
Write $J=B_t$, $H=P_{C_t}$, and put $\beta_v=b(v)$ for $v\in U_t$ and
$\beta_v=0$ for $v\in S_t$.  For
$X\subseteq J$ and $\bx\in\{0,1\}^X$, define
\[
    p_X(\bx)
      :=\sum_{v\in X}\beta_vx_v
        +\frac r2\sum_{\{v,w\}\in E(H[X])}x_vx_w
          \quad\text{in }\Z_r,
    \qquad
    f_X(\bx)
      :=\bx^{\mathsf T}A_H[X,J\setminus X]
          \quad\text{in }\F_2^{J\setminus X}.
\]
The coefficient tensor to be represented is therefore
\begin{equation}
    \mathsf W_t(\bx,a)
      =\lambda_t\mathbf 1[a=c_t+p_J(\bx)].
    \label{eq:bag-coefficient-entry}
\end{equation}
If $J=\emptyset$, take the one-vertex TTN with
$U_*(a)=\lambda_t\mathbf 1[a=c_t]$.  Assume henceforth that $J\ne\emptyset$.

Let $K$ be the variable subtree of $\mathcal T_J$, with root $q$; thus
$\mathcal T_J$ is obtained by attaching $K$ and the leaf $*$ to a new root.
For each vertex $u$ of $K$, let $X_u\subseteq J$ be the labels below $u$.
Identify the bond index on the edge from $u$ to its parent with the finite
set
\[
    H_u:=\Z_r\times\operatorname{im}f_{X_u}.
\]
Since $X_q=J$, the second component of $H_q$ is the zero vector.
Identify the bond index at $*$ with $\Z_r$ and set
$U_*(a,h)=\mathbf 1[a=h]$.

We choose the arrays so that, for every assignment
$\bx\in\{0,1\}^{X_u}$, the subtree rooted at $u$ contracts to
\begin{equation}
    \mathbf 1\!\left[
      h_u=\left(p_{X_u}(\bx),f_{X_u}(\bx)\right)\right],
    \label{eq:bag-ttn-invariant}
\end{equation}
with only its parent bond $h_u$ left open.
For a variable leaf $u$ labelled $v$, this determines its array:
\[
    U_v(x,h_u)
      :=\mathbf 1\bigl[
          h_u=(p_{\{v\}}(x),f_{\{v\}}(x))\bigr].
\]
For an internal vertex $u$ of $K$ with children $v,w$, choose, for each
$s\in\operatorname{im}f_{X_v}$ and
$s'\in\operatorname{im}f_{X_w}$, representatives
$z_v(s)\in\F_2^{X_v}$ and $z_w(s')\in\F_2^{X_w}$ with the indicated
signatures.  For $h_v=(a_v,s)$ and $h_w=(a_w,s')$, define
\[
\begin{aligned}
    a_u&:=a_v+a_w+
       \frac r2\,z_v(s)^{\mathsf T}A_H[X_v,X_w]z_w(s')
            &&\text{in }\Z_r,\\
    s_u&:=s|_{J\setminus X_u}+s'|_{J\setminus X_u}
            &&\text{in }\F_2^{J\setminus X_u},\\
    C_u(h_u,h_v,h_w)&:=\mathbf 1[h_u=(a_u,s_u)].
\end{aligned}
\]
The first expression is independent of the representatives: a change in
$z_v(s)$ lies in the kernel of $f_{X_v}$ and hence has zero interaction
with every assignment on $X_w$; the same holds with $v,w$ exchanged.
At the new root, with incident indices $h_q=(a_q,0)$ and $h_*$, set
\[
    C_{\mathrm{new}}(h_q,h_*)
       :=\lambda_t\mathbf 1[h_*=a_q+c_t].
\]

To verify~\eqref{eq:bag-ttn-invariant}, the leaf case holds by construction.
At an internal vertex the child indicators fix their two bond states.
The merge adds their internal phases and the interactions between the
children, and adds their signatures restricted to the remaining variables.
This gives exactly $(p_{X_u}(\bx),f_{X_u}(\bx))$, proving the invariant
by induction.  Contracting the new root and
$U_*$ now gives~\eqref{eq:bag-coefficient-entry}, so the TTN is exact.

For $u\ne q$, the edge above $u$ corresponds to an edge $e$ of
$\mathcal R[J]$ with $X_u=J_e$.  Consequently
$|H_u|=r2^{\gamma_H(X_u)}\leq r2^{\rho_t(\mathcal R)}$.
The two edges incident to the new root have dimension $r$, giving the
claimed maximum.  The index sets and representatives follow from binary
Gaussian elimination.  The tree $\mathcal T_J$ has $O(|J|)$ vertices,
and each array has at most three bond indices of dimension at most
$r2^{\rho_t(\mathcal R)}$.  Constructing $\mathcal T_J$ and enumerating
the entries of these arrays therefore takes the stated time.
\end{proof}

\begin{example}
\label{ex:compr-two-bag-ttn}
Use a coefficient layout $\mathcal R$ in which $a_i$ and $b_i$ are
sibling leaves for each $i\in[k]$, and join these pairs by a binary
tree.  Figure~\ref{fig:compr-example-ttn} shows the resulting TTN for
$k=4$.  Its bonds retain a phase residue $z\in\Z_8$ and one bit
$\pi\in\F_2$ describing the interaction with variables outside the
subtree.  For the child bag $J=A\cup B$, a proper subtree containing the
pairs indexed by $I\subsetneq[k]$ has cross-interaction signature
\[
    \pi_I=\sum_{i\in I}a_i\quad\text{in }\F_2.
\]
Indeed, its only edges to variables outside those pairs are the
clique edges from $a_i$, $i\in I$, to $a_j$, $j\notin I$.
Thus the bond set $H_u$ of Lemma~\ref{lem:bag-tensor-bound} can be
identified with $\Z_8\times\F_2$ on each proper pair subtree.  The
same two-component bond set applies at the individual leaves: the
signature of $a_i$ records its interactions with $b_i$ and the other
$a_j$, while that of $b_i$ records its interaction with $a_i$.

In the figure, the normalizations of $H(0,a_i)$ and $H(a_i,b_i)$ are
placed at $U_{a_i}$ and $C_i$, respectively.  Contraction of each pair
subtree gives
\[
    2^{-1}\mathbf1[z=a_i+4a_ib_i,\ \pi=a_i].
\]
When disjoint groups of pairs $I$ and $L$ meet, the merge array is
\[
    C_{I,L}\bigl((z,\pi),(z_I,\pi_I),(z_L,\pi_L)\bigr)
       :=\mathbf1[\pi=\pi_I+\pi_L]\,
         \mathbf1[z=z_I+z_L+4\pi_I\pi_L],
\]
with residue arithmetic in $\Z_8$ and parity arithmetic in $\F_2$.
The term $4\pi_I\pi_L$ collects the $\CZ$ interactions between the two
groups.  At the top, omit the outgoing parity and use equality arrays
at the new root and residue leaf.  This specializes the construction of
Lemma~\ref{lem:bag-tensor-bound} to the child bag subcircuit, with its
Hadamard normalizations placed at the indicated arrays.

The TTN has $O(k)$ arrays and maximum bond dimension $16$; its two top
bonds have dimension $8$.  By Lemma~\ref{lem:coefficient-tree-operations},
summing the $a_i$ gives an exact TTN for the coefficient tensor of
$M(\mathbf b)$ with maximum bond dimension at most $16$ as well.
Thus the outgoing message has a constant-bond representation despite
its $2^k$ phase monomials.
\end{example}

\begin{figure}[t]
\centering
\begin{tikzpicture}[
    x=1.05cm,y=0.88cm,
    every node/.style={font=\small},
    tensor/.style={draw,rounded corners=2pt,fill=white,
        minimum height=0.48cm,inner sep=4pt},
    weighted/.style={tensor,fill=blue!10},
    readout/.style={tensor,fill=black!5},
    bond/.style={line width=0.65pt},
    physical/.style={line width=0.65pt},
    note/.style={align=left,font=\footnotesize}
]
    \filldraw[fill=orange!5,draw=orange!55!black,dashed,rounded corners=5pt]
        (-0.45,-0.45) rectangle (4.65,3.05);
    \node[note,anchor=south west,text=orange!55!black] at (-0.4,3.08)
        {$I=\{1,2\}$};
    \foreach \i/\x in {1/0,2/2.8,3/5.6,4/8.4} {
        \node[weighted] (a\i) at (\x,0) {$U_{a_\i}$};
        \node[tensor] (b\i) at (\x+1.4,0) {$U_{b_\i}$};
        \draw[physical] (a\i.south) -- ++(0,-0.5)
            node[below] {$a_\i$}
            node[below=15pt,font=\scriptsize] {sum out};
        \draw[physical] (b\i.south) -- ++(0,-0.5)
            node[below] {$b_\i$}
            node[below=15pt,font=\scriptsize] {retain};
        \node[weighted] (p\i) at (\x+0.7,1.25) {$C_\i$};
        \draw[bond] (a\i.north) -- (p\i.south west);
        \draw[bond] (b\i.north) -- (p\i.south east);
    }
    \node[tensor] (g12) at (2.1,2.65) {$C_{\{1\},\{2\}}$};
    \node[tensor] (g34) at (7.7,2.65) {$C_{\{3\},\{4\}}$};
    \draw[bond] (p1.north) -- (g12.south west);
    \draw[bond] (p2.north) -- (g12.south east);
    \draw[bond] (p3.north) -- (g34.south west);
    \draw[bond] (p4.north) -- (g34.south east);

    \node[tensor] (q) at (4.9,4.55) {$C_{\{1,2\},\{3,4\}}$};
    \draw[bond,orange!65!black] (g12.north) -- (q.south west)
        node[midway,right,font=\footnotesize] {$e$};
    \draw[bond] (g34.north) -- (q.south east);
    \node[note,anchor=south west] at (-0.4,4.08) {
        $h_e=(z_I,\pi_I)\in H_e=\Z_8\times\F_2$\\
        $\pi_I=a_1+a_2,\quad d_e=|H_e|=16$
    };
    \node[readout] (root) at (7.2,6.15) {$C_{\mathrm{new}}$};
    \node[readout] (residue) at (9.5,4.55) {$U_*$};
    \draw[bond] (q.north) -- (root.south west)
        node[midway,above left,font=\footnotesize,align=center]
            {$h_q\in H_q=\Z_8$\\$d_q=8$};
    \draw[bond] (residue.north) -- (root.south east)
        node[midway,above right,font=\footnotesize,align=center]
            {$h_*\in H_*=\Z_8$\\$d_*=8$};
    \draw[physical] (residue.east) -- ++(0.6,0)
        node[right,font=\footnotesize,align=left]
            {$z\in D_*=\Z_8$\\residue coordinate};
    \node[note,anchor=north west] at (-0.4,7.6) {
        Leaf arrays: $U_v$; internal arrays: $C_u$\\
        Internal edges: sum over $h_e\in H_e$\\
        Blue arrays: factor $1/\sqrt{2}$
    };

    \node[note,anchor=north west] at (-0.4,-1.95) {
        Open Boolean arguments: $D_{a_i}=D_{b_i}=\{0,1\}$;
        separator $B=\{b_1,b_2,b_3,b_4\}$.
    };
    \node[note,anchor=north west] at (-0.4,-2.55) {
        Pair 1 (above $C_1$): $z_1=a_1+4a_1b_1$, $\pi_1=a_1$.
        \\[4pt]
        Merge with proper output subtree, $h_e=(z,\pi)$:
        \\[2pt]
        $\displaystyle C_{I,L}(h_e,h_{e_I},h_{e_L})
        =\mathbf1[z=z_I+z_L+4\pi_I\pi_L]\,
         \mathbf1[\pi=\pi_I+\pi_L],
         \qquad H_e=H_{e_I}=H_{e_L}=\Z_8\times\F_2.$
        \\[4pt]
        $\displaystyle \mathsf M(\mathbf b,z)
          =\sum_{\mathbf a\in\{0,1\}^4}
             \mathsf W_t(\mathbf a,\mathbf b,z)$
        \quad (sum the $a_i$ legs; maximum bond dimension $\leq16$).
    };
\end{tikzpicture}
\caption{Exact coefficient TTN for the child bag $A\cup B$, with $k=4$
(Example~\ref{ex:compr-two-bag-ttn}).
Each proper pair subtree carries a phase residue and one parity bit.
Summing the $a_i$ legs yields the separator message without increasing
the maximum bond dimension $16$.}
\label{fig:compr-example-ttn}
\end{figure}

\subsection{Updating across separators}
We move on to the final step. We investigate how we can combine individual representations of phase coefficient tensors along a tree decomposition while controlling their representation size (as TTNs).
After processing the subtree rooted at $t$, its contribution is
summarized by a message on $S_t$.  This message includes all factors
consumed below $t$, so the bound for the local factor $W_t$ alone does
not bound its representation size.  Example~\ref{ex:compr-two-bags}
illustrates how a message can have a large explicit expansion, even
though its coefficient tensor has a compact TTN.
To control the bond dimensions of separator messages, we use
permutations of separator variables that preserve the message.
The following lemma turns these symmetries into rank bounds for every
cut of the coefficient layout.

\begin{lemma}[Separator symmetry bound]
\label{lem:separator-exchangeability-rank}
Let $S$ be a finite set, let $A$ be a coefficient tensor on scope $S$, and let
$\mathcal C$ be a partition of $S$.
If $A$ is invariant under permutations of the Boolean coordinates
within each $C\in\mathcal C$, then every coefficient layout
$\mathcal R$ on a variable set containing $S$ satisfies
\[
    D_{\mathcal R}(A)\leq\prod_{C\in\mathcal C}(|C|+1).
\]
\end{lemma}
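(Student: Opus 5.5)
Unwinding the definition~\eqref{eq:separator-tree-rank}, it suffices to fix one edge $e$ of $\mathcal R[S]$ with lower label set $S_e\subseteq S$ and to show that
\[
    \operatorname{rank}_{\R}\operatorname{Mat}_e(A)\leq\prod_{C\in\mathcal C}(|C|+1).
\]
The maximum with $1$ in the definition is harmless, since the right-hand side is at least $1$. The rows of $\operatorname{Mat}_e(A)$ are indexed by $\bx\in\{0,1\}^{S_e}$ and the columns by pairs $(\by,a)$. The strategy is to show that many rows coincide, and then bound the number of distinct rows.

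For each block $C\in\mathcal C$, split it as $C=C'\cup C''$ with $C'=C\cap S_e$ and $C''=C\setminus S_e$. To each row index $\bx$ attach the weight vector $w(\bx)=(|\bx|_{C'}|)_{C\in\mathcal C}$, recording how many ones $\bx$ has in each block. The key claim is that two rows $\bx,\bx'$ with $w(\bx)=w(\bx')$ are identical as rows.

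To prove the claim, use that for each $C$ the restrictions $\bx|_{C'}$ and $\bx'|_{C'}$ have the same number of ones. Hence there is a permutation $\sigma_C$ of $C'$ carrying one to the other. Extend $\sigma_C$ by the identity on $C''$; this is a permutation of $C$. Let $\sigma$ be the product of the $\sigma_C$ over all blocks. By hypothesis $A$ is invariant under $\sigma$. Since $\sigma$ fixes every coordinate outside $S_e$, this gives
\[
    A(\bx\cup\by,a)=A(\bx'\cup\by,a)
\]
for every column index $(\by,a)$. The residue coordinate $a$ is unaffected, because the permutations act only on Boolean coordinates.

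It follows that the number of distinct rows is at most the number of possible weight vectors, namely $\prod_C(|C'|+1)$. This is at most $\prod_C(|C|+1)$, so the rank bound holds for the edge $e$, and taking the maximum over edges gives the lemma.

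The step requiring the most care is the claim itself. One must check that the permutations are applied only within blocks and fix the complement side of the cut, so that $\by$ is untouched. One must also note that the layout's larger variable set plays no role, since $\mathcal R[S]$ and the sets $J_e$ depend only on $S$. No other obstacle is expected; in particular, neither Lemma~\ref{lem:exact-tree-reduction} nor Lemma~\ref{lem:coefficient-tree-operations} is needed here.
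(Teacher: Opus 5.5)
Your proof is correct and follows essentially the same route as the paper. You fix a cut induced by an edge of $\mathcal R[S]$ and note that rows whose Hamming weights agree on each $C\cap S_e$ are related by a block-preserving permutation that fixes the column side, so those rows coincide. The rank is then at most the number of weight tuples, $\prod_{C}(|C\cap S_e|+1)\leq\prod_{C}(|C|+1)$.
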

\begin{proof}
Fix a cut $X\subseteq S$ induced by an edge of $\mathcal R[S]$.
The corresponding coefficient flattening has rows indexed by
$\{0,1\}^X$ and columns indexed by
$\{0,1\}^{S\setminus X}\times\Z_r$.
If two row assignments have the same Hamming
weight on $X\cap C$ for every $C\in\mathcal C$, a permutation within
each $X\cap C$ carries one row assignment to the other while fixing
$S\setminus X$.  The preceding invariance makes their full matrix rows
identical, for every column assignment and residue.  There are at most
$\prod_{C\in\mathcal C}(|X\cap C|+1)$ possible tuples of these
Hamming weights.  The rank is at most the number of distinct rows,
and hence at most $\prod_{C\in\mathcal C}(|C|+1)$.
Taking the maximum over layout edges proves the claim.
The same argument includes $S=\emptyset$, when the empty product is one.
\end{proof}

To describe the factors summarized by a separator message, let $T_t$ be
the subtree of $T$ rooted at $t$ and set
\[
    V_t:=\bigcup_{s\in V(T_t)}B_s,
    \qquad
    I_t:=V_t\setminus S_t,
    \qquad
    E_t^\downarrow:=E(P_C[V_t])\setminus E(P_C[S_t]).
\]
By running intersection, each variable in $I_t$ and each edge in
$E_t^\downarrow$ is assigned to exactly one bag of $T_t$.  Unary terms
on $S_t$ and edges wholly inside $S_t$ are assigned above $t$.
Thus the product of the $W_s$ over $s\in V(T_t)$ contains exactly the
unary terms indexed by $I_t$ and the quadratic interactions indexed by
$E_t^\downarrow$, together with the gate normalizations and constant
phases assigned to bags of $T_t$.
The separator message is
\begin{equation}
    M_t(\bz)
    :=\bigoplus_{\bx\in\{0,1\}^{I_t}}
       \bigotimes_{s\in V(T_t)}
          W_s((\bx,\bz)|_{B_s}),
    \qquad \bz\in\{0,1\}^{S_t}.
    \label{eq:separator-factor}
\end{equation}
Write $\mathsf M_t$ for its coefficient tensor on scope $S_t$.
To compute this message, multiply $\mathsf W_t$ by the child messages,
which have already summed out variables strictly below their separators,
then sum the variables $U_t$.  The \InsideOut update in coefficient form is
\begin{equation}
    \mathsf M_t
    =\Sigma_{U_t}\left(
       \mathsf W_t\ostar
       \bigostar_{s\text{ child of }t}\mathsf M_s
      \right).
    \label{eq:tree-tensor-bucket}
\end{equation}

Put $H_t:=(V_t,E_t^\downarrow)$.  Two separator variables are
interchangeable if swapping them can be absorbed by relabelling the
summed-out variables while preserving the query's phase terms.

For distinct $u,v\in S_t$, let $\tau_{u,v}$ swap $u$ and $v$ and fix
$S_t\setminus\{u,v\}$.  Write $u\sim_t v$ if $u=v$, or if
$\tau_{u,v}$ extends to an automorphism $\pi$ of $H_t$ with
$\pi(S_t)=S_t$ and $b(\pi(i))=b(i)$ for every $i\in I_t$.
This is an equivalence relation: inverses give symmetry, while
conjugating extensions of $\tau_{u,v}$ and $\tau_{v,w}$ gives an
extension of $\tau_{u,w}$.  Let $\mathcal C_t$ be its classes.
Preserving $b$ on $I_t$ is needed for the reindexing below: an
automorphism of $H_t$ alone can exchange internal variables carrying
different unary phases and change the separator message.
Transpositions within each class generate every permutation of $S_t$
preserving those classes, and their extensions compose to automorphisms
with the same properties.

Let $\gamma$ be such a class-preserving permutation and choose an
extension $\pi$.  The unary coefficients and edges in the product
defining $M_t$ are preserved by $\pi$.  Reindexing the sum over $I_t$
therefore gives $M_t(\gamma\cdot\bz)=M_t(\bz)$, where
$(\gamma\cdot\bz)_v=z_{\gamma^{-1}(v)}$.  The contributions with empty
scope are also unchanged.  Equality in $K_r$ is coefficientwise, so $\mathsf M_t$
satisfies the invariance hypothesis of
Lemma~\ref{lem:separator-exchangeability-rank}.  Hence
\[
    D_{\mathcal R}(\mathsf M_t)
    \leq\prod_{C\in\mathcal C_t}(|C|+1)
    \leq(|V|+1)^{|\mathcal C_t|}.
\]
The classes $\mathcal C_t$ depend on $b$ only through its equality
classes.

Within each symmetry class, the message depends only on the number
of variables assigned $1$.  The number of classes therefore controls
the bound on separator-message ranks.  We combine this quantity with
the local cut-rank bound in the following definition.

\subsection{Tensor layout symmetry width and evaluation}

For an $n$-qubit circuit $C$ over a quadratically $r$-compatible gate set,
with chosen gate phase representations, we evaluate its reduced scalar
query $\widehat Q_C$ over $K_r$, using the local bag factors defined
above.  Write
$G=P_C=(V,E)$ for the phase graph and $b:V\to\Z_r$ for the unary
coefficients of~\eqref{eq:phase-unary-coefficients}.
\begin{definition}[Tensor layout symmetry width]
\label{def:compressed-sop-plan}
A plan $\mathcal P=((T,(B_t)_{t\in V(T)}),\mathcal R)$ for $\widehat Q_C$
consists of a rooted binary tree decomposition of $G$ and a rooted binary
coefficient layout on $V$. Let
\[
    \rho(\mathcal P)
       :=\max_{t\in V(T)}\rho_t(\mathcal R),
    \qquad
    q(\mathcal P):=\max_{t\in V(T)}|\mathcal C_t|.
\]
The \emph{TLSW profile} of $\mathcal P$ is the pair
$\tlsw(\mathcal P):=(\rho(\mathcal P),q(\mathcal P))$.
Choose a plan $\mathcal P_*$ that lexicographically minimizes
$\bigl(\max\{\rho(\mathcal P),q(\mathcal P)\},
\rho(\mathcal P),q(\mathcal P)\bigr)$, and define the \emph{TLSW}
of $\widehat Q_C$ to be the pair
\begin{equation}
    \tlsw(\widehat Q_C)
       :=\tlsw(\mathcal P_*).
    \label{eq:sop-compression-width}
\end{equation}
The tie-break makes this pair independent of the choice of $\mathcal P_*$.
For scalar comparisons, set
\[
    \widehat{\tlsw}(\widehat Q_C)
       :=\max\{\rho(\mathcal P_*),q(\mathcal P_*)\}
       =\min_{\mathcal P}\max\{\rho(\mathcal P),q(\mathcal P)\}.
\]
For the empty graph, both trees consist only of their roots, so
$\tlsw(\widehat Q_C)=(0,0)$ and $\widehat{\tlsw}(\widehat Q_C)=0$.
\end{definition}

For Example~\ref{ex:compr-two-bags}, take the two-bag decomposition
there and the paired coefficient layout of
Example~\ref{ex:compr-two-bag-ttn}.  A cut between whole pairs in the
child bag crosses only the clique on $A$, whose cross-adjacency matrix
over $\F_2$ is all ones and has rank at most one.  A cut within a pair
has a singleton side, so it too has rank at most one.  The root bag
contains the clique on $B$ and has the same local cut-rank bound.
Thus $\rho(\mathcal P)=1$.  For the child separator $B$, every
permutation $\sigma$ of $[k]$ extends to the automorphism
$a_i\mapsto a_{\sigma(i)}$, $b_i\mapsto b_{\sigma(i)}$ of $H_t$;
it preserves the unary coefficient $1$ on $A$.  Hence
$\mathcal C_t=\{B\}$, while the root has empty separator, and
$q(\mathcal P)=1$.  This plan therefore has TLSW profile $(1,1)$.

\begin{lemma}[Plan normalization]
\label{lem:tlsw-plan-normalization}
Every plan $\mathcal P$ can be reduced in time polynomial in its encoding
length to a plan $\mathcal P'$ with $O(|V|+1)$ nodes in each tree and
$\rho(\mathcal P')\leq\rho(\mathcal P)$ and
$q(\mathcal P')\leq q(\mathcal P)$.
\end{lemma}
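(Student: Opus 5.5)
We normalize the two trees of $\mathcal P$ separately. Both quantities depend on the decomposition only through certain subsets and subgraphs, so the task is to verify that each reduction step can only shrink them.

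\emph{Coefficient layout $\mathcal R$.} Its leaves are labelled bijectively by $V$, and every vertex has at most two children. Unlabelled leaves may exist, so we repeatedly delete them. We then suppress every non-root vertex with exactly one child; a unary root is removed by promoting its child. The result is a rooted binary tree with $|V|$ leaves and no unary vertices, hence with $O(|V|+1)$ nodes.

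For every scope $J$, the restricted tree $\mathcal R'[J]$ is obtained from $\mathcal R[J]$ by suppressing unary vertices. Every edge of $\mathcal R'[J]$ therefore induces a leaf set $J_e$ that is already induced by some edge of $\mathcal R[J]$. Consequently each $\rho_t(\mathcal R')\le\rho_t(\mathcal R)$. The quantity $q$ does not depend on $\mathcal R$, so it is unchanged by this step.

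\emph{Tree decomposition $T$.} We contract each edge $tp(t)$ with $B_t\subseteq B_{p(t)}$, and symmetrically each edge with $B_{p(t)}\subseteq B_t$, merging the smaller bag into the larger. Running intersection is preserved. Merging a child bag $B_t\subseteq B_p$ into $B_p$ leaves the bag $B_p$ and its separator unchanged. The children of $t$ become children of $p$, but their separators $B_s\cap B_t$ become $B_s\cap B_p$, which equals $B_s\cap B_t$ by running intersection; hence their sets $V_s,I_s,S_s$ and graphs $H_s$ are unchanged.

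The case $B_p\subseteq B_t$ is handled by replacing $B_p$ with $B_t$ at the parent position; the same argument shows that every surviving $(V_s,S_s,H_s)$ appears among the original nodes. Since $\rho_t$ depends only on $B_t$ and $S_t$ (through $E_t$), and $\mathcal C_t$ only on $(H_t,S_t,I_t)$, neither maximum increases.

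After contraction, no bag is contained in a neighbouring bag. Rooting the tree gives an injection from non-root nodes to vertices of $V$, for example a vertex of $B_t\setminus S_t$, which is nonempty. Hence the tree has at most $|V|+1$ nodes.

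\emph{Restoring binarity.} Merging can create nodes with more than two children. We split such a node $t$ into a path of copies of $B_t$, each carrying one original child and one further copy. The copies are then merged-type duplicates. A copy $t'$ has separator $B_t$ and bag $B_t$, so $U_{t'}=\emptyset$, $E_{t'}=\emptyset$, and $\rho_{t'}=0$.

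Its class count is the main obstacle. The set $V_{t'}$ is a union over only some of $t$'s children, so $H_{t'}$ is a subgraph of $H_t$ with separator $B_t$ rather than $S_t$. The automorphism classes of this new separator are not obviously controlled by $q(\mathcal P)$. I would try first to avoid the issue by keeping the original binary shape: perform only those contractions that do not increase any node's child count, namely merging a child with at most one child into its parent, or merging an empty-$U$ node into its unique child. The remaining nodes with $U_t=\emptyset$ and two children still number $O(|V|)$, because the leaves of the tree then have nonempty $U_t$ and the binary branching bounds the internal nodes by the leaf count.

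All steps are set operations and local tree edits, so the whole reduction runs in time polynomial in the encoding length.
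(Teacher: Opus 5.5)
Your eventual strategy is the paper's proof. You suppress unary vertices of the layout. In the tree decomposition you repeatedly delete a non-root node $t$ with $B_t\subseteq B_{p(t)}$ (equivalently $U_t=\emptyset$) and at most one child, splicing that child onto $p(t)$, and you remove an empty root with one child. Afterwards every leaf and every one-child node introduces a vertex, and the branching nodes are bounded by the leaves. Your invariance argument for this operation is also the paper's: the grandchild's separator satisfies $B_s\cap B_t=B_s\cap B_{p(t)}$ by running intersection, and nothing new enters any ancestor's $V_u$. The paper additionally transfers the deleted node's gate normalizations and constant phases to its parent, so that the local factor assignment stays well defined. This does not affect the profile, but you should say it.

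What must go is the reverse contraction for $B_{p(t)}\subseteq B_t$, and not merely to preserve binarity: your claim that it does not increase $\rho$ is false. The merged node has bag $B_t$ but separator $S_{p(t)}$, so its local edge set is $E_t\cup E_{p(t)}$. The pair $(B_t,S_{p(t)})$ occurs at no original node, and cut-rank is only subadditive over edge-disjoint unions. Concretely, take root bag $\{s\}$ with child bag $\{s,u\}$, whose child bag is $\{s,u,w,x\}$. Let the phase-graph edges be $su$ and $wx$, and let the layout's root separate $\{u,w\}$ from $\{s,x\}$. Every original node then has $\rho_t\le1$. After merging the last two bags, however, the cut $\{u,w\}$ sees both $su$ and $wx$ and has rank $2$.

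Your fallback phrase ``merging a child with at most one child into its parent'' does not exclude this direction, so restrict it explicitly to $U_t=\emptyset$. Your counting argument needs nothing more, and the binarity-restoring split becomes unnecessary; you correctly noticed that it causes problems for $q$.
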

\begin{proof}
Suppress vertices with one child in the coefficient layout.  Every edge
cut of the resulting layout already occurred in the original layout, so
no local cut-rank increases.

In the tree decomposition, delete a non-root node $t$ with
$U_t=\emptyset$ and at most one child, attaching its child, if present,
to the parent of $t$.  Since $B_t\subseteq B_{p(t)}$, the node $t$ has no
assigned unary or quadratic term.  Transfer its gate normalizations and
constant phases to the parent.  Running intersection gives
$B_s\cap B_{p(t)}=B_s\cap B_t$ for its child $s$, so the child's
separator is unchanged.  At every surviving node $u$, its bag,
separator, and labelled subproblem $(V_u,E_u^\downarrow,b|_{I_u})$
are unchanged: $B_t$ adds no vertex to an ancestor subtree.  Thus neither profile
coordinate increases.  Repeat until no such node remains.  An empty
root with one child can likewise be removed, transferring its gate
normalizations and constant phases to the new root.

Each remaining node that introduces a vertex accounts for a distinct
vertex of $V$.  Every other node has two children, so there are at most
as many of these branching nodes as introducing nodes, apart from the
one-node empty case.  The reduced tree decomposition has $O(|V|+1)$
nodes.  The reduced binary coefficient layout has the same bound.
\end{proof}

These widths refer to the reduced query with its chosen phase
representations and the local factor assignment above.

With these two additional bounds we are now ready to provide a full algorithm.  At a high level, local cut-rank controls the construction of each bag factor, and separator symmetry bounds the messages after exact compression.

\begin{theorem}
\label{thm:compressed-sop-evaluation}
Fix a finite quadratically $r$-compatible gate set
with chosen witnessing expressions.  Let $C$ be an $n$-qubit circuit over this gate set, and let $\bx, \by \in \{0,1\}^n$. Let
$P_C=(V,E)$ be its phase graph and $b$ its unary coefficients
from~\eqref{eq:phase-unary-coefficients}.
Given a reduced plan $\mathcal P$ as in
Lemma~\ref{lem:tlsw-plan-normalization}, the amplitude
$\me{\by}{U_C}{\bx}$ can be computed in time
\begin{equation}
    r^{O(1)}2^{O(\rho(\mathcal P))}
       (|V|+1)^{O(q(\mathcal P))}
       \poly(|C|+\log r).
    \label{eq:compressed-sop-runtime}
\end{equation}
\end{theorem}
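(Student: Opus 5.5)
The plan is to run \InsideOut{} bottom-up along the tree decomposition $T$ of the reduced plan $\mathcal P$, keeping every intermediate coefficient tensor as an exact TTN on the trees $\mathcal T_J$ given by the layout $\mathcal R$. After every product or marginalization, the TTN is recompressed with Lemma~\ref{lem:exact-tree-reduction}.

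Preprocessing comes first. Perform the FD reduction and substitute the consistent boundary bits $\bx,\by$; if the prescriptions are inconsistent, output $0$. Next, build $P_C$, the unary coefficients $b$ of~\eqref{eq:phase-unary-coefficients}, and the local bag data $(B_t,U_t,S_t,E_t,\lambda_t,c_t)$. All of this takes $\poly(|C|)$ time. Residues are written with $O(\log r)$ bits, which accounts for the $\log r$ term. By Lemma~\ref{lem:tlsw-plan-normalization}, both trees have $O(|V|+1)$ nodes, and $|V|\le\poly(|C|)$.

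For each node $t$, Lemma~\ref{lem:bag-tensor-bound} gives an exact TTN for $\mathsf W_t$ on $\mathcal T_{B_t}$. Its bond dimension is at most $r2^{\rho(\mathcal P)}$, and it is built in $r^{O(1)}2^{O(\rho)}\poly(|C|)$ time. Inductively, assume each child $s$ has delivered an exact TTN for $\mathsf M_s$ on $\mathcal T_{S_s}$ with bond dimension at most
\[
D:=\max\{r,(|V|+1)^{q(\mathcal P)}\}.
\]
We then evaluate~\eqref{eq:tree-tensor-bucket} as follows. First multiply $\mathsf W_t$ with the at most two child messages, using the $\ostar$ construction of Lemma~\ref{lem:coefficient-tree-operations}; here $S_s\subseteq B_t$, so the product lives on $\mathcal T_{B_t}$. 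The bond dimension is then at most $r2^{\rho}D^2$. Next, sum out the variables of $U_t$ one at a time. Marginalization never increases bond dimension, so the intermediate dimension stays at most $r2^{\rho}D^2$.

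Finally apply Lemma~\ref{lem:exact-tree-reduction} to the result, which is an exact TTN for $\mathsf M_t$. This either certifies $\mathsf M_t=0$, in which case the amplitude is $0$, or returns a TTN with bond dimension at most $\max\{r,D_{\mathcal R}(\mathsf M_t)\}$. The separator-symmetry discussion preceding the definition of $\tlsw$ shows that $\mathsf M_t$ is invariant under class-preserving permutations of $S_t$. Lemma~\ref{lem:separator-exchangeability-rank} then gives $D_{\mathcal R}(\mathsf M_t)\le(|V|+1)^{|\mathcal C_t|}\le(|V|+1)^{q}$, which re-establishes the invariant. Every step costs $\poly(|\mathcal R|,r,r2^{\rho}D^2)=r^{O(1)}2^{O(\rho)}(|V|+1)^{O(q)}\poly(|C|)$, and there are $O(|V|+1)$ nodes, so the bound~\eqref{eq:compressed-sop-runtime} follows.

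At the root, $S_t=\emptyset$, so the result is a one-vertex array $a\mapsto\mathsf M_{\mathrm{root}}(a)$ of $r$ real numbers. Multiply it by any remaining empty-scope constants and apply $\nu_r$, i.e.\ form $\sum_a \mathsf M(a)\omega_r^a$. By~\eqref{eq:circuit-faq-phase-readout}, this is $\me{\by}{U_C}{\bx}$.

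The main obstacle is correctness of the decomposition into bag factors. We must check that the product over all bags of the $W_t$ equals the product of the lifted input atoms pointwise in $K_r$, so that the root message is exactly $\widehat Q_C$ over $K_r$. Concretely, each unary term, each quadratic edge of parity $A[u,v]$ and each normalization/constant must be assigned exactly once. This requires that the quadratic terms of the atoms collapse to the parity graph $P_C$ using $2\eta=0$, and that running intersection makes the assignment to $U_t$ and $E_t$ a partition.

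A secondary point is that the invariance argument needs $M_t$, as defined in~\eqref{eq:separator-factor}, to coincide with the computed~\eqref{eq:tree-tensor-bucket}. This follows by induction and distributivity, since the child messages already contain exactly the terms indexed by $I_s$ and $E_s^\downarrow$.
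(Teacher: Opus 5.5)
Your proposal is correct and follows the paper's proof essentially step for step. You process $T$ in postorder on the common layout $\mathcal R$, build the bag tensors with Lemma~\ref{lem:bag-tensor-bound}, and form products and marginals with Lemma~\ref{lem:coefficient-tree-operations} to reach bond dimension at most $r2^{\rho(\mathcal P)}D^2$. You then recompress with Lemma~\ref{lem:exact-tree-reduction} back to $D=\max\{r,(|V|+1)^{q(\mathcal P)}\}$ via Lemma~\ref{lem:separator-exchangeability-rank}, and apply $\nu_r$ to the root residue array, exactly as the paper does. The correctness points you flag, namely the exact assignment of terms to the $W_t$ and the agreement of~\eqref{eq:separator-factor} with~\eqref{eq:tree-tensor-bucket}, are what the paper sets up in the text before the theorem, and your explicit handling of the $\mathsf M_t=0$ case is a harmless extra.
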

\begin{proof}
Put $N=|V|$ and $D=\max\{r,(N+1)^{q(\mathcal P)}\}$.
Construct the local bag factors $W_t$ as specified above and process
$T$ in postorder on the common layout $\mathcal R$.
We maintain an exact TTN of bond dimension at most $D$ for each message
passed to a parent.

At bag $t$, Lemma~\ref{lem:bag-tensor-bound} constructs $\mathsf W_t$
with bond dimension at most $r2^{\rho(\mathcal P)}$.
Multiply it by the at most two child messages and sum out $U_t$, using
Lemma~\ref{lem:coefficient-tree-operations}.  This implements
\eqref{eq:tree-tensor-bucket} exactly and gives a TTN for $\mathsf M_t$
with bond dimension at most $r2^{\rho(\mathcal P)}D^2$.
Before passing this message upward, compress it using
Lemma~\ref{lem:exact-tree-reduction}.  Separator symmetry and
Lemma~\ref{lem:separator-exchangeability-rank} give
$D_{\mathcal R}(\mathsf M_t)\leq(N+1)^{q(\mathcal P)}$, so compression
restores the bound $D$ and proves the invariant.

Each update takes time polynomial in the layout size, $r$, and
$r2^{\rho(\mathcal P)}D^2$.  The reduced plan has $O(N+1)$ bags and
layout nodes, and $N=O(|C|)$, giving~\eqref{eq:compressed-sop-runtime}.
At the root no Boolean arguments remain: the resulting residue array
is the coefficient vector of the resulting $K_r$-value.  Applying $\nu_r$
gives the amplitude by~\eqref{eq:circuit-faq-phase-readout}.
\end{proof}

The time bound counts exact arithmetic and zero tests over $\R$ at unit
cost and excludes finding a plan.  An arbitrary supplied plan can
first be reduced in time polynomial in its encoding length, after which
the bound applies with no larger width coordinates.  The algorithm does
not need to construct the classes $\mathcal C_t$.

\begin{proposition}
\label{prop:tlsw-comparisons}
Let $C$ be an $n$-qubit circuit over a
quadratically $r$-compatible gate set with chosen witnessing expressions.
Let $P_C$ be its phase graph
from~\eqref{eq:phase-coefficient-data}, and let $G_C$ be its circuit
graph as in Section~\ref{sec:explicit-tables}, with maximum degree
$\Delta_C$ counting parallel edges with multiplicity. Then
\begin{equation}
\begin{aligned}
    \widehat{\tlsw}(\widehat Q_C)&\leq\rw(P_C),\\
    \widehat{\tlsw}(\widehat Q_C)&\leq\Delta_C\bigl(\tw(G_C)+1\bigr).
\end{aligned}
    \label{eq:tlsw-rank-comparison}
\end{equation}
In particular, for gates acting on at most $a$ qubits, $\Delta_C\leq2a$.
\end{proposition}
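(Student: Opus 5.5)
For each inequality we exhibit a single plan $\mathcal P$ with $\max\{\rho(\mathcal P),q(\mathcal P)\}$ bounded by the right-hand side. Since $\widehat{\tlsw}(\widehat Q_C)=\min_{\mathcal P}\max\{\rho(\mathcal P),q(\mathcal P)\}$, that suffices.

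\emph{Rank-width bound.} Take the trivial tree decomposition with a single root bag $B=V$. Then $S_{\mathrm{root}}=\emptyset$, so $\mathcal C_{\mathrm{root}}=\emptyset$ and $q(\mathcal P)=0$. Also $E_{\mathrm{root}}=E$, hence $P_{C_{\mathrm{root}}}=P_C$.

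For the layout, take an optimal rank-decomposition $(T,L)$ of $P_C$, root it by subdividing an edge, and use it as $\mathcal R$. This is a rooted binary tree whose leaves are $V$. Every edge of $\mathcal R[V]=\mathcal R$ is either an edge of $T$ or one of the two halves of the subdivided edge, and all of these induce cuts of $T$. Hence $\gamma_{P_C}(J_e)\le\rw(P_C)$ for each such edge, so $\rho(\mathcal P)\le\rw(P_C)$. The degenerate case $|V|\le1$ is trivial.

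\emph{Treewidth bound.} Start from a tree decomposition of $G_C$ of width $k=\tw(G_C)$. Make it rooted and binary by the standard duplication of nodes; this preserves the bags. Then map it to a tree decomposition of $P_C$.

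Each variable $v\in V$ of the FD-reduced query corresponds to a set of wire segments, i.e.\ edges of $G_C$, all lying on one qubit wire and merged through diagonal gates. Replace each gate node of $G_C$ in a bag by the variables incident to that gate. The new bags are the unions of these variable sets.
\begin{itemize}
\item Every phase-graph edge comes from a quadratic term of a single gate, so both endpoints lie in a common bag.
\item Running intersection holds because the gates incident to a given variable form a connected path in $G_C$: the merged segments are consecutive along the wire.
\end{itemize}
A gate has degree at most $\Delta_C$, so each new bag has at most $\Delta_C(k+1)$ variables.

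For the layout we take any rooted binary $\mathcal R$. We then use the trivial bound $\rho_t\le|B_t|$, since a cut-rank is at most the size of the smaller side. For $q$ we use the trivial bound $|\mathcal C_t|\le|S_t|\le|B_t|$, since classes are at worst singletons. Both quantities are therefore at most $\Delta_C(\tw(G_C)+1)$.

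Finally, a gate on $a$ qubits has $a$ input and $a$ output wire segments, which gives $\Delta_C\le2a$.

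\emph{Main obstacle.} The delicate step is the transfer from $G_C$ to $P_C$. After FD reduction a single variable may span many gates along a wire, so each variable must be shown to correspond to a connected set of gate nodes. If boundary substitutions remove variables, the bags only shrink, which is harmless. The rank-width part is essentially immediate.
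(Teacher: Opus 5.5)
Your proposal is correct. The rank-width inequality is proved exactly as in the paper: a one-bag decomposition, so $q=0$, with the layout obtained by rooting an optimal rank-decomposition at a subdivided edge.

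For the treewidth inequality you take a genuinely different route. The paper reuses the first inequality. It notes that $P_C$ is a minor of $\lline(G_C)$ and invokes minor-monotonicity of treewidth together with Oum's bound $\rw\le\tw+1$. This gives $\widehat{\tlsw}(\widehat Q_C)\le\rw(P_C)\le\tw(P_C)+1\le\tw(\lline(G_C))+1$. Only then does it apply the line-graph construction, replacing each bag by all incident edges, to reach $\Delta_C(\tw(G_C)+1)$.

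You instead compose the line-graph construction with the contraction of merged segments into one map. This produces a tree decomposition of $P_C$ directly. You then use that decomposition itself as the plan's tree and bound $\rho_t$ and $|\mathcal C_t|$ trivially by the bag size.

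What each route buys:
\begin{itemize}
\item Your route is self-contained, with no appeal to Oum's theorem or to the minor lemma of de Colnet et al. It also shows that the tree-decomposition component of a plan can absorb treewidth on its own.
\item The paper's route is shorter given those citations. It also yields the stronger intermediate fact $\rw(P_C)\le\Delta_C(\tw(G_C)+1)$: the treewidth comparison is already subsumed by the rank-width one, using a single-bag plan.
\end{itemize}

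Two small repairs to your transfer step:
\begin{itemize}
\item Your running-intersection justification (``the merged segments are consecutive along one wire'') is narrower than needed. A SWAP gate is quadratically $r$-compatible (its reduced factor is the constant $1$), and its identity dependencies merge segments on different wires. The general reason is that every identity dependency relates two legs of a single gate. So, by induction, the segments of each merged class form a connected edge set in $G_C$, and their incident gate nodes induce a connected subgraph.
\item You should state explicitly that every surviving variable is incident to at least one gate. Hence it lands in some bag, and the vertex-coverage condition holds.
\end{itemize}
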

\begin{proof}
For the first bound, take the one-bag tree decomposition with bag
$V(P_C)$ and root an optimal rank decomposition of $P_C$ to obtain the
coefficient layout. The separator is empty, so this plan has
$q(\mathcal P)=0$ and $\rho(\mathcal P)\leq\rw(P_C)$.

For the second bound, it is easy to observe that $P_C$ is a  graph minor of $\lline(G_C)$ (see e.g.,~\citep[Lemma~6]{deColnet+2026}).
By minor-monotonicity of treewidth and
$\rw(P_C)\leq\tw(P_C)+1$~\citep{oum2008rank},
\[
    \widehat{\tlsw}(\widehat Q_C)
    \leq\rw(P_C)\leq\tw(P_C)+1
    \leq\tw(\lline(G_C))+1.
\]
To relate the last term to $\tw(G_C)$, take a width-$k$ tree
decomposition of $G_C$ and replace each bag by all edges incident to
its vertices. These bags form a tree decomposition of $\lline(G_C)$:
incident edges occur together in any original bag containing their
common endpoint, and the bags containing an edge $uv$ form the union
of the intersecting subtrees for $u$ and $v$. Each new bag has size at
most $\Delta_C(k+1)$. Taking $k=\tw(G_C)$ gives
\[
    \tw(\lline(G_C))+1\leq\Delta_C\bigl(\tw(G_C)+1\bigr),
\]
which proves the second bound.
\end{proof}

Bounded $\tlsw$ does not force bounded phase-graph rank-width or circuit-graph treewidth.

\begin{theorem}[Clique-chain separation]
\label{thm:clique-chain-separation}
For $n\geq3$, let
$C(n):=\Hblock{n};(\CZblock{n};\Tblock{n};\Hblock{n})^{[n]}$.
With the $K_8$ gate lifts in Figure~\ref{fig:phase-gate-lifts}, for any
fixed computational-basis boundaries,
\begin{equation}
    \widehat{\tlsw}(\widehat Q_{C(n)})\leq2,
    \qquad
    \rw(P_{C(n)})\geq n/3,
    \qquad
    \tw(G_{C(n)})\geq n/4-1.
    \label{eq:clique-chain-separation-summary}
\end{equation}
\end{theorem}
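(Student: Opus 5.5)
The plan is to treat the three bounds in \eqref{eq:clique-chain-separation-summary} separately, after first describing the reduced query. In $C(n)$ every $\CZ$ and $T$ gate is diagonal, so the identity FDs merge all wire segments of qubit $i$ between two consecutive Hadamard layers into one variable. The result is $n$ layers $L_1,\dots,L_n$ with $L_j=\{v^j_1,\dots,v^j_n\}$. Here $L_j$ is the variable block acted on by round $j$'s $\CZblock{n}$ and $\Tblock{n}$, and round $j$'s Hadamard layer maps $L_j$ to $L_{j+1}$. The initial and final Hadamards connect $L_1$ and $L_n$ to the fixed boundaries. Reading off the lifts of Figure~\ref{fig:phase-gate-lifts}, the phase graph $P_{C(n)}$ consists of a clique on each $L_j$ together with the perfect matchings $v^j_i v^{j+1}_i$. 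The unary coefficients are $b(v^j_i)=1$ for $1<j<n$. The boundary Hadamards contribute $4x_iv^1_i$ and $4y_iv^n_i$, so $b(v^1_i)=1+4x_i$ and $b(v^n_i)=1+4y_i$ (the latter absorbs the $T$ phase). These boundary-dependent unary terms are the only obstruction to full qubit symmetry.

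\textbf{TLSW bound.} I would use a path decomposition with bags $L_j\cup L_{j+1}$ for $j=1,\dots,n-1$, rooted at $L_{n-1}\cup L_n$. For the coefficient layout, I would build a caterpillar on each qubit chain $v^1_i,\dots,v^n_i$ and join the $n$ caterpillars by a binary tree. Restricted to a bag, this layout becomes the paired layout of Example~\ref{ex:compr-two-bag-ttn}. Every layout cut then either has a singleton side, or is a union of whole qubit pairs whose only crossing edges come from one clique. In both cases the cross-adjacency matrix has rank at most $1$, so $\rho(\mathcal P)\leq 1$.

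For a non-root bag, the separator is $S_t=L_{j+1}$ and $H_t$ lives on $L_1\cup\dots\cup L_{j+1}$. Any permutation $\sigma$ of the qubits applied simultaneously to all layers is an automorphism of $H_t$ fixing $S_t$. It preserves $b$ on $I_t$ provided $\sigma$ preserves $x$, because the output bits $y$ enter only at the root bag. Hence $\mathcal C_t$ consists of at most two classes, $\{v^{j+1}_i: x_i=0\}$ and $\{v^{j+1}_i: x_i=1\}$. The root has an empty separator. Thus $q(\mathcal P)\leq 2$ and $\widehat{\tlsw}\leq 2$.

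\textbf{Rank-width bound.} Let $N=n^2$. I would first use the standard fact that every subcubic rank-decomposition has an edge whose induced cut $(X,\overline X)$ satisfies $N/3\leq|X|\leq 2N/3$, and then show that every such balanced cut has cut-rank at least $n/3$. Call a qubit chain \emph{mixed} if it meets both $X$ and $\overline X$, and let $m$ be the number of mixed chains.

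If $m\geq n/3$, choose in each mixed chain $i$ a crossing matching edge $v^{j_i}_i\in X$, $v^{j_i+1}_i\in\overline X$. The submatrix of $A[X,\overline X]$ on these rows and columns has the form $I+N'$, where $N'(i,i')=1$ only if $j_i=j_{i'}+1$. After sorting by $j_i$, the matrix $N'$ is strictly block-triangular, so $I+N'$ is unitriangular and has rank $m\geq n/3$.

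If $m<n/3$, the balance conditions force at least one chain lying entirely in $X$ and at least one lying entirely in $\overline X$. Take rows $v^1_i,\dots,v^n_i$ from the first and columns $v^1_{i'},\dots,v^n_{i'}$ from the second. Only the layer cliques join these vertices, so the submatrix is the identity and the cut-rank is at least $n$.

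\textbf{Treewidth bound.} The phase graph $P_{C(n)}$ contains $K_n$, so $\tw(P_{C(n)})\geq n-1$. The proof of Proposition~\ref{prop:tlsw-comparisons} shows that $P_C$ is a minor of $\lline(G_C)$ and that $\tw(\lline(G_C))+1\leq\Delta_C(\tw(G_C)+1)$. Since all gates act on at most two qubits, $\Delta_C\leq 4$. Combining these gives $n\leq\tw(P_{C(n)})+1\leq 4(\tw(G_{C(n)})+1)$, which is the claimed $\tw(G_{C(n)})\geq n/4-1$.

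\textbf{Main obstacle.} I expect the delicate step to be the rank-width case analysis, in particular the triangularity argument when crossing edges occur in many different layers. A secondary point is the careful accounting of boundary unary terms in the symmetry classes, since that is exactly what produces the bound $2$ rather than $1$.
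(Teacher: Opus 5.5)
Your treewidth argument and your TLSW plan (path decomposition, qubit-grouped layout, simultaneous qubit permutations that preserve $x$, giving at most two classes) are essentially the paper's. The case $m\geq n/3$ of the rank-width argument, however, has a genuine gap.

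You pick, in every mixed chain, a crossing whose \emph{lower} endpoint is in $X$, i.e.\ $v^{j_i}_i\in X$ and $v^{j_i+1}_i\in\overline X$. Such a crossing need not exist. For example, a chain with $v^1_i\in\overline X$ and $v^2_i,\ldots,v^n_i\in X$ crosses only in the other direction, and swapping $X$ with $\overline X$ cannot orient all chains at once. Admitting both orientations destroys triangularity. Suppose chain $i$ has $v^1_i\in X$, $v^2_i\in\overline X$ and chain $i'$ has $v^1_{i'}\in\overline X$, $v^2_{i'}\in X$. Then rows $v^1_i,v^2_{i'}$ and columns $v^2_i,v^1_{i'}$ form an all-ones $2\times 2$ matrix: matching edges on the diagonal, layer-clique edges off it. Splitting the mixed chains by orientation and applying your unitriangular argument to the larger class gives only $\gamma\geq m/2$, hence $n/6$ under your threshold.

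The paper avoids the orientation problem by splitting on layer cliques instead of qubit chains. If every layer $L_j$ is split, take one vertex on each side in every other layer; this gives an identity submatrix of size $\lceil n/2\rceil$ (Lemma~\ref{lem:some-but-not-all-variables}). Otherwise, after swapping sides, some layer $L_i$ is disjoint from $X$. Take a layer $L_j$ with $|X\cap L_j|\geq n/3$. Each qubit whose $L_j$-vertex lies in $X$ must cross from $\overline X$ into $X$ between $L_i$ and $L_j$, and all these crossings have the same direction. Your triangularity argument (Lemma~\ref{lem:decreasing-variables-increase-rank}) then yields $\gamma\geq n/3$.

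You can also keep working with chains but split the cases differently. If some chain lies entirely in $X$, each vertex of $\overline X$ is adjacent to exactly one vertex of that chain, the one in its own layer. So $\gamma$ is at least the number of layers meeting $\overline X$, which is at least $|\overline X|/n\geq n/3$; the case of a chain inside $\overline X$ is symmetric. If no chain lies entirely on one side, all $n$ chains are mixed, and the orientation split gives $\gamma\geq n/2$.

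Separately, your claim $\rho(\mathcal P)\leq1$ fails at the root bag $L_{n-1}\cup L_n$. Its separator is empty, so both cliques lie in $E_t$. A layout cut separating a proper nonempty union of qubit pairs then has a cross-adjacency matrix that is block diagonal with two all-ones blocks, so its rank is $2$. This does not affect $\widehat{\tlsw}\leq2$, since $q\leq2$. Adding a separate root bag $L_n$, as the paper does, restores $\rho(\mathcal P)\leq1$.
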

The construction and proof are given in
Appendix~\ref{app:clique-chain-separation}.

\section{Affine--quadratic evaluation and Clifford simulation}
\label{sec:gk-faq}
We show that, for a quantum circuit $C$ over a fixed gate set admitting
the affine--quadratic representations defined below, the standard circuit
FAQ $Q_C$ of Section~\ref{sec:circuit-faq} can be evaluated in polynomial
time by \InsideOut{}.  The query and the elimination scheme are unchanged:
we encode its gate factors and intermediate messages as
\emph{affine--quadratic factors} and implement \BucketEval{} on these
encodings.  Their descriptions remain polynomial in size under every
elimination order, so multiplication and aggregation do not require
enumerating assignments to a message's scope.

Affine supports with quadratic phases are standard descriptions of
stabilizer amplitudes~\citep{DehaeneDeMoor2003}.  We use the quadratic phase
functions (QPFs) of~\citet[Section~3.3]{deColnet+2026}, with an exact scalar
encoding over the ring specified below.  In that work, QPFs represent
Fourier-mode tables on cut-signature spaces and support an evaluation
combining rank-width with stabilizer decompositions.  Here we import their
closure properties to implement the \BucketEval{} interface of
Section~\ref{subsec:insideout} directly on FAQ factor scopes.  We establish
the resulting evaluation guarantee for every elimination order, with
exact scalar arithmetic and symbolic answers on free variables.  As a
special case, we recover the exact-amplitude consequence of the
Gottesman--Knill theorem for Clifford circuits with computational-basis
boundaries~\citep{Gottesman1998,AaronsonGottesman2004}.

\subsection{The affine--quadratic representation}
\label{subsec:aq-representation}

The representation is modelled on Clifford gates.  A unitary $U$ is
\emph{Clifford} if conjugation by $U$ maps Pauli operators, that is,
tensor products of $I$, $X$, $Y$ and $Z$ up to a factor in
$\{\pm1,\pm i\}$, to Pauli operators.  Up to a global phase, every
Clifford unitary is a product of the generators $H$,
$S\coloneqq\operatorname{diag}(1,i)$ and
$\CNOT$~\citep{Gottesman1998,AaronsonGottesman2004}.  Write
$G(v;u)\coloneqq\me{v}{U_G}{u}$ for the gate factor of
Section~\ref{sec:circuit-faq} with input tuple $u$ and output tuple $v$.
The generators have the gate factors
\begin{align*}
    H(v;u)
      &=2^{-1/2}i^{2uv},\\
    S(v;u)
      &=\iv{v=u}\,i^u,\\
    \CNOT(v_1,v_2;u_1,u_2)
      &=\iv{v_1=u_1}\,\iv{v_2=u_1\oplus u_2},
\end{align*}
where, to recall, $\oplus$ denotes addition in $\mathbb F_2$.  Each factor
consists of a scalar, an affine support specified by parity equations, and
a phase whose exponent is quadratic modulo~$4$, with even quadratic
coefficients.  Their scalars and phases lie in the commutative ring
\[
    (\cR,+,\cdot,0,1),
    \qquad
    \cR\coloneqq\mathbb Z[i,2^{-1/2}]\subseteq\C,
\]
which also contains the multipliers $2$ and $1\pm i$ that arise in
aggregation.  Hence, for a circuit over these gates, the standard circuit
FAQ can be evaluated over $\cR\subseteq\C$ without changing its factors or
its value, since identity and boundary indicators also take values
in~$\cR$.

Fix an ordering of each finite scope $W$ of Boolean variables.  An
\emph{affine--quadratic factor}, or \emph{AQ factor}, on $W$ is either the
zero function or a function
\begin{equation}
\label{eq:aq-factor}
    F(\mathbf u)
      =\lambda\,\iv{A\mathbf u=\mathbf b}\,i^{q(\mathbf u)},
      \qquad \mathbf u\in\{0,1\}^{W},
\end{equation}
where $\lambda\in\cR\setminus\{0\}$, the matrix
$A\in\mathbb F_2^{r\times W}$ and vector $\mathbf b\in\mathbb F_2^r$
define a consistent affine system, and
\begin{equation}
\label{eq:aq-phase}
    q(\mathbf u)
      =c+\sum_j\alpha_j u_j
         +2\sum_{j<k}\beta_{jk}u_ju_k\pmod4,
    \qquad
    c,\alpha_j\in\mathbb Z_4,
    \quad \beta_{jk}\in\mathbb F_2.
\end{equation}
The bits are read as integers in the phase polynomial; $r=0$ gives full
support.  We write $\mathsf{AQ}(W)$ for this class.  The even quadratic
coefficients are what make the class closed under elimination: since
$u_1\oplus u_2=u_1+u_2-2u_1u_2$ over the integers, substituting
$x=u_1\oplus u_2$ into a term $2xu_3$ creates a cubic term with coefficient
divisible by~$4$, which vanishes, whereas substituting it into $xu_3$
would leave a cubic phase.  Absorbing $i^c$ into $\lambda$ and separating
the linear phase from the quadratic sign gives exactly the QPF form
of~\citet{deColnet+2026}, restricted to scalars in $\cR$.

An AQ encoding stores a row-reduced affine system with zero rows removed,
the phase coefficients, and the scalar.  On $k$ variables, the nonscalar
data occupy $O((k+1)^2)$ bits, excluding scope identifiers.  We encode the
scalar as
\begin{equation}
\label{eq:aq-scalar-encoding}
    \lambda
      =2^{-h/2}\bigl(a_0+a_1\sqrt2+i(a_2+a_3\sqrt2)\bigr),
    \qquad
    h\in\mathbb N,\quad a_0,a_1,a_2,a_3\in\mathbb Z,
\end{equation}
with all integers in binary.  Zero testing and multiplication take
polynomial time in this encoding.  The coefficient ring $\cR$ and the
represented class $\mathsf{AQ}(W)$ are distinct: the latter is not closed
under arbitrary pointwise addition.  What matters for FAQ evaluation is
closure under the particular product-and-aggregate operation that forms a
message.

\subsection{AQ bag evaluation}
\label{subsec:aq-bucketeval}

At step $i$, let $z_i$ be the eliminated variable and $\cB_i$ its bucket,
as in Section~\ref{subsec:insideout}.  Put
\[
    U_i=\{z_i\}\cup\bigcup_{F\in\cB_i}S_F,
    \qquad R_i=U_i\setminus\{z_i\}.
\]
We take $\cP_i=\varnothing$, so \BucketEval{} computes
\begin{equation}
\label{eq:aq-bucket-message}
    M_i(\mathbf x_{R_i})
      =\sum_{x_{z_i}\in\{0,1\}}
         \prod_{F\in\cB_i}F(\mathbf x_{S_F}).
\end{equation}
This is exactly the message required by \InsideOut{}; the AQ
implementation computes its encoding instead of materializing its table.

\begin{proposition}[AQ bucket evaluation]
\label{prop:aq-bucketeval}
Let $F_1,\ldots,F_b$ be AQ factors with scopes contained in $W$, where
$|W|=w$.  Their product, extended to $W$, and its aggregate over any subset
of $W$ admit AQ encodings computable in $O\bigl((b+1)(w+1)^3\bigr)$ time, apart from exact scalar arithmetic.  In particular, this gives an
AQ implementation of~\eqref{eq:aq-bucket-message} with the same bound.
\end{proposition}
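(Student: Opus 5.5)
The plan splits into two closure properties: closure under product, and closure under summing out one variable. Aggregation over a subset of $W$ is then obtained by summing out its variables one at a time. Throughout, every factor lives on the common scope $W$ with $|W|=w$.

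\textbf{Extension and product.} We first extend each $F_j$ to $W$ by padding its constraint matrix with zero columns and its phase coefficients with zeros; this costs $O((w+1)^2)$ per factor. If any $F_j$ is the zero function, the product is zero and we stop. Otherwise the product of $\lambda_j\iv{A_j\mathbf u=\mathbf b_j}i^{q_j(\mathbf u)}$ is computed as follows. Multiply the scalars. Stack the affine systems and row-reduce by Gaussian elimination over $\mathbb F_2$; if the stacked system is inconsistent, return zero. Add the phase polynomials coefficientwise, with $c,\alpha$ in $\mathbb Z_4$ and $\beta$ in $\mathbb F_2$; the sum keeps even quadratic coefficients. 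To keep the cost within the bound, we merge the factors one at a time into a reduced system with at most $w$ rows. Each merge reduces at most $w$ new rows against a reduced basis, which takes $O((w+1)^3)$ time. This gives $O((b+1)(w+1)^3)$ in total.

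\textbf{Summing out one variable $z$ from $\lambda\iv{A\mathbf u=\mathbf b}i^{q(\mathbf u)}$.} There are two cases.

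In the first case, $z$ is a pivot of the reduced system, so $z=\ell(\mathbf u')\oplus\beta_0$ for an affine form $\ell$ in the remaining variables. The sum over $z$ has exactly one surviving term. We delete that row, which leaves a system on $W\setminus\{z\}$, and substitute $z=\ell\oplus\beta_0$ into $q$. Writing the XOR over the integers as $x\oplus y=x+y-2xy$, iterated, one needs the following to check out. The linear term $\alpha_z z$ becomes a linear part plus $2\cdot$(pairwise products), since $\alpha_z\cdot 2\,xy$ is even. Higher products carry coefficient divisible by $4$ and vanish. The quadratic terms $2\beta_{zk}zu_k$ become $2u_k(\text{sum of }\ell\text{'s variables})$ modulo $4$; the cubic corrections again vanish and $u_k^2=u_k$ becomes linear. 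So the result is AQ, and the substitution costs $O((w+1)^2)$.

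In the second case, $z$ is free in the system, so the support does not constrain $z$. Write $q=q_0(\mathbf u')+z\bigl(\alpha_z+2L(\mathbf u')\bigr)$, where $L$ is linear over $\mathbb F_2$. Then
\[
\sum_z i^{q}=i^{q_0}\bigl(1+i^{\alpha_z}(-1)^{L}\bigr).
\]
This splits into subcases according to the parity of $\alpha_z$.
\begin{itemize}
\item If $\alpha_z$ is odd, then $1\pm i^{\alpha_z}$ is one of $1\pm i$. Since $1-i=-i(1+i)$, we get $1+i^{\alpha_z}(-1)^L=(1+i^{\alpha_z})\,i^{-\alpha_z L}$ up to checking. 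The term $i^{\pm L}$ is expressible as a linear-plus-even-quadratic phase, because $L$ read as an integer via iterated XOR gives $\sum u_j-2\sum_{j<k}u_ju_k+4(\cdots)$; multiplying by $\pm1$ keeps the quadratic coefficients even. The scalar picks up $1\pm i\in\cR$.
\item If $\alpha_z\in\{0,2\}$, the sum equals $2\iv{L=\alpha_z/2}$. We multiply the scalar by $2$ and add the parity equation $L(\mathbf u')=\alpha_z/2$ to the system, then re-reduce. If the resulting system is inconsistent, return zero. Re-reduction costs $O((w+1)^2)$ with a reduced basis.
\end{itemize}

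\textbf{Assembling the message.} Applying this to the product gives the message $M_i$ of~\eqref{eq:aq-bucket-message}. Repeating the single-variable step for each variable of the aggregated subset costs $O(w\cdot(w+1)^2)$, which stays within the bound.

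\textbf{Main obstacle.} The delicate point is the odd-$\alpha_z$ subcase: showing that the factor $1+i^{\alpha_z}(-1)^{L}$ is a constant times $i^{\text{AQ phase}}$. I would verify the identity $1+i(-1)^L=(1+i)\,i^{-L}$ (and its $\alpha_z=3$ analogue) on $L\in\{0,1\}$. I would then carefully check that rewriting the integer lift of $L$ keeps all quadratic coefficients even modulo $4$. The pivot substitution step needs the same modulo-$4$ bookkeeping.
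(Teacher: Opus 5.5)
Your proof is correct, and it takes a different route from the paper's. The paper does not prove closure itself. It pads the factors to $W$ and imports the QPF closure lemma of de~Colnet et al.\ (their Lemma~3 and Appendix~E). That lemma charges $O((w+1)^3)$ per pairwise product. It computes the aggregate in one step, as the pushforward along the coordinate projection $\Pi:\mathbb F_2^W\to\mathbb F_2^P$ onto the retained scope, in $O((w+p+1)^3)$ time by Lemma~3(iv). The paper's only independent argument is about the scalars: it asserts that each single-variable elimination contributes a multiplier in $\{1,2,1+i,1-i\}$ up to a power of $i$, or yields zero.

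You instead derive closure from first principles. For the product you stack and row-reduce the affine systems. For aggregation you eliminate one variable at a time, using the mod-$4$ lift $\bigoplus_m x_m\equiv\sum_m x_m-2\sum_{m<m'}x_mx_{m'}$ and the parity case analysis on $\alpha_z$. The identity $1+i^{\alpha_z}(-1)^L=(1+i^{\alpha_z})\,i^{-\alpha_z L}$ you flag does hold for $\alpha_z\in\{1,3\}$, and the lift of $-\alpha_z L$ has quadratic coefficients $2\alpha_z\equiv 2\pmod 4$.

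Your route gives a self-contained proof in which the scalar multiplier set is derived rather than asserted. It also matches a nonscalar computation that genuinely proceeds one variable at a time. The two behaviours the paper describes informally after the proposition appear as your two cases: substitution along a functional dependency, and interference creating a new parity constraint. The paper's route is shorter and handles an arbitrary retained scope at once, but rests on the cited lemma and the AQ/QPF translation. Your iterated elimination costs $O(w(w+1)^2)$, so the stated bound is unaffected.

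One point needs tightening. Your case split is on whether $z$ is a pivot of the current reduced system, and your second case asserts that the support then does not constrain $z$. This misses a non-pivot column of $z$ that still occurs in some row. For example, with the single equation $u\oplus z=0$ and pivot $u$, the support forces $z=u$. The formula $\sum_z i^{q}=i^{q_0}\bigl(1+i^{\alpha_z}(-1)^L\bigr)$ then does not apply, because the indicator depends on $z$.

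Split instead on whether the column of $z$ is nonzero. If it is, pick a row containing $z$, make $z$ its pivot, and clear $z$ from the other rows in $O((w+1)^2)$ time. Your first case then applies verbatim, and your second case is exactly the situation where $z$ occurs in no equation.
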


\begin{proof}
Extend the factors to $W$ and apply the QPF closure result
of~\citet[Lemma~3 and Appendix~E]{deColnet+2026}.  Pointwise multiplication
of two QPFs on $W$ takes $O((w+1)^3)$ time, so their iterated product
$F=\prod_{j=1}^b F_j$ is computable in
$O((b+1)(w+1)^3)$ time.  The empty product is the constant-one AQ factor. For aggregation, let $P\subseteq W$ be the retained scope, put $p=|P|$,
and let $\Pi:\mathbb F_2^W\to\mathbb F_2^P$ be the coordinate projection.
The required aggregate is exactly the pushforward $\mathbf y\longmapsto
    \sum_{\mathbf x:\Pi\mathbf x=\mathbf y}F(\mathbf x)$. By~\citet[Lemma~3(iv)]{deColnet+2026}, this produces a QPF in
$O((w+p+1)^3)$ time.  Since $p\le w$, the total nonscalar cost is
$O((b+1)(w+1)^3)$.

Products multiply the incoming scalars, while each single-variable
elimination introduces only a multiplier from $\{1,2,1+i,1-i\}$, up to a
power of $i$, or produces zero.  Hence the result remains in $\cR$ and is
computed exactly using encoding~\eqref{eq:aq-scalar-encoding}.
\end{proof}

Operationally, AQ elimination captures both functional dependencies and
interference without materializing intermediate tables.  If the affine
support determines the eliminated variable, say
$z=\ell(\mathbf u)$, aggregation simply substitutes
$z\mapsto\ell(\mathbf u)$, as in the functional-dependency rewrites of
Section~\ref{sec:circuit-faq}.  If $z$ is unconstrained, the two terms
with $z=0$ and $z=1$ may instead cancel.  For example, two successive
Hadamards give
\[
    \sum_{z\in\{0,1\}}
       \tfrac12\,i^{2xz+2zy}
    =
    \tfrac12\bigl(1+(-1)^{x\oplus y}\bigr)
    =
    \iv{x=y}.
\]
Thus destructive interference is represented by the new affine support
constraint $x=y$, rather than by enumerating the surviving assignments.

\subsection{Evaluation and output guarantees}
\label{subsec:aq-evaluation}

\begin{theorem}[AQ-FAQ evaluation]
\label{thm:aq-faq}
Let $Q$ be an FAQ over $\cR$ on $N$ Boolean variables, with free-variable
set $Y$ and AQ input encodings of total length
$L=\lVert Q\rVert_{\mathrm{in}}$.  For every ordering of the bound
variables, \InsideOut{} with AQ \BucketEval{} and $\cP_i=\varnothing$
computes a single AQ encoding of $Q$ on $Y$ in $O\bigl((L+N)(N+1)^3+(L+N)^3\bigr)$ 
time, including exact scalar arithmetic.  In particular, scalar AQ FAQs
are exactly evaluable in polynomial time.
\end{theorem}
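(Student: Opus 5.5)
The proof runs \InsideOut{} with AQ \BucketEval{} and inducts on the elimination steps. The invariant is that every active factor is an AQ factor with a polynomial-size encoding. Initially this holds because the inputs are AQ encodings of total length $L$. At step $i$, Proposition~\ref{prop:aq-bucketeval} is applied to the bucket $\cB_i$ with $W=U_i$. Since $U_i\subseteq X$, we have $w\le N$, so the message $M_i$ is an AQ factor on $R_i$. It is computed in $O((|\cB_i|+1)(N+1)^3)$ nonscalar time. After the last bound variable is eliminated, the remaining active factors all have scopes inside $Y$. One more application of the product part of Proposition~\ref{prop:aq-bucketeval} with $W=Y$ merges them into a single AQ encoding of $Q$ on $Y$. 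Correctness is exactly the distributivity argument behind \InsideOut{} over the commutative ring $\cR$, with $\cP_i=\varnothing$.

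The nonscalar count comes from summing over steps. Each bucket consumes its factors and adds one message, so each factor is consumed at most once. The initial factors number at most $L$, and there are at most $N$ messages. Hence $\sum_i(|\cB_i|+1)=O(L+N)$, and the total nonscalar time is $O((L+N)(N+1)^3)$. The final merge falls within the same bound.

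The main obstacle is the exact scalar arithmetic, where bit lengths must stay polynomial. I would track the encoding \eqref{eq:aq-scalar-encoding}. Each message scalar is a product of scalars of consumed factors, together with at most one multiplier from $\{1,2,1+i,1-i\}$ (up to a power of $i$) per elimination. By induction, every scalar ever formed is the product of a disjoint subset of the input scalars and at most $N$ such multipliers. Its exponent $h$ and integer coefficients $a_j$ therefore have bit length $O(L+N)$. This holds because multiplying in $\mathbb Z[\sqrt2,i]$ adds bit lengths up to additive $O(1)$ terms, and the input scalars have total length at most $L$. Every multiplication and zero test is then on integers of $O(L+N)$ bits. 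There are $O(L+N)$ such operations, and each costs at most quadratic time with schoolbook arithmetic, giving $O((L+N)^3)$.

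Two details need checking. First, the powers of $2^{-1/2}$ combine by adding $h$, and normalizing $\sqrt2\cdot\sqrt2=2$ keeps the coefficients bounded as claimed. Second, a zero result is detected and propagated as the zero AQ factor, after which every later product is zero. Adding the two cost terms gives the stated time bound. For $Y=\emptyset$ the single AQ factor is a scalar in $\cR$, which proves the final claim.
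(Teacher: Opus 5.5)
Your proposal is correct and follows essentially the same route as the paper's proof. Both arguments rest on three points. First, Proposition~\ref{prop:aq-bucketeval} keeps each bucket message in AQ form, with $w_i\le N$. Second, since each factor is consumed at most once, only $O(L+N)$ factors are processed, each at $O((N+1)^3)$ nonscalar cost, and this count includes the final merge on $Y$. Third, every scalar is a product of distinct input scalars and at most $N$ multipliers from $\{1,2,1+i,1-i\}$, up to powers of $i$; hence there are $O(L+N)$ operations on $O(L+N)$-bit integers, costing $O((L+N)^3)$. Your remarks on correctness via distributivity and on propagating zero factors are harmless additions that the paper leaves implicit.
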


\begin{proof}
Proposition~\ref{prop:aq-bucketeval} preserves the AQ form of every
message.  Every intermediate factor has at most $N$ variables and hence
$O((N+1)^2)$ nonscalar data.  After elimination, the output operation
multiplies the remaining factors, extends their scopes to $Y$, and returns
the resulting AQ encoding.

If bucket $i$ contains $b_i$ factors and $w_i=|U_i|\le N$, then
Proposition~\ref{prop:aq-bucketeval} bounds its nonscalar cost by
$O((b_i+1)(w_i+1)^3)$.  Each input factor and each generated message is
consumed at most once, either in a later bucket or in the final output
product.  There are at most $N$ messages, so, writing $M$ for the number
of input factors, the total number of factors consumed is $O(M+N)$.
Since $M\le L$, the total nonscalar cost is
$O((L+N)(N+1)^3)$.

For scalars, the same consumption property ensures that every nonzero
intermediate scalar is a product of input scalars, each used at most
once, and at most $N$ multipliers from $\{2,1+i,1-i\}$, up to a power
of~$i$.  Its encoding therefore has $O(L+N)$ bits.  There are
$O(L+N)$ scalar multiplications and zero tests, which together cost at
most $O((L+N)^3)$ time using elementary integer arithmetic.  Combining
the two contributions gives the stated bound.
\end{proof}

The output convention is essential: a query on $|Y|$ free variables may
have $2^{|Y|}$ nonzero answers.  The theorem produces an AQ encoding of
size polynomial in $N+L$, rather than an explicit table of these values.

\subsection{Clifford circuits}
\label{subsec:aq-clifford}

We now apply the preceding theorem to the standard circuit FAQ of
Section~\ref{sec:circuit-faq}, with its original segment variables
and gate factors.  Let $Q$ denote the query with the input and output
boundary variables free, so that $Q(x,y)=\me{y}{U_C}{x}$.
Fixing boundary bits amounts to substituting their values, or equivalently
to adding the corresponding indicator factors and aggregating the pinned
variables; fixing all boundary bits gives $Q_C$.  No rewriting of
$Q$ is needed: we use AQ encodings for its factors and for the messages
computed by \InsideOut{}.

\begin{corollary}[Polynomial-time evaluation of Clifford circuit FAQs]
\label{thm:gk-faq}
Let $C$ be a Clifford circuit with $n$ qubits and $m$ gates over
$\{H,S,\CNOT\}$, and let $Q$ be as above.  For every ordering of its
bound variables, \InsideOut{} evaluates $Q$ in time polynomial in
$n+m$, using the AQ implementation of \BucketEval{} and
$\cP_i=\varnothing$.  The same guarantee holds after fixing any subset of
the input or output boundary variables.  The result is an AQ encoding on
the remaining free variables, or the exact scalar $\me{y}{U_C}{x}$ when
all boundary variables are fixed.
\end{corollary}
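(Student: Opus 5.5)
The plan is to reduce the corollary to Theorem~\ref{thm:aq-faq}. Two things need checking: every factor of the standard circuit FAQ $Q$ is an AQ factor over $\cR$, and the total encoding length $L$ and the variable count $N$ are polynomial in $n+m$. The running time bound $O((L+N)(N+1)^3+(L+N)^3)$ then gives the claim directly. The theorem holds for every ordering of the bound variables, so the same is true here.

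First, the variables and the input length. Each gate of $\{H,S,\CNOT\}$ touches at most two qubits. Hence the segment variables number $N\le n+\sum_t|S_t|+n\le 2n+2m$, counting padding identity gates for untouched wires. Each gate factor has scope of size at most $4$, so $L=O(n+m)$, up to scope identifiers of $O(\log(n+m))$ bits each.

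Second, I would exhibit the AQ encodings.
\begin{itemize}
\item $H(v;u)=2^{-1/2}i^{2uv}$ has full support ($r=0$), $\lambda=2^{-1/2}$ (so $h=1$, $a_0=1$), and $q=2uv$ with $\beta=1$.
\item $S(v;u)$ has the single constraint $u+v=0$, with $\lambda=1$ and $q=u$.
\item $\CNOT$ has the constraints $v_1+u_1=0$ and $v_2+u_1+u_2=0$, with $q=0$.
\item The identity padding $\iv{v=u}$ is an AQ factor with $\lambda=1$ and $q=0$.
\end{itemize}
For fixed boundary bits, I would add the unary indicators $\iv{z=x(q)}$. Each is an AQ factor with one affine row $z=x(q)$, and these variables become bound. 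Alternatively, one can substitute constants, which preserves the AQ form because affine systems and quadratic phases stay affine and quadratic under restriction. Either way the factor count grows by at most $2n$ and $L$ stays $O(n+m)$.

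Third, I would apply Theorem~\ref{thm:aq-faq} with $Y$ equal to the set of unfixed boundary variables. This yields an AQ encoding on $Y$ in time $O((n+m)^4)$ including exact scalar arithmetic. When $Y=\emptyset$, the AQ encoding on the empty scope is just a scalar $\lambda\in\cR$ or zero. By \eqref{eq:amp-faq} this scalar equals $\me{y}{U_C}{x}$. For correctness, \InsideOut{} with $\cP_i=\varnothing$ computes the FAQ exactly over $\cR\subseteq\C$, and all factor values lie in $\cR$, so the value agrees with the complex FAQ.

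The main obstacle is bookkeeping rather than mathematics. One point is to confirm that fixing boundary bits through indicators keeps every intermediate scope within the $N$ variables used by the theorem. The other is to confirm that the scalar-length bound in Theorem~\ref{thm:aq-faq} applies, since the input scalars $2^{-1/2}$ have constant encodings. Both follow from the counts above.
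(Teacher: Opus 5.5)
Your proposal is correct and follows the paper's proof. You check that the $H$, $S$, $\CNOT$, identity and boundary-indicator factors are AQ, or that fixing boundary bits by substitution preserves the AQ form. You note $N=O(n+m)$ wire-segment variables and polynomial input length, and then invoke Theorem~\ref{thm:aq-faq}, which already holds for every elimination order; you just spell out the encodings and counts that the paper leaves implicit.
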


\begin{proof}
Initialize \InsideOut{} with the original gate factors of $Q$, using
the AQ encodings displayed in Section~\ref{subsec:aq-representation}.
Identity factors and any fixed computational-basis boundary factors are
also AQ.  At step $i$, \InsideOut{} replaces its bucket $\cB_i$ by the
ordinary message~\eqref{eq:aq-bucket-message}.
Proposition~\ref{prop:aq-bucketeval} computes exactly this message in AQ
form, so all active factors remain AQ throughout the run.

The query has $N=O(n+m)$ Boolean wire-segment variables and AQ input
encoding length polynomial in $n+m$.  Theorem~\ref{thm:aq-faq} therefore
gives the claimed running time and symbolic output for every elimination
order.  Fixing boundary variables preserves the AQ form, either by affine
substitution or by adding AQ indicators and treating the pinned variables
as bound, so the same theorem applies to each such query.
\end{proof}

This corollary recovers the exact-amplitude consequence of the
Gottesman--Knill theorem~\citep{Gottesman1998,AaronsonGottesman2004} for circuits
over $\{H,S,\CNOT\}$ with computational-basis boundaries; the same
consequence was obtained through QPF evaluation
by~\citet[Corollary~3]{deColnet+2026}.  The scalar encoding preserves the
global phase of the specified gate circuit.  Here the result follows by
applying \InsideOut{} directly to the standard FAQ $Q_C$.  Queries with
free boundary variables use the same evaluation scheme.

\paragraph{Comparison with specialised Clifford simulators.}
Corollary~\ref{thm:gk-faq} gives an elimination-order-independent
tractability guarantee, rather than a sharp dependence on the number of
qubits and gates.  For comparison, phase-sensitive quadratic-form
simulation gives an $O((m+1)n^2)$ bound for a computational-basis
amplitude~\citep{Beaudrap2022faststabiliser}, while fast-linear-algebra
methods give an $O(mn^{\omega-1})$ bound when $m=\Omega(n)$, where
$\omega$ is a matrix-multiplication exponent~\citep[Corollary~37]{pang2025}.
Our guarantee applies directly to the standard FAQ and holds for every
elimination order, with a symbolic AQ answer on the remaining free
variables.  Sharper quantitative bounds would require a more refined
implementation of AQ evaluation, for example one exploiting
circuit-respecting orders and incremental or blocked linear algebra.

\section{Related work}
\label{sec:related-work}
Classical quantum circuit simulation encompasses several computational tasks, including
amplitude evaluation, output-probability computation, and sampling.
Their exact and approximate variants admit different guarantees and techniques. We focus on exact computational-basis amplitudes.  We organize related work around the representations used to evaluate them and the structure that keeps intermediate computations small.

\paragraph{Sum-product evaluation and annotated relations.}
The functional aggregate query (FAQ) framework
of~\citet{AboKhamisNgoRudra2016}  unifies problems from databases,
constraint satisfaction, graphical models, and matrix computation.
Its \InsideOut{} algorithm builds on variable elimination and bucket
elimination~\citep{Dechter1999}, combining them with fractional-cover
bounds and worst-case optimal join
algorithms~\citep{AtseriasGroheMarx2013,NgoPoratReRudra2018}.
The interpretation of factors as semiring-annotated relations is also
standard in database provenance~\citep{GreenKarvounarakisTannen2007}.
The original FAQ framework already accommodates compactly represented
factors.  We specialize this flexibility to quantum circuits by making
support processing and the algebraic representation of intermediate
factors explicit in the evaluation analysis.  Sparse joins compute
buckets from factor supports, functional dependencies simplify circuit
factors, and algebraic encodings retain messages whose explicit tables
would be large.  These choices preserve the generic \InsideOut{}
semantics while changing how \BucketEval{} computes and represents
its result.

\paragraph{Tensor networks and graphical models.}
\citet{MarkovShi2008} established the central connection between
quantum-circuit simulation and tensor-network contraction, with
contraction complexity characterized by the treewidth of the
network's line graph.  \citet{BoixoEtAl2017} formulated circuit
amplitudes as complex-valued undirected graphical models and evaluated
them by variable elimination, hence close in spirit to the circuit FAQ.  Subsequent work has improved contraction planning
and execution, including hyper-optimized contraction
paths~\citep{GrayKourtis2021} and parallel execution with reuse of shared
contractions~\citep{VincentEtAl2022}.  These methods complement our focus
on the representation and computation of individual intermediate
factors.  Moreover, tensor methods are not restricted to dense
contraction of the original circuit network: low-entanglement
simulation~\citep{Vidal2003} and tree-tensor-network descriptions based
on Schmidt-rank width~\citep{VanDenNestEtAl2007} already exploit algebraic
compression.  Consequently, comparisons between width bounds must
specify both the graph being measured and the representation used for
its intermediate tensors.

\paragraph{Path sums, exponential sums, and model counting.}
The path-sum view expresses amplitudes as sums of phase contributions
over classical assignments.  \citet{DawsonEtAl2005} connected quantum
computation to counting solutions of polynomial equations over
$\mathbb Z_2$.  Symbolic path sums and their rewrite rules were developed
for circuit verification by~\citet{Amy2019};
\citet{AmyStinchcombe2025} subsequently proved polynomial-time
simulation of a family of hidden-shift circuits through confluent
rewriting of symbolic sums.  \citet{MeiEtAl2024} instead reduced strong
simulation to weighted model counting using a Pauli-based encoding of
circuit evolution.  These approaches share the use of classical
assignments and weighted aggregation, but expose different intermediate
objects.  In our quadratic sum-of-powers (SOP) setting, the circuit FAQ
is first reduced using the relevant functional dependencies and
boundary constraints.  Its phase contributions are then retained over
$K_r=\mathbb R[\mathbb Z_r]$, with the complex answer recovered by the
homomorphism $\nu_r$.  This connects phase-residue counting to the same
factor-evaluation interface used for the original circuit FAQ.

\paragraph{Decision diagrams and cut-rank compression.}
\citet{WangEtAl2025} introduced \textsc{FeynmanDD}, which compresses
Feynman-path counting functions using classical decision diagrams.
\citet{Cheng+2025} analyzed this representation in terms of the linear
rank-width of the associated path-variable graph.
\citet{deColnet+2026} obtained a rank-decomposition dynamic program for
quadratic SOPs with $4^k\operatorname{poly}(N)$ arithmetic complexity,
for fixed modulus and a supplied decomposition of width $k$.
\citet{KuyanovKissinger2026} also obtained rank-width-based evaluation
with $4^k$ dependence through contraction of graph-like ZX-diagrams.
Our rank-width result recovers this guarantee within \InsideOut{},
by evaluating the lift of the FD-reduced circuit FAQ using
phase-monomial messages.  The messages group eliminated
assignments according to their effect on the remaining variables;
\BucketEval{} exploits this grouping without materializing the full
remaining-variable function.  The relevant phase graph is extracted
after the reductions required by our compatibility assumptions, rather
than obtained by ignoring arbitrary support constraints.

\paragraph{Clifford simulation and affine--quadratic factors.}
Efficient stabilizer simulation is classically realized through
compact tableau representations~\citep{AaronsonGottesman2004}.
Affine supports and quadratic phases provide another description of
stabilizer amplitudes and Clifford
operations~\citep{DehaeneDeMoor2003}, developed algorithmically through
quadratic-form expansions by~\citet{Beaudrap2022faststabiliser}.
\citet{BuKoh2022} gave a related treatment through periodic quadratic
half Gauss sums and connected their tractability to affine signatures
in the Holant framework.  The phase-sensitive closure of quadratic
phase functions under products, affine substitutions, and summation
is also stated and used by~\citet{deColnet+2026}.  Our affine--quadratic
(AQ) evaluator uses this established algebra within the FAQ framework:
AQ factors can have arbitrary scopes, and multiplication and
aggregation preserve polynomial-size encodings under every ordering
of the bound variables.  Applied to the standard circuit FAQ $Q_C$,
this gives polynomial-time evaluation of Clifford amplitudes without
requiring a width bound or a circuit-specific elimination order.
This derives the Clifford case from a factor-level closure theorem
and realizes it through the same \BucketEval{} interface.  Free
variables can remain symbolic, with the answer returned as an AQ
encoding rather than an exponentially large table.

\paragraph{Stabilizer decompositions and hybrid representations.}
Near-Clifford simulation expands non-Clifford resources into sums of
tractable stabilizer objects, with costs governed by decomposition
size or related magic measures~\citep{BravyiGosset2016,BravyiEtAl2019}.
Graphical and partial stabilizer decompositions combine this approach
with ZX-calculus simplification~\citep{KissingerEtAl2022}, while
\textsc{LIMDD} incorporates stabilizer structure into a decision-diagram
representation~\citep{VinkhuijzenEtAl2023}.
There are also explicit unifications of algebraic and structural
simulation: \citet{CodsiLaakkonen2026} combine graph measures with
stabilizer decompositions, and \citet{deColnet+2026} combine cut-rank
and stabilizer-based evaluation within their SOP dynamic program.
We share the goal of combining structural and algebraic tractability,
but organize the comparison around the semantics of intermediate FAQ
factors: the same elimination step can be implemented with different
representations, whose construction, combination, and aggregation
costs are analyzed separately from their scopes.  Our results give
concrete instances of this representation-independent interface,
rather than a general conversion between all simulation representations.

\paragraph{Database techniques for quantum computing.}
The current paper can be viewed as part of a broader ongoing research programme investigating how database techniques—both theoretical and systems-oriented—can support quantum computing, complementing the growing body of work on quantum algorithms for database problems. \citet{HaiEtAl2025QuantumDataManagement} articulate this
research direction and identify opportunities for data management in
the quantum-computing stack.   In particular, it proposed the use of database techniques for building a classical quantum circuit simulation engine. The current paper can be seen as foundation for such engine. On the system side,
Qymera~\citep{LittauHai2025Qymera}
represents quantum states and gates as relations and translates circuit
execution into SQL, allowing relational database engines to perform the
simulation.  InferQ~\citep{IlinescuEtAl2026InferQ} develops a
database-oriented benchmark that generates compositional circuits and
corresponding SQL workloads, together with circuit and query features
for studying execution performance and backend selection.  At the
workload level, QuWARP~\citep{LittauHai2026QuWARP} plans the
materialization and reuse of intermediate simulation states across
related tasks, while checking that the retained states support the
requested continuations.  
Geerts and Hai~\citep{Geerts+2025} outline a translation of
quantum-circuit simulation into conjunctive queries over
complex-valued semiring annotations, highlighting how hypertree width
and gate sparsity can jointly inform simulation complexity. This work generalizes this idea  by developing a representation-aware account of FAQ evaluation
for quantum-circuit simulation.  We separate the sum--product semantics
from the coefficient semiring, factor representation, and elimination
plan, and study how support and phase structure can be exploited to
compute compact intermediate factors.  The resulting perspective is
not tied to a particular SQL encoding or database engine, but makes the
representation of intermediate functions an explicit part of the
evaluation algorithm and its complexity analysis.

\section{Conclusion}
\label{sec:conclusion}
We considered simulating quantum circuits through classical computation with the core goal of a \textit{unified} approach: the same FAQs (functional aggregate queries) were used to derive existing and new results. Let us review the pay-off of the approach:

\begin{itemize}
    \item \textbf{FAQs with explicit factors} (yielding Markov-Shi):\\[.5em] In Section~\ref{sec:explicit-tables}, we considered FAQs whose factors are represented by indexed listings of their nonzero entries. This representation allowed us to recover the landmark treewidth bound of \cite{MarkovShi2008}. In addition, we were able to show stronger bounds on the exponents of simulation algorithms. Our technical toolkit for that was considering functional dependencies and fractional edge covers.

    \item \textbf{FAQs with phase factors} (yielding Cheng et al./de Colnet et al./Wang et al.):\\[.5em] In Section~\ref{sec:compressed-boundaries} we used a semiring that retains phases formally. To that end, we introduced the phase semiring $K_r\coloneqq\mathbb R[\Z_r]$ for natural numbers $r$. This allowed us to show results for any class of circuits with the same restriction on gates as in \cite{WangEtAl2025,Cheng+2025,deColnet+2026}: an upper bound on the complexity of quantum circuit simulation 
    in terms of the $\fhtw$ (fractional hypertree-width) of FAQs that is never worse than the upper bound of 
    \cite{WangEtAl2025,Cheng+2025,deColnet+2026} in terms of the (linear) rank-width. In addition, the reconstruction is technically much simpler and more compact than the original construction. 

    \item \textbf{FAQs with coefficient tensors} (generalizing both above with a more complex semiring): \\[.5em]
    In Section~\ref{sec:sopw} we obtained a novel approach by \textit{combining and integrating} ideas of the above two families of approaches. To this end, we introduced the parameter tlsw (tensor layout symmetry width) that is strictly more general than both the treewidth approach of \cite{MarkovShi2008}, and the  rank-width methods of  \cite{WangEtAl2025,Cheng+2025,deColnet+2026} for identifying tractable fragments for strong simulation. As with the above approaches, key to the unified approach is that the algorithm is still simply \InsideOut, but over the (more complex) semiring we named phase coefficient tenors. In addition, we identified a class of quantum circuits for which the treewidth of the graph representation and the rank-width are unbounded but the tlsw is at most 2.

    \item \textbf{FAQs with affine-quadratic factors} (yielding the exact-amplitude consequence of the Gottessman-Knill theorem):\\[.5em]
    In Section~\ref{sec:gk-faq} we consider the scenario of the Gottesman-Knill theorem~\citep{Gottesman1998,AaronsonGottesman2004}, namely for circuits over $\{H,S,\CNOT\}$ with computational-basis boundaries. In particular, we recovered the exact-amplitude consequence of the Gottesman--Knill theorem. What is technically interesting and unifying is that it is obtained by
applying \InsideOut{} directly to the standard FAQ $Q_C$, concluding our journey of using a unifying formalism for a multitude of results.

\end{itemize}

\paragraph{Future work.}
We have shown that the established theory of FAQs provides a unified perspective of many algorithmic developments for classical quantum circuit simulation. Building on this perspective and opens several natural directions for further research.

\emph{Beyond quadratically compatible gate sets.}
Our results for rank width and $\tlsw$ assume quadratic $r$-compatible gate sets. This limitation is also present in prior work that considered rank width and still permits universal gate sets.
However, important examples, such as the $\mathsf{fSim}(\pi/2,\pi/6)$ gate used in Google's quantum supremacy experiment~\citep{Arute2019} motivates investigating relaxations of these conditions. In particular, we believe that algebraic generalizations of rank width can allow for the required relaxation of our assumptions while retaining the algorithmic core of Section~\ref{sec:compressed-boundaries}.

\emph{Faster and more general AQ evaluation.}
An interesting gap remains in our results for the affine-quadratic setting. Specialized algorithms for simulation of Clifford circuits achieve better worst-case complexity than our current presentation. It is then of interest if similarly efficient algorithms can be achieved without deviating substantially algorithmically. 
In particular, it should be investigated whether one can close this gap by exploiting recent advances in combining FAQ evaluation and fast matrix  multiplication~\citep{AboKhamisHuSuciu2025}.
Moreover, combining AQ factors with stabilizer-rank
techniques~\citep{BravyiGosset2016,BravyiEtAl2019}
may further extend the approach beyond Clifford circuits.

\emph{Alternative factor representations.}
Our results highlight the importance of compact representations of intermediate FAQ factors. Indeed, compact representation turns out to the only relevant consideration in our results, as the actual overarching algorithm remains unchanged throughout. 
This motivates similar exploration also for other tensor-network representations, including MPS and PEPS, as well as techniques from knowledge-compilation \citep{Quist2026,Vinkhuijzen2026}.
A particularly intriguing direction is to establish a theory of representation-aware FAQ evaluation,
where elimination strategies and factor representations
are optimized jointly, i.e., allowing for potential changes in representation during evaluation.

\emph{Noisy quantum circuits.}
Finally, while FAQs already accommodate quantum channels through density-matrix encodings, efficiently simulating realistic noisy circuits remains an important challenge. Exploring noise-aware factor representations and approximate FAQ evaluation could extend our framework to more realistic quantum-computing scenarios.

\clearpage

\appendix

\section{Proof of the clique-chain separation}
\label{app:clique-chain-separation}
We prove the three bounds in Theorem~\ref{thm:clique-chain-separation}
and the stronger bound $\lrw(P_{C(n)})\geq n/2$ for linear rank-width.

\subsection{Circuit family and phase graph}
\label{app:clique-chain-construction}

For $n,m\geq1$, use the blocks of~\eqref{eq:standard-circuit-blocks}
to define
\begin{equation}
    \mathsf B_n:=\CZblock{n};\Tblock{n};\Hblock{n},
    \qquad C'(n,m):=\Hblock{n};\mathsf B_n^{[m]},
    \qquad C(n):=C'(n,n).
    \label{eq:clique-chain-circuit}
\end{equation}
Figure~\ref{fig:example-circuit} shows $C'(4,3)$.
Write $\bx$ and $\by$ for the fixed input and output strings.
For $i\in[m]$ and $j\in[n]$, let $v_{i,j}$ be the single variable
obtained by identifying the wire-segment variables on qubit $j$
through the diagonal part $\CZblock{n};\Tblock{n}$ of repetition $i$.
The Hadamard factors on qubit $j$ have arguments
$(x_j,v_{1,j})$, $(v_{i,j},v_{i+1,j})$ for $i\in[m-1]$, and
$(v_{m,j},y_j)$.  For $C(n)$, write $G_n:=P_{C(n)}$.  This graph has vertex set
$V=\bigcup_{i=1}^n K_i$, where
\[
    K_i=\{v_{i,1},\ldots,v_{i,n}\}\qquad(i\in[n]),
\]
and edge set
\[
    E(P_{C(n)})
    =\bigcup_{i=1}^n\binom{K_i}{2}
      \;\cup\;
      \bigl\{\{v_{i,j},v_{i+1,j}\}:
               i\in[n-1],\ j\in[n]\bigr\}.
\]

Thus $G_n$ is a chain of $n$ cliques of size $n$, with a perfect
matching between consecutive cliques.

\begin{figure}[H]
\centering
\resizebox{\textwidth}{!}{
\begin{quantikz}[row sep=0.38cm, column sep=0.16cm]
\lstick{$\ket{x_1}$}
 & \gate{H}
 & \ctrl{1}\blockgroup{block 1}
 & \ctrl{2}
 & \ctrl{3}
 & \qw
 & \qw
 & \qw
 & \gate{T}
 & \gate{H}
 & \ctrl{1}\blockgroup{block 2}
 & \ctrl{2}
 & \ctrl{3}
 & \qw
 & \qw
 & \qw
 & \gate{T}
 & \gate{H}
 & \ctrl{1}\blockgroup{block 3}
 & \ctrl{2}
 & \ctrl{3}
 & \qw
 & \qw
 & \qw
 & \gate{T}
 & \gate{H}
 & \qw
 & \rstick{$\bra{y_1}$}
\\
\lstick{$\ket{x_2}$}
 & \gate{H}
 & \control{}
 & \qw
 & \qw
 & \ctrl{1}
 & \ctrl{2}
 & \qw
 & \gate{T}
 & \gate{H}
 & \control{}
 & \qw
 & \qw
 & \ctrl{1}
 & \ctrl{2}
 & \qw
 & \gate{T}
 & \gate{H}
 & \control{}
 & \qw
 & \qw
 & \ctrl{1}
 & \ctrl{2}
 & \qw
 & \gate{T}
 & \gate{H}
 & \qw
 & \rstick{$\bra{y_2}$}
\\
\lstick{$\ket{x_3}$}
 & \gate{H}
 & \qw
 & \control{}
 & \qw
 & \control{}
 & \qw
 & \ctrl{1}
 & \gate{T}
 & \gate{H}
 & \qw
 & \control{}
 & \qw
 & \control{}
 & \qw
 & \ctrl{1}
 & \gate{T}
 & \gate{H}
 & \qw
 & \control{}
 & \qw
 & \control{}
 & \qw
 & \ctrl{1}
 & \gate{T}
 & \gate{H}
 & \qw
 & \rstick{$\bra{y_3}$}
\\
\lstick{$\ket{x_4}$}
 & \gate{H}
 & \qw
 & \qw
 & \control{}
 & \qw
 & \control{}
 & \control{}
 & \gate{T}
 & \gate{H}
 & \qw
 & \qw
 & \control{}
 & \qw
 & \control{}
 & \control{}
 & \gate{T}
 & \gate{H}
 & \qw
 & \qw
 & \control{}
 & \qw
 & \control{}
 & \control{}
 & \gate{T}
 & \gate{H}
 & \qw
 & \rstick{$\bra{y_4}$}
\end{quantikz}
}
\caption{The circuit $C'(4,3)$ of~\eqref{eq:clique-chain-circuit}.
Each dashed box is one occurrence of
$\mathsf B_4=\CZblock{4};\Tblock{4};\Hblock{4}$.}
\label{fig:example-circuit}
\end{figure}

\subsection{Bounded tensor layout symmetry width}
\label{app:clique-chain-tlsw}

Fix $n\geq3$.
We construct a plan $\mathcal P$ with
$\rho(\mathcal P)\leq1$ and $q(\mathcal P)\leq2$.

\begin{proof}
Take any $r$ with $8\mid r$ and the gate expressions
\[
    E_H(u,v)=2^{-1/2}\phases{(r/2)uv},\qquad
    E_T(u)=\phases{(r/8)u},\qquad
    E_{\CZ}(u,v)=\phases{(r/2)uv}.
\]
Each $T$ gate contributes $r/8$ to its variable. The boundary Hadamards
add $r/2$ times the corresponding fixed bit. Thus the unary coefficient
function $b_{\bx,\by}:V\to\Z_r$ is given by
\begin{equation}
    b_{\bx,\by}(v_{i,j})
      =\frac r8+\frac r2\bigl(\mathbf 1[i=1]x_j
                    +\mathbf 1[i=n]y_j\bigr)
      \quad\text{in }\Z_r.
    \label{eq:clique-chain-unary-coefficients}
\end{equation}
Since $n\geq3$, the two boundary contributions occur in different
blocks. Each coefficient is therefore either $r/8$ or $5r/8$. These
values are distinct, and which vertices have equal coefficients is independent
of $r$, so the TLSW profile of any fixed plan is also independent of $r$.

We process the cliques in order. Let $T$ be the path $t_1,\ldots,t_n$,
rooted at $t_n$, with bags
\[
    B_{t_i}=K_i\cup K_{i+1}\quad(i\in[n-1]),
    \qquad B_{t_n}=K_n.
\]
For $i\in[n-1]$, we have $S_{t_i}=K_{i+1}$,
$U_{t_i}=K_i$, and
\[
    E_{t_i}
      =\binom{K_i}{2}
       \cup\bigl\{\{v_{i,j},v_{i+1,j}\}:j\in[n]\bigr\},
    \qquad E_{t_n}=\binom{K_n}{2}.
\]

The coefficient layout groups variables by qubit. For $j\in[n]$, set
\[
    L_j:=\{v_{1,j},\ldots,v_{n,j}\}.
\]
Start with a rooted binary tree with leaf set $[n]$.  For each $j\in[n]$,
replace leaf $j$ by a rooted binary tree with leaf set $L_j$, and denote
the resulting coefficient layout by $\mathcal R$.  Set
$\mathcal P:=((T,(B_t)_{t\in V(T)}),\mathcal R)$.

We first bound the local cut-rank. For $i\in[n-1]$ and
$e\in E(\mathcal R[B_{t_i}])$, the cut
$(J_e,B_{t_i}\setminus J_e)$ either has a
singleton side or partitions $B_{t_i}$ into unions of the pairs
$\{v_{i,j},v_{i+1,j}\}$, $j\in[n]$.  In the first case
$\gamma_{P_{C_{t_i}}}(J_e)\leq1$.  In the second, no matching edge crosses
the cut, and the cross-adjacency matrix of the clique on $K_i$ has rank at
most one over $\F_2$.  Thus $\gamma_{P_{C_{t_i}}}(J_e)\leq1$ in both
cases.  At $t_n$, each cut matrix of the clique on $K_n$ also has rank
at most one.
Hence $\rho(\mathcal P)\leq1$.

For separator symmetry, qubits with the same input bit can be
exchanged throughout the processed part of the circuit. More precisely,
for $i\in[n-1]$, we have
$I_{t_i}=K_1\cup\cdots\cup K_i$ and
$E_{t_i}^\downarrow=E(P_{C(n)}[K_1\cup\cdots\cup K_{i+1}])
\setminus\binom{K_{i+1}}2$.
If $x_j=x_k$, let $\pi_{i;j,k}$ exchange $v_{h,j}$ and $v_{h,k}$
for every $h\in[i+1]$ and fix the remaining vertices of $V_{t_i}$.
Then $\pi_{i;j,k}|_{S_{t_i}}$ exchanges $v_{i+1,j}$ and
$v_{i+1,k}$, and
\[
    \pi_{i;j,k}(E_{t_i}^\downarrow)=E_{t_i}^\downarrow,
    \qquad
    b_{\bx,\by}(\pi_{i;j,k}(v))=b_{\bx,\by}(v)
       \quad(v\in I_{t_i}),
\]
where the second equality follows from
\eqref{eq:clique-chain-unary-coefficients}.  By the definition
of $\sim_{t_i}$,
\[
    x_j=x_k
    \quad\Longrightarrow\quad
    v_{i+1,j}\sim_{t_i}v_{i+1,k}.
\]
Thus $|\mathcal C_{t_i}|\leq2$ for $i\in[n-1]$ and
$\mathcal C_{t_n}=\emptyset$, so $q(\mathcal P)\leq2$.  Therefore
\[
    \widehat{\tlsw}(\widehat Q_{C(n)})
       \leq\max\{\rho(\mathcal P),q(\mathcal P)\}\leq2.
\]
\end{proof}

\subsection{Treewidth of the circuit graph}
\begin{proposition}
\label{prop:clique-chain-treewidth}
For every $n\geq2$, the circuit $C(n)$ satisfies
$\tw(G_{C(n)})\geq n/4-1$.
\end{proposition}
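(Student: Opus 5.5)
The plan is to find a large grid-like minor inside $G_{C(n)}$ and invoke minor-monotonicity of treewidth. The circuit $C(n)=\Hblock{n};\mathsf B_n^{[n]}$ contains $n$ consecutive layers of $\CZblock{n}$. In the circuit graph, the $\CZ$ gates of repetition $i$ form a set of $\binom n2$ vertices. The gate $\CZ_{j,k}$ of that repetition lies on the wires of qubits $j$ and $k$. Within one repetition, consecutive gate occurrences on the same qubit are joined by wire segments.

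The approach has three steps.

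First, for each repetition $i$ and each qubit $j$, contract the wire path of qubit $j$ restricted to repetition $i$. This path consists of the $n-1$ $\CZ$ occurrences touching $j$, then $T_j$, then the following $H_j$, joined by wire segments. Since these occurrences are consecutive on the wire, the path is connected. Contracting it gives a vertex $w_{i,j}$.

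These branch sets are not disjoint, because each $\CZ_{j,k}$ lies on two wires. To repair this, assign each $\CZ_{j,k}$ to a single one of the two branch sets, say the one for $\min(j,k)$. The set for qubit $j$ may then become disconnected, since it loses gates shared with smaller qubits.

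Second, to avoid this difficulty, aim for a weaker but sufficient structure. Keep only a matching-like portion of each clique layer. Pair the qubits as $\{1,2\},\{3,4\},\ldots$, and use the wires themselves as the disjoint branch sets. Then, across repetitions, route connections through $\CZ$ gates so that the $n/2$ wires, together with rungs formed by $\CZ_{2\ell,2\ell+1}$ in successive repetitions, give an $(n/4)\times(n/4)$ grid or wall minor.

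Concretely, take the horizontal paths to be segments of qubit wires spanning about four repetitions each. Take the vertical connections to be $\CZ$ gates between neighbouring qubits in a fixed repetition. Each $\CZ$ gate has degree $4$, with two incoming and two outgoing wire segments, so it can serve as a crossing point. This yields a wall of size roughly $(n/4)\times(n/4)$.

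The treewidth of a $k\times k$ grid is $k$, and a wall of size $k$ has treewidth at least $k-1$. This gives $\tw(G_{C(n)})\ge n/4-1$.

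As a fallback, use a brambles/separator argument. Any balanced separator of $G_{C(n)}$ must either cut many of the $n$ long wire paths, which run through all repetitions, or separate two repetition layers. Separating two repetition layers requires removing at least $n$ wire segments' worth of gate vertices, since consecutive layers are joined by $n$ disjoint wire paths through $T_j,H_j$. Within a single layer, the $\CZ$ gates induce a highly connected structure: the line-graph-like structure of $K_n$ subdivided along wires. Mapping these facts into a bramble of order about $n/4$ gives the bound.

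The main obstacle is the second step: exhibiting disjoint, connected branch sets that realize a grid of side about $n/4$. This requires careful bookkeeping of which $\CZ$ occurrence on each wire is used as a crossing, given the fixed lexicographic gate order. I would first try the simplest choice, using only gates $\CZ_{j,j+1}$ as vertical rungs and whole wire segments between them as horizontal paths, and check that the lexicographic order places these rungs so that the paths remain vertex-disjoint. The constant $4$ leaves room to lose factors of $2$ both in halving the qubits and in spacing the repetitions.
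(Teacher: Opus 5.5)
There is a genuine gap: the grid or wall minor is never exhibited, and the construction you sketch cannot be completed as described. In $G_{C(n)}$ the wires are edge-disjoint but not vertex-disjoint. The gate vertex $\CZ_{j,k}$ lies on both wire $j$ and wire $k$, and every $\CZblock{n}$ contains all pairs, so any two wires share a vertex in each of the $n$ repetitions. Hence the claim that the wires themselves can serve as disjoint branch sets is false. Any two wires chosen as rows of a wall cross at $\CZ_{j,k}$ in every repetition. That vertex can belong to only one branch set, and deleting it cuts both wires, so restricting the rungs to $\CZ_{2\ell,2\ell+1}$ does not remove the obstruction. You correctly notice this non-disjointness in Step 1 and flag Step 2 as the main obstacle, but it is left unresolved. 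The bramble fallback does not define a bramble or bound its order. The constant $4$ also does not come from halving qubits and spacing repetitions.

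The missing idea is that your Step 1 works verbatim in the line graph $\lline(G_{C(n)})$, which is what the paper does. There the vertices are wire segments, each belonging to a single wire. Fix one occurrence of $\CZblock{n}$. The segments of qubit $j$ within the block form a connected set, since consecutive segments share a gate, and this set is disjoint from those of every other qubit. At $\CZ_{j,\ell}$ a segment of $j$ and a segment of $\ell$ share the gate and are adjacent. Contracting gives a $K_n$ minor, so $\tw(\lline(G_{C(n)}))\geq n-1$. The comparison $\tw(\lline(G))+1\leq\Delta(\tw(G)+1)$ from the proof of Proposition~\ref{prop:tlsw-comparisons}, with maximum degree $\Delta=4$ attained at the $\CZ$ vertices, then gives $n\leq4(\tw(G_{C(n)})+1)$. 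That maximum degree is where the $4$ comes from.

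If you prefer to stay with a grid in $G_{C(n)}$ itself, look inside a single block rather than across repetitions. Wire $j$ visits $\CZ_{1,j},\ldots,\CZ_{j-1,j}$ and then $\CZ_{j,j+1},\ldots,\CZ_{j,n}$ consecutively. So the gates $\CZ_{i,j}$ with $i\leq\lfloor n/2\rfloor<j$ span a $\lfloor n/2\rfloor\times\lceil n/2\rceil$ grid subgraph, with rows running along wire $i$ and columns along wire $j$. This gives $\tw(G_{C(n)})\geq\lfloor n/2\rfloor-1\geq n/4-1$ directly, without the line-graph comparison.
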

\begin{proof}
Fix one occurrence of $\CZblock{n}$. Each wire segment is a vertex of
$\lline(G_{C(n)})$. For each $\CZ$ gate in the block, contract the edge
between its input and output segments on each of its two wires.
These edges exist because the segments meet at the gate.

The segments of the block on qubit $j$ now form one vertex $v_j$.
For every $j\ne\ell$, the gate $\CZ_{j,\ell}$ gives an edge
$v_jv_\ell$. Thus $K_n$ is a minor of $\lline(G_{C(n)})$, and
\[
    \tw\bigl(\lline(G_{C(n)})\bigr)\geq n-1.
\]
The circuit graph has maximum degree four. Applying the
line-graph comparison from the proof of
Proposition~\ref{prop:tlsw-comparisons} gives
\[
    n\leq\tw\bigl(\lline(G_{C(n)})\bigr)+1
      \leq4\bigl(\tw(G_{C(n)})+1\bigr),
\]
so $\tw(G_{C(n)})\geq n/4-1$.
\end{proof}

\subsection{Rank-width of the phase graph}

Both lower bounds follow from the same cut argument: a balanced cut
either splits many cliques or crosses many matching edges. We first
establish these two estimates, then apply them to linear layouts
and rank decompositions.

Fix $n\geq2$, and write $G:=G_n$, $V:=V(G)$, and $A:=A_G$.
We call the cliques $K_i$ \emph{variable blocks}. For disjoint
$X,Y\subseteq V$, write
$A[X,Y]$ for the submatrix with rows indexed by $X$ and columns indexed
by $Y$, and extend the cut-rank notation of
Section~\ref{sec:cut-rank} by setting
\[
    \gamma_G(X,Y):=\operatorname{rank}_{\F_2}A[X,Y].
\]
Thus $\gamma_G(X,V\setminus X)=\gamma_G(X)$.

The rank $\gamma_G(X,Y)$ is monotone in both arguments: deleting rows
or columns cannot increase rank.  By the description of $G_n$,
\[
    A_{v_{i,j},v_{k,\ell}}=
    \begin{cases}
        1 & \text{if }(i=k\text{ and }j\ne\ell)
                    \text{ or }(|i-k|=1\text{ and }j=\ell),\\
        0 & \text{otherwise}.
    \end{cases}
\]
In particular, $A_{v_{i,j},v_{k,\ell}}=0$ whenever
$(k+1,\ell)<_{\mathrm{lex}}(i,j)$.
Let $(P_1,P_2)$ be a partition of $V$.

\begin{lemma}
\label{lem:some-but-not-all-variables}
Suppose that $m$ distinct variable blocks $K_{i_1},\ldots,K_{i_m}$
are split by $(P_1,P_2)$: each meets both $P_1$ and $P_2$.
Then $\gamma_G(P_1,P_2)\geq m/2$.
\end{lemma}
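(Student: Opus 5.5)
For each split block $K_{i_s}$ I will pick one vertex in $P_1$ and one in $P_2$, say $u_s=v_{i_s,j_s}\in P_1$ and $w_s=v_{i_s,\ell_s}\in P_2$ with $j_s\neq\ell_s$. Such a pair exists because $K_{i_s}$ meets both sides. Since $K_{i_s}$ is a clique, $A[u_s,w_s]=1$.

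The plan is then to find a submatrix of $A[P_1,P_2]$ of full rank of size about $m/2$. I will choose a subfamily of the split blocks whose indices are pairwise at distance at least $2$: from $i_1<\cdots<i_m$, take every other one, namely $i_1,i_3,i_5,\ldots$. This is not yet enough, because two such blocks could still be consecutive integers; for example, $i_1=1$ and $i_3=2$ is impossible since $i_2$ lies strictly between them, but the indices $i_s$ need not be consecutive integers at all. So instead I will partition the split blocks by the parity of their index $i$ and keep the larger class, which has at least $m/2$ blocks. Within one parity class, any two distinct indices differ by at least $2$.

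For blocks $K_i$ and $K_k$ with $|i-k|\geq2$, the description of $A$ gives no edges between $K_i$ and $K_k$. Hence the submatrix of $A[P_1,P_2]$ with rows $\{u_s\}$ and columns $\{w_s\}$, over the kept indices $s$, has $(s,s')$ entry $1$ if $s=s'$ and $0$ otherwise. This holds because $u_s$ and $w_{s'}$ lie in different, non-adjacent blocks when $s\ne s'$. The submatrix is therefore the identity of size at least $\lceil m/2\rceil$, and monotonicity of $\gamma_G(\cdot,\cdot)$ in both arguments gives $\gamma_G(P_1,P_2)\geq m/2$.

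I do not expect a real obstacle. The only point needing care is the selection step. Taking alternate blocks among the split ones does not by itself guarantee non-adjacency of the blocks. Splitting by the parity of the block index does guarantee it, and that parity split is what yields the bound $m/2$ with no further loss.
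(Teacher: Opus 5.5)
Your proof is correct and is essentially the paper's: you take one vertex on each side in each of at least $\lceil m/2\rceil$ split blocks whose indices pairwise differ by at least two, which gives an identity submatrix of $A[P_1,P_2]$. Your stated reason for abandoning the alternate choice $K_{i_1},K_{i_3},\ldots$, which is exactly what the paper uses, is mistaken. Because $i_s<i_{s+1}<i_{s+2}$ are distinct integers, $i_{s+2}\geq i_s+2$ always holds, so alternate selection already guarantees non-adjacency, and your parity split is a valid but unnecessary substitute.
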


\begin{proof}
Order the split blocks as $i_1<\cdots<i_m$ and select every other
block, starting with $K_{i_1}$.  In each selected block $K_{i_j}$, choose
$u_j\in P_1\cap K_{i_j}$ and $w_j\in P_2\cap K_{i_j}$.
The block is a clique, so $A_{u_j,w_j}=1$.  Distinct selected block
indices differ by at least two.  Edges between different blocks occur
only between consecutive blocks, so $A_{u_j,w_k}=0$ for $j\ne k$.
These rows and columns therefore form an identity submatrix of
$A[P_1,P_2]$ of size $\lceil m/2\rceil$, proving the claim.
\end{proof}

For $P\subseteq V$ and $i\in[n-1]$, define the sets of vertices whose
matching neighbors across consecutive variable blocks lie outside $P$:
\[
\begin{aligned}
    D_i^+(P)&:=\{v_{i+1,j}\in P:v_{i,j}\notin P\},\\
    D_i^-(P)&:=\{v_{i,j}\in P:v_{i+1,j}\notin P\}.
\end{aligned}
\]
\begin{lemma}
\label{lem:decreasing-variables-increase-rank}
For every partition $(P_1,P_2)$ of $V$,
\[
    \gamma_G(P_1,P_2)\geq\sum_{i=1}^{n-1}|D_i^+(P_1)|,
    \qquad
    \gamma_G(P_1,P_2)\geq\sum_{i=1}^{n-1}|D_i^-(P_1)|.
\]
\end{lemma}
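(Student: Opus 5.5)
To prove Lemma~\ref{lem:decreasing-variables-increase-rank}, I plan to exhibit, inside $A[P_1,P_2]$, a square submatrix of size $\sum_i|D_i^+(P_1)|$ that is triangular with ones on the diagonal under the lexicographic order; its determinant over $\F_2$ is then $1$. By monotonicity of $\gamma_G(X,Y)$ in both arguments, this gives the lower bound. The bound for $D_i^-$ will follow from the symmetric argument, obtained by reversing the block order $i\mapsto n+1-i$, which is an automorphism of $G_n$ and exchanges the roles of $D^+$ and $D^-$.

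For the first inequality, I take as rows the set $R:=\bigcup_{i=1}^{n-1}D_i^+(P_1)\subseteq P_1$. Each row $v_{i+1,j}\in D_i^+(P_1)$ has a partner $\phi(v_{i+1,j}):=v_{i,j}$, which lies in $P_2$ by definition. I take as columns $\phi(R)\subseteq P_2$. The map $\phi$ is injective, since a vertex $v_{i+1,j}$ lies in at most one set $D_i^+$ (the index $i$ is determined by the block) and distinct rows give distinct partners. So the submatrix $A[R,\phi(R)]$ is square, and its diagonal entries $A_{v_{i+1,j},v_{i,j}}=1$ are matching edges.

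It remains to check triangularity. Order both rows and columns by the lexicographic order of the row label $(i+1,j)$, so that column $\phi(u)$ is placed in the position of $u$. Consider a row $u=v_{i+1,j}$ and a column $\phi(u')=v_{k,\ell}$ with $u'=v_{k+1,\ell}$, and suppose $u'>_{\mathrm{lex}}u$, that is, $(k+1,\ell)>_{\mathrm{lex}}(i+1,j)$. I want to show that the entry vanishes, or at least that all entries on one side of the diagonal vanish.

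The paper's remark says $A_{v_{a,b},v_{c,d}}=0$ whenever $(c+1,d)<_{\mathrm{lex}}(a,b)$. Apply it with $v_{a,b}=\phi(u')=v_{k,\ell}$ and $v_{c,d}=u=v_{i+1,j}$, giving row index $(k,\ell)$ and column index $(i+1,j)$. Vanishing then requires $(i+2,j)<_{\mathrm{lex}}(k,\ell)$, which is not quite the hypothesis. So I will directly case-split on the adjacency formula instead:
\begin{itemize}
\item if $k=i+1$ and $\ell\ne j$, the entry is a clique edge, and both vertices lie in block $i+1$;
\item if $|k-(i+1)|=1$ and $\ell=j$, the entry is a matching edge.
\end{itemize}
For $u'\ne u$ with $(k+1,\ell)>_{\mathrm{lex}}(i+1,j)$, this will show which nonzero entries are possible.

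The main obstacle is the clique edges within a single block: row $v_{i+1,j}$ and column $v_{i+1,\ell}$ (from a row $v_{i+2,\ell}\in D_{i+1}^+$) can be adjacent. So the matrix is not triangular in the naive order, and I must choose the order carefully. My plan is to order primarily by the column block index $k$ descending. A column $v_{k,\ell}$ is adjacent only to rows in blocks $k-1$, $k$ and $k+1$, that is, to rows $v_{k,\cdot}$, $v_{k+1,\ell}$ and $v_{k-1,\ell}$. If the clique-edge interactions within a level still break triangularity, I will fall back to a sparser selection: within each level $i$, keep only the rows whose clique edges to the selected columns are harmless. This may cost a constant factor, which I would then need to reconcile with the exact claimed bound.
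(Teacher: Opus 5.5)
\textbf{There is a genuine gap: the triangularity step, as you wrote it, is not established.} Your setup matches the paper's proof: the rows $R=\bigcup_i D_i^+(P_1)$, the columns $\phi(R)$, the injective pairing $\phi(v_{i+1,j})=v_{i,j}$, and the lexicographic ordering. Reducing the $D^-$ inequality to the $D^+$ one via the reversal automorphism $i\mapsto n+1-i$ is also correct. But you conclude that the lexicographically ordered matrix is \emph{not} triangular. You then leave the bound to a fallback, a different ordering or a sparser selection, which you admit may lose a constant factor. As written, the proposal therefore does not prove the lemma.

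The error is that you only examined entries with $u'>_{\mathrm{lex}}u$, which lie above the diagonal, where nonzeros are harmless. The clique entry you found, row $v_{i+1,j}$ against column $v_{i+1,\ell}=\phi(v_{i+2,\ell})$, is one of these. Since $v_{i+1,j}<_{\mathrm{lex}}v_{i+2,\ell}$, it sits above the diagonal and does not obstruct triangularity.

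What you actually need is that every entry with $u'<_{\mathrm{lex}}u$ vanishes. This is exactly the paper's remark, applied in the other orientation. Put the row $u=v_{i+1,j}$ in the first slot and the column $v_{k,\ell}=\phi(u')$ in the second. The remark then gives $A_{v_{i+1,j},v_{k,\ell}}=0$ whenever $(k+1,\ell)<_{\mathrm{lex}}(i+1,j)$, that is, whenever $u'<_{\mathrm{lex}}u$.

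You can also see this directly from your own case split. A nonzero entry with $u'\neq u$ forces either $u'=v_{i+2,\ell}$ with $\ell\neq j$ (a clique edge) or $u'=v_{i+3,j}$ (a matching edge). Both come lexicographically after $u$.

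Hence $A[R,\phi(R)]$ is upper unitriangular in the order you already chose. Its rank over $\F_2$ is $|R|=\sum_i|D_i^+(P_1)|$, and no reordering or sparser selection is needed.
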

\begin{proof}
We prove the first inequality.  Set
\[
    X:=\bigcup_{i=1}^{n-1}D_i^+(P_1),
    \qquad
    Y:=\{v_{i,j}:v_{i+1,j}\in D_i^+(P_1),\ i\in[n-1]\}.
\]
Then $X\subseteq P_1$ and $Y\subseteq P_2$, so
$\gamma_G(P_1,P_2)\geq\gamma_G(X,Y)$.
List $X=\{v_{i_1,j_1},\ldots,v_{i_s,j_s}\}$ in lexicographic order,
and order $Y$ as $v_{i_1-1,j_1},\ldots,v_{i_s-1,j_s}$.
The diagonal entries of $A[X,Y]$ are all one.  If $\ell<k$, then
$(i_\ell,j_\ell)<_{\mathrm{lex}}(i_k,j_k)$, so
$A_{v_{i_k,j_k},v_{i_\ell-1,j_\ell}}=0$.
Thus $A[X,Y]$ is upper triangular with rank
$|X|=\sum_i|D_i^+(P_1)|$.
Reversing the order of the variable blocks exchanges the two
inequalities and proves the second.
\end{proof}

In particular, if $P_1\cap K_i=\emptyset$, then for every $j$,
\begin{equation}
    \gamma_G(P_1,P_2)\geq |P_1\cap K_j|.
    \label{eq:clique-chain-empty-block-cut}
\end{equation}
Indeed, each coordinate represented in $P_1\cap K_j$ must cross from
$P_2$ into $P_1$ somewhere between blocks $i$ and $j$.
These give distinct vertices in the sets $D_h^+(P_1)$ when $i<j$,
or in $D_h^-(P_1)$ when $i>j$; apply
Lemma~\ref{lem:decreasing-variables-increase-rank}.  For $i=j$, the
right-hand side is zero.

\begin{proposition}
\label{prop:unbounded_lrw}
For every $n\geq2$, the phase graph $G_n$ satisfies
$\lrw(G_n)\geq n/2$.
\end{proposition}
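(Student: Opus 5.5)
Fix an arbitrary linear ordering $v^{(1)},\ldots,v^{(n^2)}$ of $V$. Our goal is to exhibit one prefix $P_1$ whose cut-rank is at least $n/2$. We scan the prefixes $P^{(s)}=\{v^{(1)},\ldots,v^{(s)}\}$ and stop at the first index $s$ at which some variable block becomes completely contained in $P^{(s)}$. Say the block is $K_a$, so $K_a\subseteq P_1:=P^{(s)}$. Just before this step, at $P^{(s-1)}$, no block was full. Set $P_2:=V\setminus P_1$. Two lower bounds are available for this cut: Lemma~\ref{lem:some-but-not-all-variables} applied to the blocks that are split by $(P_1,P_2)$, and Lemma~\ref{lem:decreasing-variables-increase-rank} applied to the complementary sets.

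We classify the blocks $K_i$ with $i\ne a$ relative to $P_1$. Since $K_a$ is the first block to fill, each other block is either split, meaning it meets both sides, or disjoint from $P_1$. The exception is a single block that may have been completed at the same step; removing only the last vertex handles this, and it changes the counts by at most one. Let $m$ be the number of split blocks. By Lemma~\ref{lem:some-but-not-all-variables}, $\gamma_G(P_1)\ge m/2$.

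If many blocks are disjoint from $P_1$, we use the complement instead. Suppose $K_i\cap P_1=\emptyset$ for some $i$, that is, $K_i\subseteq P_2$. Applying \eqref{eq:clique-chain-empty-block-cut} to the partition $(P_1,P_2)$ with block $j=a$ gives $\gamma_G(P_1,P_2)\ge|P_1\cap K_a|=n$. This already exceeds $n/2$.

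So the only case left is that no block is disjoint from $P_1$. Then every block other than $K_a$ is split, apart from possibly one block completed simultaneously with $K_a$. Hence $m\ge n-2$, and Lemma~\ref{lem:some-but-not-all-variables} yields $\gamma_G(P_1)\ge (n-2)/2$. This is slightly short of the target, so the main obstacle is closing this additive gap. I would first try to sharpen Lemma~\ref{lem:some-but-not-all-variables} by adding $K_a$ to the identity submatrix. Take the prefix one step earlier, $P^{(s-1)}$: there $K_a$ is itself split, so all $n$ blocks are split at that moment. Choosing alternate blocks then gives $\lceil n/2\rceil\ge n/2$.

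It remains to check the earlier step's other case. If some block $K_i$ is disjoint from $P^{(s-1)}$, we apply \eqref{eq:clique-chain-empty-block-cut} with $j=a$, which gives at least $|P^{(s-1)}\cap K_a|=n-1\ge n/2$ for $n\ge2$. Otherwise all blocks meet $P^{(s-1)}$. None of them is full there, because of the first-fill choice, and so all $n$ are split. Using $P^{(s-1)}$ throughout in this way also removes the simultaneous-completion nuisance. Since the ordering was arbitrary, $\lrw(G_n)\ge n/2$.
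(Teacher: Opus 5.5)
Your proof is correct, but it uses a different cut from the paper. The paper takes the middle cut of the layout, $|P_1|=\lfloor n^2/2\rfloor$. If some block is unsplit, it swaps sides so that a block misses $P_1$, then uses pigeonhole to find a block $K_j$ with $|P_1\cap K_j|\geq n/2$, and finally invokes \eqref{eq:clique-chain-empty-block-cut}.

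You instead take the prefix $P^{(s-1)}$ just before the first block $K_a$ fills. This choice has two advantages. Minimality rules out any block lying entirely inside the prefix, so no swap is needed. It also supplies the heavy block for free: $|P^{(s-1)}\cap K_a|=n-1\geq n/2$, so the case where some block misses the prefix even yields $n-1$. In the remaining case all $n$ blocks are split, and Lemma~\ref{lem:some-but-not-all-variables} gives $\lceil n/2\rceil$.

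That two-case argument at $P^{(s-1)}$ is already a complete proof, so your preliminary analysis at $P^{(s)}$ can be deleted. The ``simultaneous completion'' worry there is vacuous anyway: the only vertex added at step $s$ lies in a single block, so $K_a$ is the unique full block at that step.

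As for what each route buys: the paper's balanced-cut argument is chosen because it transfers verbatim to rank decompositions (Proposition~\ref{prop:unbounded_rw}). The cost is the weaker constant $n/3$ coming from the $1/3$--$2/3$ balance. Your minimality idea transfers as well, with no loss. Root a rank decomposition at a leaf, and pick a node $t$ with the fewest leaves below it whose leaf set contains a full block $K_a$; it is internal since $|K_a|\geq2$. Take the child $u$ of $t$ holding at least $n/2$ vertices of $K_a$. The same dichotomy, applied to the cut between the leaves below $u$ and the rest, gives $\rw(G_n)\geq n/2$.
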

\begin{proof}
    Fix a linear layout and let $(P_1,P_2)$ be its middle cut, so
    $|P_1|=\lfloor n^2/2\rfloor$. If every block is split by the cut,
    Lemma~\ref{lem:some-but-not-all-variables} gives
    $\gamma_G(P_1,P_2)\geq n/2$.

    Otherwise, after exchanging $P_1$ and $P_2$ if necessary, there is a
    block $K_i$ with $P_1\cap K_i=\emptyset$.  Since
    $|P_1|\geq\lfloor n^2/2\rfloor$, some block $K_j$ satisfies
    $|P_1\cap K_j|\geq\lceil\lfloor n^2/2\rfloor/n\rceil\geq n/2$.
    Equation~\eqref{eq:clique-chain-empty-block-cut} gives
    \begin{equation*}
        \gamma_G(P_1,P_2)\geq n/2.
    \end{equation*}
    Thus every linear layout has a cut of rank at least $n/2$, proving the
    claim.
\end{proof}

For rank decompositions, we need a balanced cut. Every subcubic tree
with $N\geq3$ leaves has an edge whose two
sides contain between $N/3$ and $2N/3$ leaves.  To see this, root the tree at
a leaf and assign each node the number of descendant leaves.  Starting at
the root, move to a child of weight greater than $2N/3$ whenever one exists.
At the first node where neither child is that heavy, one child has weight at
least $N/3$; its incident edge gives the desired partition.

\begin{proposition}
\label{prop:unbounded_rw}
For every $n\geq3$, the phase graph $G_n$ satisfies
$\rw(G_n)\geq n/3$.
\end{proposition}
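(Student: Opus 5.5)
We take a rank decomposition $(T,L)$ of $G_n$ and pick a balanced edge. By the observation just before the statement, since $T$ is subcubic with $N=n^2\geq 9$ leaves, some edge $e$ induces a bipartition $(P_1,P_2)$ of $V$ with $n^2/3\leq|P_1|,|P_2|\leq 2n^2/3$. It suffices to show $\gamma_G(P_1,P_2)\geq n/3$ for every such balanced partition, since the width of $(T,L)$ is at least the cut-rank at $e$.

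We then split into two cases according to whether every variable block is split. First, suppose every block $K_1,\ldots,K_n$ meets both $P_1$ and $P_2$. Then Lemma~\ref{lem:some-but-not-all-variables} with $m=n$ gives $\gamma_G(P_1,P_2)\geq n/2\geq n/3$.

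Otherwise some block $K_i$ lies entirely in one side; after exchanging $P_1$ and $P_2$ if needed, $P_1\cap K_i=\emptyset$. Because $|P_1|\geq n^2/3$ and $P_1$ is distributed over the $n$ blocks, averaging yields a block $K_j$ with $|P_1\cap K_j|\geq n/3$. Necessarily $j\neq i$, since $K_i$ contributes nothing. Now \eqref{eq:clique-chain-empty-block-cut} gives $\gamma_G(P_1,P_2)\geq|P_1\cap K_j|\geq n/3$.

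The main point requiring care is the second case. We must check that \eqref{eq:clique-chain-empty-block-cut} applies with the roles chosen after swapping. It does, because the bound is stated for an arbitrary partition, the balance condition is symmetric, and cut-rank is symmetric ($\gamma_G(P_1,P_2)=\gamma_G(P_2,P_1)$, as the matrix is symmetric). We also need the balanced-edge lemma to apply, which requires $N=n^2\geq 3$; this holds for $n\geq 3$ (indeed for $n\geq2$). Combining the two cases, every rank decomposition has width at least $n/3$, so $\rw(G_n)\geq n/3$.
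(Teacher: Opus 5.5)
Your proof is correct and follows the paper's argument essentially step for step. You take a balanced edge giving a cut with both sides of size at least $n^2/3$, handle the all-blocks-split case with Lemma~\ref{lem:some-but-not-all-variables} (yielding $n/2$), and otherwise swap sides so that some $K_i$ misses $P_1$, then average to find $K_j$ and apply \eqref{eq:clique-chain-empty-block-cut} to get $n/3$; your remarks on cut-rank symmetry and on $j\neq i$ are harmless checks that the paper leaves implicit.
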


\begin{proof}
Fix an arbitrary rank decomposition.  By the preceding balanced-edge
argument, some edge induces a partition $(P_1,P_2)$ with
$|P_1|\geq n^2/3$ and $|P_2|\geq n^2/3$.

If every block is split by the cut,
Lemma~\ref{lem:some-but-not-all-variables} gives
$\gamma_G(P_1,P_2)\geq n/2$.
Otherwise, after exchanging $P_1$ and $P_2$ if necessary, there is a block
$K_i$ with $P_1\cap K_i=\emptyset$.  Since $|P_1|\geq n^2/3$, some block
$K_j$ satisfies $|P_1\cap K_j|\geq n/3$.
Equation~\eqref{eq:clique-chain-empty-block-cut} gives
$\gamma_G(P_1,P_2)\geq n/3$.  Thus every
rank decomposition has width at least $n/3$, proving the claim.
\end{proof}


\begin{thebibliography}{67}
\providecommand{\natexlab}[1]{#1}
\providecommand{\url}[1]{\texttt{#1}}
\expandafter\ifx\csname urlstyle\endcsname\relax
  \providecommand{\doi}[1]{doi: #1}\else
  \providecommand{\doi}{doi: \begingroup \urlstyle{rm}\Url}\fi

\bibitem[Aaronson and Gottesman(2004)]{AaronsonGottesman2004}
Scott Aaronson and Daniel Gottesman.
\newblock Improved simulation of stabilizer circuits.
\newblock \emph{Physical Review A}, 70:\penalty0 052328, 2004.
\newblock URL \url{https://doi.org/10.1103/PhysRevA.70.052328}.

\bibitem[Abdulla et~al.(2025)Abdulla, Chen, Chen, Hol{\'{\i}}k, Leng{\'{a}}l, Lin, Lo, and Tsai]{DBLP:journals/pacmpl/AbdullaCCHLLLT25}
Parosh~Aziz Abdulla, Yo{-}Ga Chen, Yu{-}Fang Chen, Luk{\'{a}}s Hol{\'{\i}}k, Ondrej Leng{\'{a}}l, Jyun{-}Ao Lin, Fang{-}Yi Lo, and Wei{-}Lun Tsai.
\newblock Verifying quantum circuits with level-synchronized tree automata.
\newblock \emph{Proc. {ACM} Program. Lang.}, 9\penalty0 ({POPL}):\penalty0 923--953, 2025.
\newblock URL \url{https://doi.org/10.1145/3704868}.

\bibitem[Abdulla et~al.(2026)Abdulla, Chen, Hecko, Hol{\'{\i}}k, Leng{\'{a}}l, Lin, and Thinniyam]{DBLP:journals/pacmpl/AbdullaCHHLLT26}
Parosh~Aziz Abdulla, Yu{-}Fang Chen, Michal Hecko, Luk{\'{a}}s Hol{\'{\i}}k, Ondrej Leng{\'{a}}l, Jyun{-}Ao Lin, and Ramanathan~S. Thinniyam.
\newblock Parameterized verification of quantum circuits.
\newblock \emph{Proc. {ACM} Program. Lang.}, 10\penalty0 ({POPL}):\penalty0 2021--2050, 2026.
\newblock URL \url{https://doi.org/10.1145/3776712}.

\bibitem[Abo~Khamis et~al.(2016)Abo~Khamis, Ngo, and Rudra]{AboKhamisNgoRudra2016}
Mahmoud Abo~Khamis, Hung~Q. Ngo, and Atri Rudra.
\newblock {FAQ}: {Q}uestions {A}sked {F}requently.
\newblock In \emph{Proceedings of the 35th {ACM SIGMOD-SIGACT-SIGAI} Symposium on Principles of Database Systems ({PODS})}, pages 13--28. Association for Computing Machinery, 2016.
\newblock URL \url{https://doi.org/10.1145/2902251.2902280}.

\bibitem[Abo~Khamis et~al.(2025)Abo~Khamis, Hu, and Suciu]{AboKhamisHuSuciu2025}
Mahmoud Abo~Khamis, Xiao Hu, and Dan Suciu.
\newblock Fast matrix multiplication meets the submodular width.
\newblock \emph{Proc. ACM Manag. Data}, 3\penalty0 (2), June 2025.
\newblock URL \url{https://doi.org/10.1145/3725235}.

\bibitem[Amy(2019)]{Amy2019}
Matthew Amy.
\newblock Towards large-scale functional verification of universal quantum circuits.
\newblock In \emph{Proceedings of the 15th International Conference on Quantum Physics and Logic ({QPL 2018})}, volume 287 of \emph{Electronic Proceedings in Theoretical Computer Science}, pages 1--21, 2019.
\newblock URL \url{https://doi.org/10.4204/EPTCS.287.1}.

\bibitem[Amy and Stinchcombe(2025)]{AmyStinchcombe2025}
Matthew Amy and Lucas~Shigeru Stinchcombe.
\newblock Polynomial-time classical simulation of hidden shift circuits via confluent rewriting of symbolic sums.
\newblock \emph{Quantum}, 9:\penalty0 1926, 2025.
\newblock URL \url{https://doi.org/10.22331/q-2025-12-02-1926}.

\bibitem[Arute et~al.(2019)Arute, Arya, Babbush, Bacon, Bardin, Barends, Biswas, Boixo, Brandao, Buell, Burkett, Chen, Chen, Chiaro, Collins, Courtney, Dunsworth, Farhi, Foxen, Fowler, Gidney, Giustina, Graff, Guerin, Habegger, Harrigan, Hartmann, Ho, Hoffmann, Huang, Humble, Isakov, Jeffrey, Jiang, Kafri, Kechedzhi, Kelly, Klimov, Knysh, Korotkov, Kostritsa, Landhuis, Lindmark, Lucero, Lyakh, Mandr{\`a}, McClean, McEwen, Megrant, Mi, Michielsen, Mohseni, Mutus, Naaman, Neeley, Neill, Niu, Ostby, Petukhov, Platt, Quintana, Rieffel, Roushan, Rubin, Sank, Satzinger, Smelyanskiy, Sung, Trevithick, Vainsencher, Villalonga, White, Yao, Yeh, Zalcman, Neven, and Martinis]{Arute2019}
Frank Arute, Kunal Arya, Ryan Babbush, Dave Bacon, Joseph~C. Bardin, Rami Barends, Rupak Biswas, Sergio Boixo, Fernando G. S.~L. Brandao, David~A. Buell, Brian Burkett, Yu~Chen, Zijun Chen, Ben Chiaro, Roberto Collins, William Courtney, Andrew Dunsworth, Edward Farhi, Brooks Foxen, Austin Fowler, Craig Gidney, Marissa Giustina, Rob Graff, Keith Guerin, Steve Habegger, Matthew~P. Harrigan, Michael~J. Hartmann, Alan Ho, Markus Hoffmann, Trent Huang, Travis~S. Humble, Sergei~V. Isakov, Evan Jeffrey, Zhang Jiang, Dvir Kafri, Kostyantyn Kechedzhi, Julian Kelly, Paul~V. Klimov, Sergey Knysh, Alexander Korotkov, Fedor Kostritsa, David Landhuis, Mike Lindmark, Erik Lucero, Dmitry Lyakh, Salvatore Mandr{\`a}, Jarrod~R. McClean, Matthew McEwen, Anthony Megrant, Xiao Mi, Kristel Michielsen, Masoud Mohseni, Josh Mutus, Ofer Naaman, Matthew Neeley, Charles Neill, Murphy~Yuezhen Niu, Eric Ostby, Andre Petukhov, John~C. Platt, Chris Quintana, Eleanor~G. Rieffel, Pedram Roushan, Nicholas~C. Rubin, Daniel Sank, Kevin~J.
  Satzinger, Vadim Smelyanskiy, Kevin~J. Sung, Matthew~D. Trevithick, Amit Vainsencher, Benjamin Villalonga, Theodore White, Z.~Jamie Yao, Ping Yeh, Adam Zalcman, Hartmut Neven, and John~M. Martinis.
\newblock Quantum supremacy using a programmable superconducting processor.
\newblock \emph{Nature}, 574\penalty0 (7779):\penalty0 505--510, 2019.
\newblock URL \url{https://doi.org/10.1038/s41586-019-1666-5}.

\bibitem[Atserias et~al.(2013)Atserias, Grohe, and Marx]{AtseriasGroheMarx2013}
Albert Atserias, Martin Grohe, and D{\'a}niel Marx.
\newblock Size bounds and query plans for relational joins.
\newblock \emph{SIAM Journal on Computing}, 42\penalty0 (4):\penalty0 1737--1767, 2013.
\newblock URL \url{https://doi.org/10.1137/110859440}.

\bibitem[Bagan et~al.(2007)Bagan, Durand, and Grandjean]{Bagan+2007}
Guillaume Bagan, Arnaud Durand, and Etienne Grandjean.
\newblock On acyclic conjunctive queries and constant delay enumeration.
\newblock In Jacques Duparc and Thomas~A. Henzinger, editors, \emph{Computer Science Logic}, pages 208--222, Berlin, Heidelberg, 2007. Springer Berlin Heidelberg.
\newblock URL \url{https://doi.org/10.1007/978-3-540-74915-8_18}.

\bibitem[Beaudrap and Herbert(2022)]{Beaudrap2022faststabiliser}
Niel~de Beaudrap and Steven Herbert.
\newblock Fast {S}tabiliser {S}imulation with {Q}uadratic {F}orm {E}xpansions.
\newblock \emph{Quantum}, 6:\penalty0 803, September 2022.
\newblock URL \url{https://doi.org/10.22331/q-2022-09-15-803}.

\bibitem[Boixo et~al.(2017)Boixo, Isakov, Smelyanskiy, and Neven]{BoixoEtAl2017}
Sergio Boixo, Sergei~V. Isakov, Vadim~N. Smelyanskiy, and Hartmut Neven.
\newblock Simulation of low-depth quantum circuits as complex undirected graphical models.
\newblock arXiv preprint arXiv:1712.05384, 2017.
\newblock URL \url{https://arxiv.org/abs/1712.05384}.

\bibitem[{Boixo} et~al.(2018){Boixo}, {Isakov}, {Smelyanskiy}, {Babbush}, {Ding}, {Jiang}, {Bremner}, {Martinis}, and {Neven}]{Boixto2018}
Sergio {Boixo}, Sergei~V. {Isakov}, Vadim~N. {Smelyanskiy}, Ryan {Babbush}, Nan {Ding}, Zhang {Jiang}, Michael~J. {Bremner}, John~M. {Martinis}, and Hartmut {Neven}.
\newblock {Characterizing quantum supremacy in near-term devices}.
\newblock \emph{Nature Physics}, 14\penalty0 (6):\penalty0 595--600, June 2018.
\newblock URL \url{https://doi.org/10.1038/s41567-018-0124-x}.

\bibitem[Bravyi and Gosset(2016)]{BravyiGosset2016}
Sergey Bravyi and David Gosset.
\newblock Improved classical simulation of quantum circuits dominated by {Clifford} gates.
\newblock \emph{Physical Review Letters}, 116\penalty0 (25):\penalty0 250501, 2016.
\newblock URL \url{https://doi.org/10.1103/PhysRevLett.116.250501}.

\bibitem[Bravyi et~al.(2019)Bravyi, Browne, Calpin, Campbell, Gosset, and Howard]{BravyiEtAl2019}
Sergey Bravyi, Dan Browne, Padraic Calpin, Earl Campbell, David Gosset, and Mark Howard.
\newblock Simulation of quantum circuits by low-rank stabilizer decompositions.
\newblock \emph{Quantum}, 3:\penalty0 181, 2019.
\newblock URL \url{https://doi.org/10.22331/q-2019-09-02-181}.

\bibitem[Bu and Koh(2022)]{BuKoh2022}
Kaifeng Bu and Dax~Enshan Koh.
\newblock Classical simulation of quantum circuits by half {Gauss} sums.
\newblock \emph{Communications in Mathematical Physics}, 390:\penalty0 471--500, 2022.
\newblock URL \url{https://doi.org/10.1007/s00220-022-04320-1}.

\bibitem[Ceruti et~al.(2025)Ceruti, Kressner, and Sulz]{CerutiKressnerSulz25}
Gianluca Ceruti, Daniel Kressner, and Dominik Sulz.
\newblock Low-rank tree tensor network operators for long-range pairwise interactions.
\newblock \emph{SIAM J. Sci. Comput.}, 47\penalty0 (4):\penalty0 A2248--A2271, 2025.
\newblock URL \url{https://doi.org/10.1137/24M1661996}.

\bibitem[Chen et~al.(2025)Chen, Chung, Leng{\'{a}}l, Lin, Tsai, and Yen]{DBLP:journals/cacm/ChenCLLTY25}
Yu{-}Fang Chen, Kai{-}Min Chung, Ondrej Leng{\'{a}}l, Jyun{-}Ao Lin, Wei{-}Lun Tsai, and Di{-}De Yen.
\newblock An automata-based framework for verification and bug hunting in quantum circuits.
\newblock \emph{Commun. {ACM}}, 68\penalty0 (6):\penalty0 85--93, 2025.
\newblock URL \url{https://doi.org/10.1145/3725728}.

\bibitem[Cheng et~al.(2025)Cheng, Wang, Deng, Chen, and Ji]{Cheng+2025}
Bin Cheng, Ziyuan Wang, Ruixuan Deng, Jianxin Chen, and Zhengfeng Ji.
\newblock Breaking the treewidth barrier in quantum circuit simulation with decision diagrams.
\newblock arXiv preprint arXiv:2510.06775, 2025.
\newblock URL \url{https://arxiv.org/abs/2510.06775}.

\bibitem[Codsi and Laakkonen(2026)]{CodsiLaakkonen2026}
Julien Codsi and Tuomas Laakkonen.
\newblock Unifying graph measures and stabilizer decompositions for the classical simulation of quantum circuits.
\newblock arXiv preprint arXiv:2603.06377, 2026.
\newblock URL \url{https://arxiv.org/abs/2603.06377}.

\bibitem[Cygan et~al.(2015)Cygan, Fomin, Kowalik, Lokshtanov, Marx, Pilipczuk, Pilipczuk, and Saurabh]{DBLP:books/sp/CyganFKLMPPS15}
Marek Cygan, Fedor~V. Fomin, Lukasz Kowalik, Daniel Lokshtanov, D{\'{a}}niel Marx, Marcin Pilipczuk, Michal Pilipczuk, and Saket Saurabh.
\newblock \emph{Parameterized Algorithms}.
\newblock Springer, 2015.
\newblock URL \url{https://doi.org/10.1007/978-3-319-21275-3}.

\bibitem[Dawson et~al.(2005)Dawson, Hines, Mortimer, Haselgrove, Nielsen, and Osborne]{DawsonEtAl2005}
Christopher~M. Dawson, Andrew~P. Hines, Duncan Mortimer, Henry~L. Haselgrove, Michael~A. Nielsen, and Tobias~J. Osborne.
\newblock Quantum computing and polynomial equations over the finite field {$\mathbb{Z}_2$}.
\newblock \emph{Quantum Information and Computation}, 5\penalty0 (2):\penalty0 102--112, 2005.
\newblock URL \url{https://dl.acm.org/doi/abs/10.5555/2011626.2011628}.

\bibitem[de~Colnet et~al.(2026)de~Colnet, Geerts, Hai, Laarman, Lee, and P{\'e}rez]{deColnet+2026}
Alexis de~Colnet, Floris Geerts, Rihan Hai, Alfons Laarman, Joon~Hyung Lee, and Guillermo~A. P{\'e}rez.
\newblock Quadratic sums-of-powers for fixed-parameter tractable quantum-circuit simulation.
\newblock arXiv preprint arXiv:2605.29944, 2026.
\newblock URL \url{https://arxiv.org/abs/2605.29944v2}.

\bibitem[Dechter(1999)]{Dechter1999}
Rina Dechter.
\newblock Bucket elimination: A unifying framework for reasoning.
\newblock \emph{Artificial Intelligence}, 113\penalty0 (1--2):\penalty0 41--85, 1999.
\newblock URL \url{https://doi.org/10.1016/S0004-3702(99)00059-4}.

\bibitem[Dehaene and De~Moor(2003)]{DehaeneDeMoor2003}
Jeroen Dehaene and Bart De~Moor.
\newblock {Clifford} group, stabilizer states, and linear and quadratic operations over {GF(2)}.
\newblock \emph{Physical Review A}, 68:\penalty0 042318, 2003.
\newblock URL \url{https://doi.org/10.1103/PhysRevA.68.042318}.

\bibitem[Dudek et~al.(2020)Dudek, Dueñas-Osorio, and Vardi]{Dudek+2020}
Jeffrey~M. Dudek, Leonardo Dueñas-Osorio, and Moshe~Y. Vardi.
\newblock Efficient contraction of large tensor networks for weighted model counting through graph decompositions.
\newblock arXiv preprint arXiv:1908.04381, 2020.
\newblock URL \url{https://arxiv.org/abs/1908.04381}.

\bibitem[Geerts and Hai(2025)]{Geerts+2025}
Floris Geerts and Rihan Hai.
\newblock {QC} meet {CQ}: Quantum conjunctive queries.
\newblock In \emph{Proceedings of the 2nd Workshop on Quantum Computing and Quantum-Inspired Technology for Data-Intensive Systems and Applications}, Q-Data '25, page~25. Association for Computing Machinery, 2025.
\newblock URL \url{https://doi.org/10.1145/3736393.3736696}.

\bibitem[Gottesman(1998)]{Gottesman1998}
Daniel Gottesman.
\newblock Theory of fault-tolerant quantum computation.
\newblock \emph{Physical Review A}, 57\penalty0 (1):\penalty0 127--137, January 1998.
\newblock URL \url{https://link.aps.org/doi/10.1103/PhysRevA.57.127}.

\bibitem[Gray and Kourtis(2021)]{GrayKourtis2021}
Johnnie Gray and Stefanos Kourtis.
\newblock Hyper-optimized tensor network contraction.
\newblock \emph{Quantum}, 5:\penalty0 410, 2021.
\newblock URL \url{https://doi.org/10.22331/q-2021-03-15-410}.

\bibitem[Green et~al.(2007)Green, Karvounarakis, and Tannen]{GreenKarvounarakisTannen2007}
Todd~J. Green, Grigoris Karvounarakis, and Val Tannen.
\newblock Provenance semirings.
\newblock In \emph{Proceedings of the 26th {ACM SIGMOD-SIGACT-SIGART} Symposium on Principles of Database Systems ({PODS})}, pages 31--40. Association for Computing Machinery, 2007.
\newblock URL \url{https://doi.org/10.1145/1265530.1265535}.

\bibitem[Grohe and Marx(2014)]{DBLP:journals/talg/GroheM14}
Martin Grohe and D{\'{a}}niel Marx.
\newblock Constraint solving via fractional edge covers.
\newblock \emph{{ACM} Trans. Algorithms}, 11\penalty0 (1):\penalty0 4:1--4:20, 2014.
\newblock URL \url{https://doi.org/10.1145/2636918}.

\bibitem[Guo et~al.(2019)Guo, Liu, Xiong, Xue, Fu, Huang, Qiang, Xu, Liu, Zheng, et~al.]{guo2019general}
Chu Guo, Yong Liu, Min Xiong, Shichuan Xue, Xiang Fu, Anqi Huang, Xiaogang Qiang, Ping Xu, Junhua Liu, Shenggen Zheng, et~al.
\newblock General-purpose quantum circuit simulator with projected entangled-pair states and the quantum supremacy frontier.
\newblock \emph{Physical review letters}, 123\penalty0 (19):\penalty0 190501, 2019.
\newblock URL \url{https://doi.org/10.1103/PhysRevLett.123.190501}.

\bibitem[Hai et~al.(2025)Hai, Hung, Coopmans, Littau, and Geerts]{HaiEtAl2025QuantumDataManagement}
Rihan Hai, Shih-Han Hung, Tim Coopmans, Tim Littau, and Floris Geerts.
\newblock Quantum data management in the {NISQ} era.
\newblock \emph{Proceedings of the VLDB Endowment}, 18\penalty0 (6):\penalty0 1720--1729, 2025.
\newblock URL \url{https://doi.org/10.14778/3725688.3725701}.

\bibitem[Harrow and Montanaro(2017)]{HarrowMontalee2017}
Aram~W. Harrow and Ashley Montanaro.
\newblock Quantum computational supremacy.
\newblock \emph{Nature}, 549\penalty0 (7671):\penalty0 203--209, 2017.
\newblock URL \url{https://doi.org/10.1038/nature23458}.

\bibitem[Ilinescu et~al.(2026)Ilinescu, Patwardhan, and Hai]{IlinescuEtAl2026InferQ}
Andrei Ilinescu, Aadi Patwardhan, and Rihan Hai.
\newblock {InferQ}: A database-oriented benchmark for quantum circuits simulation.
\newblock \emph{Proceedings of the ACM on Management of Data}, 4\penalty0 (4):\penalty0 1--26, 2026.
\newblock URL \url{https://doi.org/10.1145/3837116}.

\bibitem[Kissinger et~al.(2022)Kissinger, van~de Wetering, and Vilmart]{KissingerEtAl2022}
Aleks Kissinger, John van~de Wetering, and Renaud Vilmart.
\newblock Classical simulation of quantum circuits with partial and graphical stabiliser decompositions.
\newblock In \emph{Proceedings of the 17th Conference on the Theory of Quantum Computation, Communication and Cryptography ({TQC})}, volume 232 of \emph{Leibniz International Proceedings in Informatics (LIPIcs)}, pages 5:1--5:13. Schloss Dagstuhl -- Leibniz-Zentrum f{\"u}r Informatik, 2022.
\newblock URL \url{https://doi.org/10.4230/LIPIcs.TQC.2022.5}.

\bibitem[Kschischang et~al.(2001)Kschischang, Frey, and Loeliger]{Kschischang+2001}
Frank~R. Kschischang, Brendan~J. Frey, and Hans-Andrea Loeliger.
\newblock Factor graphs and the sum-product algorithm.
\newblock \emph{IEEE Transactions on Information Theory}, 47\penalty0 (2):\penalty0 498--519, 2001.
\newblock URL \url{https://doi.org/10.1109/18.910572}.

\bibitem[Kuyanov and Kissinger(2026)]{KuyanovKissinger2026}
Fedor Kuyanov and Aleks Kissinger.
\newblock Efficient classical simulation of low-rank-width quantum circuits using {ZX}-calculus.
\newblock arXiv preprint arXiv:2603.06764, 2026.
\newblock URL \url{https://arxiv.org/abs/2603.06764}.

\bibitem[Lanzinger(2022)]{DBLP:conf/pods/Lanzinger22}
Matthias Lanzinger.
\newblock The complexity of conjunctive queries with degree 2.
\newblock In \emph{Proceedings of the 41st ACM SIGMOD-SIGACT-SIGAI Symposium on Principles of Database Systems ({PODS})}, pages 91--102. Association for Computing Machinery, 2022.
\newblock URL \url{https://doi.org/10.1145/3517804.3524152}.

\bibitem[Lee et~al.(2025)Lee, Choi, Ohm, and Lee]{lee2025scalable}
Sung-Bin~B Lee, Hee~Ryang Choi, Daniel~Donghyon Ohm, and Seung-Sup~B Lee.
\newblock Scalable projected entangled-pair state representation of random quantum circuit states.
\newblock \emph{Physical Review Research}, 7\penalty0 (3):\penalty0 033252, 2025.
\newblock URL \url{https://doi.org/10.1103/rzgm-cywf}.

\bibitem[Littau and Hai(2025)]{LittauHai2025Qymera}
Tim Littau and Rihan Hai.
\newblock {Qymera}: Simulating quantum circuits using {RDBMS}.
\newblock In \emph{Companion of the 2025 International Conference on Management of Data}, SIGMOD-Companion '25, pages 179--182. Association for Computing Machinery, 2025.
\newblock URL \url{https://doi.org/10.1145/3722212.3725126}.

\bibitem[Littau and Hai(2026)]{LittauHai2026QuWARP}
Tim Littau and Rihan Hai.
\newblock {QuWARP}: A workload-aware reuse planner for simulating quantum circuits.
\newblock arXiv preprint arXiv:2609.23664, 2026.
\newblock URL \url{https://arxiv.org/abs/2609.23664}.
\newblock Accepted at IEEE QCE 2026.

\bibitem[Markov and Shi(2008)]{MarkovShi2008}
Igor~L. Markov and Yaoyun Shi.
\newblock Simulating quantum computation by contracting tensor networks.
\newblock \emph{SIAM Journal on Computing}, 38\penalty0 (3):\penalty0 963--981, 2008.
\newblock URL \url{https://doi.org/10.1137/050644756}.

\bibitem[Mei et~al.(2024{\natexlab{a}})Mei, Bonsangue, and Laarman]{MeiEtAl2024}
Jingyi Mei, Marcello Bonsangue, and Alfons Laarman.
\newblock Simulating quantum circuits by model counting.
\newblock In \emph{Computer Aided Verification ({CAV})}, volume 14683 of \emph{Lecture Notes in Computer Science}, pages 555--578. Springer, 2024{\natexlab{a}}.
\newblock URL \url{https://doi.org/10.1007/978-3-031-65633-0_25}.

\bibitem[Mei et~al.(2024{\natexlab{b}})Mei, Coopmans, Bonsangue, and Laarman]{DBLP:conf/ijcar/MeiCBL24}
Jingyi Mei, Tim Coopmans, Marcello~M. Bonsangue, and Alfons Laarman.
\newblock Equivalence checking of quantum circuits by model counting.
\newblock In Christoph Benzm{\"{u}}ller, Marijn J.~H. Heule, and Renate~A. Schmidt, editors, \emph{Automated Reasoning - 12th International Joint Conference, {IJCAR} 2024, Nancy, France, July 3-6, 2024, Proceedings, Part {II}}, volume 14740 of \emph{Lecture Notes in Computer Science}, pages 401--421. Springer, 2024{\natexlab{b}}.
\newblock URL \url{https://doi.org/10.1007/978-3-031-63501-4_21}.

\bibitem[Ngo et~al.(2018)Ngo, Porat, R{\'e}, and Rudra]{NgoPoratReRudra2018}
Hung~Q. Ngo, Ely Porat, Christopher R{\'e}, and Atri Rudra.
\newblock Worst-case optimal join algorithms.
\newblock \emph{Journal of the ACM}, 65\penalty0 (3):\penalty0 16:1--16:40, 2018.
\newblock URL \url{https://doi.org/10.1145/3180143}.

\bibitem[Oum(2008)]{oum2008rank}
Sang-il Oum.
\newblock Rank-width is less than or equal to branch-width.
\newblock \emph{Journal of Graph Theory}, 57\penalty0 (3):\penalty0 239--244, 2008.
\newblock URL \url{https://doi.org/10.1002/jgt.20280}.

\bibitem[Oum and Seymour(2006)]{OumSeymour2006}
Sang{-}il Oum and Paul~D. Seymour.
\newblock Approximating clique-width and branch-width.
\newblock \emph{Journal of Combinatorial Theory, Series B}, 96\penalty0 (4):\penalty0 514--528, 2006.
\newblock URL \url{https://doi.org/10.1016/j.jctb.2005.10.006}.

\bibitem[Pan et~al.(2022)Pan, Chen, and Zhang]{PanChenZhang22}
Feng Pan, Keyang Chen, and Pan Zhang.
\newblock Solving the sampling problem of the {Sycamore} quantum circuits.
\newblock \emph{Physical Review Letters}, 129\penalty0 (9):\penalty0 090502, 2022.
\newblock URL \url{https://doi.org/10.1103/PhysRevLett.129.090502}.

\bibitem[Pang and Solomonik(2025)]{pang2025}
Yuchen Pang and Edgar Solomonik.
\newblock Simulating {C}lifford circuits with {Gaussian} elimination.
\newblock arXiv preprint arXiv:2511.06127, 2025.
\newblock URL \url{https://arxiv.org/abs/2511.06127}.

\bibitem[Quist et~al.(2026)Quist, Farreras, de~Colnet, van~de Wetering, and Laarman]{Quist2026}
Arend-Jan Quist, Marc Farreras, Alexis de~Colnet, John van~de Wetering, and Alfons Laarman.
\newblock From tensor networks to tractable circuits, and back.
\newblock In \emph{Proceedings of the 23rd International Conference on Principles of Knowledge Representation and Reasoning}, KR '26, pages 546--557, 2026.
\newblock URL \url{https://doi.org/10.24963/kr.2026/52}.

\bibitem[Robertson and Seymour(1986)]{DBLP:journals/jal/RobertsonS86}
Neil Robertson and Paul~D. Seymour.
\newblock Graph minors. {II.} algorithmic aspects of tree-width.
\newblock \emph{J. Algorithms}, 7\penalty0 (3):\penalty0 309--322, 1986.
\newblock URL \url{https://doi.org/10.1016/0196-6774(86)90023-4}.

\bibitem[Robeva and Seigal(2019)]{Robeva+2019}
Elina Robeva and Anna Seigal.
\newblock Duality of graphical models and tensor networks.
\newblock \emph{Information and Inference: A Journal of the IMA}, 8\penalty0 (2):\penalty0 273--288, June 2019.
\newblock URL \url{https://doi.org/10.1093/imaiai/iay009}.

\bibitem[Seitz et~al.(2023)Seitz, Medina, Cruz, Huang, and Mendl]{seitz2023simulating}
Philipp Seitz, Ismael Medina, Esther Cruz, Qunsheng Huang, and Christian~B Mendl.
\newblock Simulating quantum circuits using tree tensor networks.
\newblock \emph{Quantum}, 7:\penalty0 964, 2023.
\newblock URL \url{https://doi.org/10.22331/q-2023-03-30-964}.

\bibitem[Shchesnovich(2021)]{Shchesnovic2021}
Valery Shchesnovich.
\newblock Distinguishing noisy boson sampling from classical simulations.
\newblock \emph{Quantum}, 5:\penalty0 423, 2021.
\newblock URL \url{https://doi.org/10.22331/q-2021-03-29-423}.

\bibitem[Shi et~al.(2006)Shi, Duan, and Vidal]{shi2006classical}
Yaoyun Shi, Luming Duan, and Guifre Vidal.
\newblock Classical simulation of quantum many-body systems with a tree tensor network.
\newblock \emph{Physical Review A—Atomic, Molecular, and Optical Physics}, 74\penalty0 (2):\penalty0 022320, 2006.
\newblock URL \url{https://doi.org/10.1103/PhysRevA.74.022320}.

\bibitem[Singh(2025)]{Singh2025}
Harbaksh Singh.
\newblock Future directions and challenges, quantum supremacy, and beyond.
\newblock In Mohammad Hammoudeh, Clinton~M. Firth, Harbaksh Singh, Christoph Capellaro, and Mohamed Al~Kuwaiti, editors, \emph{Quantum Technology Applications, Impact, and Future Challenges}, pages 141--162. CRC Press, 2025.
\newblock URL \url{https://doi.org/10.1201/9781003537243-9}.

\bibitem[Van~den Nest(2010)]{VandenNest10}
Maarten Van~den Nest.
\newblock Classical simulation of quantum computation, the {G}ottesman--{K}nill theorem, and slightly beyond.
\newblock \emph{Quantum Information and Computation}, 10\penalty0 (3\&4):\penalty0 258--271, 2010.
\newblock URL \url{https://doi.org/10.26421/QIC10.3-4-6}.

\bibitem[Van~den Nest et~al.(2007)Van~den Nest, D{\"u}r, Vidal, and Briegel]{VanDenNestEtAl2007}
Maarten Van~den Nest, Wolfgang D{\"u}r, Guifr{\'e} Vidal, and Hans~J. Briegel.
\newblock Classical simulation versus universality in measurement-based quantum computation.
\newblock \emph{Physical Review A}, 75\penalty0 (1):\penalty0 012337, 2007.
\newblock URL \url{https://doi.org/10.1103/PhysRevA.75.012337}.

\bibitem[Vidal(2003)]{Vidal2003}
Guifr{\'e} Vidal.
\newblock Efficient classical simulation of slightly entangled quantum computations.
\newblock \emph{Physical Review Letters}, 91:\penalty0 147902, 2003.
\newblock URL \url{https://doi.org/10.1103/PhysRevLett.91.147902}.

\bibitem[Villalonga et~al.(2019)Villalonga, Boixo, Nelson, Henze, Rieffel, Biswas, and Mandr{\`a}]{villalonga2019flexible}
Benjamin Villalonga, Sergio Boixo, Bron Nelson, Christopher Henze, Eleanor Rieffel, Rupak Biswas, and Salvatore Mandr{\`a}.
\newblock A flexible high-performance simulator for verifying and benchmarking quantum circuits implemented on real hardware.
\newblock \emph{npj Quantum Information}, 5\penalty0 (1):\penalty0 86, 2019.
\newblock URL \url{https://doi.org/10.1038/s41534-019-0196-1}.

\bibitem[Vincent et~al.(2022)Vincent, O'Riordan, Andrenkov, Brown, Killoran, Qi, and Dhand]{VincentEtAl2022}
Trevor Vincent, Lee~J. O'Riordan, Mikhail Andrenkov, Jack Brown, Nathan Killoran, Haoyu Qi, and Ish Dhand.
\newblock {Jet}: Fast quantum circuit simulations with parallel task-based tensor-network contraction.
\newblock \emph{Quantum}, 6:\penalty0 709, 2022.
\newblock URL \url{https://doi.org/10.22331/q-2022-05-09-709}.

\bibitem[Vinkhuijzen et~al.(2023)Vinkhuijzen, Coopmans, Elkouss, Dunjko, and Laarman]{VinkhuijzenEtAl2023}
Lieuwe Vinkhuijzen, Tim Coopmans, David Elkouss, Vedran Dunjko, and Alfons Laarman.
\newblock {LIMDD}: A decision diagram for simulation of quantum computing including stabilizer states.
\newblock \emph{Quantum}, 7:\penalty0 1108, 2023.
\newblock URL \url{https://doi.org/10.22331/q-2023-09-11-1108}.

\bibitem[Vinkhuijzen et~al.(2026)Vinkhuijzen, Coopmans, and Laarman]{Vinkhuijzen2026}
Lieuwe Vinkhuijzen, Tim Coopmans, and Alfons Laarman.
\newblock A knowledge compilation map for quantum information.
\newblock \emph{Proceedings of the AAAI Conference on Artificial Intelligence}, 40\penalty0 (23):\penalty0 19406--19414, 2026.
\newblock URL \url{https://doi.org/10.1609/aaai.v40i23.39018}.

\bibitem[Wang et~al.(2025)Wang, Cheng, Yuan, and Ji]{WangEtAl2025}
Ziyuan Wang, Bin Cheng, Longxiang Yuan, and Zhengfeng Ji.
\newblock {FeynmanDD}: Quantum circuit analysis with classical decision diagrams.
\newblock In \emph{Computer Aided Verification ({CAV})}, volume 15934 of \emph{Lecture Notes in Computer Science}, pages 28--52. Springer, 2025.
\newblock URL \url{https://doi.org/10.1007/978-3-031-98685-7_2}.

\bibitem[Ye and Lim(2018)]{YeLim18}
Ke~Ye and Lek-Heng Lim.
\newblock Tensor network ranks.
\newblock arXiv preprint arXiv:1801.02662, 2018.
\newblock URL \url{https://arxiv.org/abs/1801.02662}.

\bibitem[Zulehner and Wille(2019)]{ZulehnerWille2018}
Alwin Zulehner and Robert Wille.
\newblock Advanced simulation of quantum computations.
\newblock \emph{IEEE Transactions on Computer-Aided Design of Integrated Circuits and Systems}, 38\penalty0 (5):\penalty0 848--859, 2019.
\newblock URL \url{https://doi.org/10.1109/TCAD.2018.2834427}.

\end{thebibliography}
\end{document}